\newif\ifarxiv\arxivtrue
\title{Fair Influence Maximization: A Welfare Optimization Approach} 
\author{
    Aida Rahmattalabi$^{\rm 1\star}$, Shahin Jabbari$^{\rm 2\star}$, Himabindu Lakkaraju$^{\rm 2}$, Phebe Vayanos$^{\rm 1}$\\ 
    Max Izenberg$^{\rm 3}$, Ryan Brown$^{\rm 3}$, Eric Rice$^{\rm 1}$, Milind Tambe$^{\rm 2}$
}
\newcommand{\xmark}{\ding{55}}%
\newtheorem{lemma}{Lemma}
\newtheorem{proposition}{Proposition}
\newtheorem{corollary}{Corollary}
\newtheorem{remark}{Remark}
\def\reals{\mathbb{R}} 
\newcommand{\G}{\mathcal G}
\newcommand{\E}{\mathcal E}
\newcommand{\V}{\mathcal V}
\renewcommand{\k}{K}
\newcommand{\p}{p}
\renewcommand{\c}{c} 
\newcommand{\C}{\mathcal C} 
\newcommand{\NC}{C} 
\newcommand{\A}{\mathcal A} 
\newcommand{\bu}{\bm u} 
\renewcommand{\u}{u} 
\newcommand{\uc}{\u_\c} 
\newcommand{\wel}{W} 
\newcommand{\n}{N} 
\begin{document}	
\maketitle
\thispagestyle{plain}
\pagestyle{plain}
\begin{abstract}
Several behavioral, social, and public health interventions, such as suicide/HIV prevention or community preparedness against natural disasters, leverage social network information to maximize outreach. Algorithmic influence maximization techniques have been proposed to aid with the choice of ``peer leaders'' or ``influencers''
in such interventions. Yet, traditional algorithms for influence maximization have not been designed with 
these interventions in mind. As a result, they may disproportionately exclude minority communities from the benefits of the intervention. This has motivated research on \emph{fair influence maximization}. Existing techniques come with two major drawbacks. First, they require committing to a single fairness measure. Second, these measures are typically imposed as strict constraints leading to undesirable properties such as wastage of resources.
To address these shortcomings, we provide a principled characterization of the properties that a fair influence maximization algorithm should satisfy. In particular, we propose a framework based on social welfare theory, wherein the cardinal utilities derived by each community are aggregated using the isoelastic social welfare functions. Under this framework, the trade-off between fairness and efficiency can be controlled by a single \emph{inequality aversion} design parameter. We then show under what circumstances our proposed principles can be satisfied by a welfare function. The resulting optimization problem is monotone and submodular and can be solved efficiently with optimality guarantees. Our framework encompasses as special cases leximin and proportional fairness. Extensive experiments on synthetic and real world datasets including a case study on landslide risk management demonstrate the efficacy of the proposed framework.
\end{abstract}

\section{Introduction}
\label{sec:intro}
The success of many behavioral, social, and public health interventions relies heavily on effectively leveraging social 
networks~\cite{isaac2009gatekeeper,valente2007peer,TsangWRTZ19-groupfairness}. 
For instance, health interventions 
such as 
suicide/HIV prevention~\cite{gatekeeperyonemoto2019} and community preparedness against natural disasters involve finding a small set of 
well-connected individuals who can act as peer-leaders to detect warning signals 
(suicide prevention) or disseminate relevant information (HIV prevention or landslide risk management). 
The influence maximization framework has been employed to find such individuals~\cite{wilder2020clinical}. However,
such interventions 
may lead to discriminatory solutions
as individuals from racial minorities or LGBTQ communities may be disproportionately excluded from the benefits of the intervention~\cite{ RahmattalabiVFRWYT19,TsangWRTZ19-groupfairness}.

Recent work has incorporated fairness directly into influence maximization by proposing various notions of fairness such as maximin fairness~\cite{RahmattalabiVFRWYT19} and diversity constraints~\cite{TsangWRTZ19-groupfairness}.
Maximin fairness aims at improving the minimum amount of influence that any community receives. On the other hand, diversity 
constraints, inspired by the game theory literature, ensure that each community is at least as well-off had they received their 
share of resources proportional to their size and allocated them internally. Each of these notions offers a unique perspective on fairness. However, they also come with drawbacks. For example maximin fairness can result in significant degradation in total influence due to its stringent requirement to help the worst-off group as much as possible, where in reality it may be hard to spread the influence to some communities due to their sparse connections. On the other hand, while the diversity constraints aim at taking the community's ability in spreading influence into account, it does not explicitly account for reducing inequality (i.e., does not exhibit \emph{inequality aversion}). Consequently, there is no universal agreement on 
what fairness means and in fact, it is widely known that fairness is domain dependent~\cite{Narayanan18}. For example, excluding vulnerable communities from suicide prevention might have higher negative consequences compared to interventions promoting a healthier lifestyle.

Building on cardinal social welfare theory from the economics literature and principles of social welfare, we propose a principled characterization of the properties of social influence maximization solutions. In particular, we propose a framework for fair influence maximization based on social welfare theory, wherein the cardinal utilities derived by each community are aggregated using the isoelastic social welfare functions~\cite{bergson1938reformulation}. Isoelastic functions are in the general form of $ \u^{\alpha}/\alpha, \alpha < 1, \alpha \neq 0$ and $\log\u, \alpha = 0$ where $\alpha$ is a constant and controls the aversion to inequality and $\u$ is the utility value. They are used to measure the \emph{goodness} or \emph{desirability} 
of a utility distribution. However, due to the structural dependencies induced by the underlying social network, i.e., between-community and within-community edges, social welfare principles cannot be directly applied to our problem. Our contributions are as follows:
\emph{(i)} We extend the cardinal social welfare principles including the \emph{transfer principle} to the influence maximization framework, which is otherwise not applicable. We also propose a new principle which we call 
\emph{utility gap reduction}. 
This principle aims to avoid situations where high 
aversion to inequality leads to even more utility gap, caused by between-community influence spread; \emph{(ii)} We generalize the theory regarding these principles and show that for all problem instances, there does not exist a welfare function that satisfies all principles. Nevertheless, we show that if all communities are disconnected from one another (no between-community edges), isoelastic welfare functions satisfy all principles. This result highlights the importance of network structure, specifically between-community edges;~\emph{(iii)} 
Under this framework, the trade-off between fairness and efficiency can be controlled by a single \emph{inequality aversion} parameter $\alpha$. This allows a 
decision-maker to effectively trade-off quantities like utility 
gap and total influence by varying this parameter in the welfare function. 
We then incorporate these welfare functions as objective 
into an optimization problem to rule out undesirable solutions. We show that the resulting optimization problem is monotone and 
submodular and, hence, can be solved with a greedy algorithm with optimality guarantees; \emph{(iv)} Finally, we carry out detailed experimental analysis on synthetic and real social networks to study the trade-off between total influence spread and utility gap. In particular, we conduct a case study on the social network-based landslide risk management in Sitka, Alaska. We show that by choosing $\alpha$ appropriately we can flexibly control utility gap (4\%-26\%) and the resulting influence degradation (36\% - 5\%).

\noindent\textbf{Related Work. }
Recent work has incorporated fairness directly into the influence maximization framework by relying on either Rawlsian theory~\cite{Rawls09, RahmattalabiVFRWYT19}, game theoretic principles~\cite{TsangWRTZ19-groupfairness} or equality based notions~\cite{stoica2020seeding}. Equality based approaches strive for equal outcome across different communities. In general, strict equality is hard to achieve and may lead to wastage of resources. This is amplified in influence maximization as different communities have different capacities in being influenced (e.g., marginalized communities are hard to reach).
We discuss the other two approaches in more details in Sections~\ref{sec:existing-fairness}--\ref{sec:exp}. 
Social welfare functions have been used within the economic literature to study trade-offs between equality and efficiency~\cite{sen1997economic} and have been widely adopted in different decision making areas including health~\cite{abasolo2004exploring}.
Recently, \citet{Heidari2018fairness} proposed to study similar ideas for regression problems. The classical social welfare theory, however, does not readily extend to our setting due to dependencies induced by the between-community connections. Extending those principles is a contribution of our work. \ifarxiv See Appendix~\ref{sec:related-work}  \else See the full version \fi for a thorough review of related work. 

\section{Problem Formulation}
\label{sec:framework}

We use $\G = (\V, \E)$ to denote a graph (or network) in which $\V$ is the set of $\n$ vertices and $\E$ is the set of edges.
In the \emph{influence maximization problem}, a decision-maker chooses a set of at most $\k$ 
vertices to influence (or activate). The selected vertices then spread the influence in rounds according to 
the \emph{Independent Cascade Model}~\cite{KempeKT03}.\footnote{Our framework is also applicable to other forms of diffusion such as Linear Threshold Model~\cite{KempeKT03}} Under this model, each newly activated vertex spreads the influence to its neighbors independently and with a fixed probability $\p \in [0,1]$. The process 
continues until no new  vertices are influenced. We use $\A$ to denote the initial set of vertices, also referred to as influencer vertices. The goal of the decision-maker 
is to select a set $\A$ to maximize  the expected number of vertices that are influenced at the end of this process.
Each vertex of the graph belongs to one of the disjoint communities (empty intersection) $\c\in  \C := \{1, \ldots, \NC\}$ 
such that $\V_1 \cup \dots \cup \V_\NC = \V$ where $
\V_\c$ denotes the set of vertices that belong to community $\c$. This partitioning can be induced by, e.g., the intersection of a 
set of (protected) attributes such as race or gender for which fair treatment is important. We use $\n_\c$ to denote the size of community $\c$, i.e., $\n_\c = |\V_\c|$. Furthermore, communities may be  disconnected, in which case $\forall c,c'\in \C$ and $\forall v\in\V_\c, v'\in\V_{c'},$ there is no edge between $v$ and $v'$ (i.e., $(v,v') \notin \E$).
We define $\A^{\star} := \{\A\subseteq \V \mid |\A|\leq \k \}$ as the set of budget-feasible influencers. Finally, for any choice of influencers $\A \in \A^{\star}$, 
we let $\uc(\A)$ denote the utility, i.e., the expected fraction of the influenced vertices of community $\c$, where the expectation is taken over randomness in
the spread of influence. The standard influence maximization problem solves the optimization problem
\begin{equation}
        \displaystyle \mathop \text{maximize}_{\A \in \A^{\star} }  \displaystyle \sum_{\c \in\C} \n_\c\uc(\A). 
        \label{prob:standard-influence-maximization}
\end{equation}
When clear from the context, we will drop the dependency of $u_c(\A)$ on $\A$ to minimize notational overhead.

\section{Existing Notions of Fairness}
\label{sec:existing-fairness}
Problem~\eqref{prob:standard-influence-maximization} solely attempts to maximize the total influence which is also known as the \emph{utilitarian} approach.
Existing fair influence maximization problems are variants of Problem~\eqref{prob:standard-influence-maximization} involving additional constraints. We detail these below. See \ifarxiv Appendix~\ref{sec:related-work} \else the full version \fi for more discussion.

\noindent\textbf{Maximin Fairness (MMF). } Based on the
Rawlsian theory~\cite{Rawls09}, MMF~\cite{TsangWRTZ19-groupfairness}  aims to maximize the utility of the worst-off community. Precisely, MMF only allows $\A\in \A^{\star}$ that satisfy the following constraint
\begin{equation*}
\label{eq:mmf}
   \mathop \text{min}_{\c\in\C} \;\; \uc(\A) \geq \gamma,\;\; \text{ where }\A\in \A^\star,
\end{equation*}
where the left term is the utility of the worst-off community and $\gamma$ is the highest value for which the constraint is feasible.

\noindent\textbf{Diversity Constraints (DC). }
Inspired by the game theoretic notion of core, DC requires that every community 
obtains a utility higher than when it receives resources proportional to its size and allocates them internally~\cite{TsangWRTZ19-groupfairness}.
This is illustrated by the following constraint where $U_c$ denotes the maximum utility that community $\c$ can achieve with a budget equal to $\lfloor \k\n_\c/\n\rfloor$.
\begin{equation}
\label{eq:dc}
\uc(\A) \geq U_c,  \quad \forall \c\in\C \text{ where } \A \in \A^{\star}.
\end{equation}
DC sets utility lower bounds for communities based on their relative sizes and how well they can spread influence internally. As a result, it does not explicitly account for reducing inequalities and may lead to high influence gap. We show this both theoretically and empirically in Sections~\ref{sec:fair-connection} and~\ref{sec:exp}.

\noindent\textbf{Demographic Parity (DP). } Formalizing the legal doctrine of disparate impact~\cite{zafar2017fairness}, DP requires the utility of all communities to be roughly the same.
For any $\delta\in[0,1)$, DP implies the  constraints~\cite{AliBCMGS19, stoica2020seeding,aghaei2019learning}
\begin{equation*}
    \big|\u_\c(\A) -\u_{\c'}(\A)\big| \leq \delta,\;\; \forall \c,\c' \in \C  \text{ where } \A\in \A^{\star}.
\end{equation*}
The degree of tolerated inequality is captured by $\delta$ and higher $\delta$ values are associated with higher tolerance.  
We use exact and approximate DP to distinguish between $\delta = 0$ and $\delta > 0$. 

\section{Fair Influence Maximization }
\label{sec:welfare}

\subsection{Cardinal Welfare Theory Background}
Following the cardinal welfare theory~\cite{bergson1938reformulation}, our aim is to design welfare functions to measure the goodness of the choice of influencers. Cardinal welfare theory proposes a set of principles and welfare functions that are 
expected to satisfy these principles. Given two utility vectors, these principles  determine if they are indifferent or one of them is preferred. 
For ease of exposition, let $\wel$ denote this welfare function defined over the utilities of all individuals in the population (we will formalize $\wel$ shortly). Then the existing principles of social welfare theory can be summarized as follows.
Throughout this section, without loss of generality, we assume all utility vectors belong to $\mathbb [0,1]^N$.

\noindent\textbf{(1) Monotonicity. } 
If $\bu \prec \bu'$, then $\wel(\bu) < \wel(\bu')$.\footnote{$\prec$ means $\bu_c \leq \bu'_c$ for all $\c\in\C$ and $\bu_c < \bu'_c$ for some $\c\in \C$.}
In other words,  if $\bu'$ Pareto dominates $\bu$, then $\wel$ 
should strictly prefer $\bu'$ to $\bu$. This principle also appears  as  \emph{levelling  down}  objection  in 
political philosophy~\cite{Parfit97}. 

\noindent\textbf{(2) Symmetry. }
$\wel(\bu) = \wel\left(P(\bu)\right)$, where $P(\bu)$ is any element-wise permutation of $\bu$. According to this  principle, $\wel$ does 
not depend on the naming or labels of the individuals, but only on their utility levels.

\noindent\textbf{(3) Independence of Unconcerned Individuals. } Let $(\bu|^{\c}b)$ be a utility vector that is identical to $\bu$, 
except for the utility of individual $\c$ which is replaced by a new value $b$. The property requires that for all $\c, b,b',\bu$ and $\bu'$, 
$\wel(\bu|^{\c}b) < \wel(\bu'|^{\c}b) \Leftrightarrow \wel(\bu|^{\c}b') < \wel(\bu'|^{\c}b').$
Informally, this  principle states that $\wel$ should be independent of individuals whose utilities remain the same. 

\noindent\textbf{(4) Affine Invariance. } 
For any $a > 0$ and $b$, $\wel(\bu) < \wel(\bu') \Leftrightarrow \wel(a\bu + b) < \wel(a\bu' + b)$ 
i.e., the relative ordering is invariant to a choice of numeraire.

\noindent\textbf{($\star$5) Transfer Principle~\cite{Dalton20, Pigou12}. } Consider individuals  
$i$ and $j$ in utility vector $\bu$ such that $\u_{i} < \u_j$. Let $\bu'$ be another utility vector that is identical to $\bu$ in all 
elements except $i$ and $j$ where $\u'_i = \u_i + \delta$ and $\u'_{j} = \u_{j} - \delta$ for some 
$\delta \in (0, (\u_{j}-\u_{i})/2)$. Then, $\wel(\bu) < \wel(\bu')$. Informally, transferring 
utility from a high-utility to a low-utility individual should increase social welfare. 

It is well-known that any welfare function $\wel$ that satisfies the 
first four principles is additive and in the form of $\wel_\alpha(\bu) = \Sigma_{i=1}^{\n} \u_{i}^\alpha/\alpha$ for $\alpha \neq 0 $ and $\wel_\alpha(\bu) = \Sigma_{i=1}^{\n} \log(\u_{i})$ 
for $\alpha = 0$. Further, for $\alpha < 1$ the last principle is also satisfied. In this case $\alpha$ can be interpreted as an inequality aversion parameter, where smaller $\alpha$ values exhibit more aversion towards inequalities. We empirically investigate the effect of $\alpha$ in Section~\ref{sec:exp}.

\subsection{Group Fairness and New Principles}\label{sec:group-fair-principles}
Applying the cardinal social welfare framework to influence maximization problems comes with new challenges. We next highlight these challenges and demonstrate our approach.

First, the original framework of cardinal welfare theory defines the welfare function over individuals.  This is equivalent to seeking equality in the probability that each individual will be influenced, similar to the work of~\citet{FishBBFSV19}. It is notoriously hard to achieve individual fairness in practice, e.g., ~\citet{DworkHPRZ12} explore this in the machine learning context. The problem is further exacerbated in influence maximization because it is not always possible to spread the influence to isolated or poorly connected individuals effectively. Therefore, we focus on group fairness whereby the utility of each individual is defined as the average utility of the members of that community. Let $\uc$ denote the average utility of community $\c$. With this group-wise view, a welfare function can be written in terms of the average utilities over communities e.g., $\wel_\alpha(\bu) = \Sigma_{i=1}^{\n} \u_{i}^\alpha/\alpha = \Sigma_{\c\in\C} \n_\c\uc^\alpha/\alpha$.

\begin{figure}[ht!]
\centering
\includegraphics[width = 0.45 \textwidth]{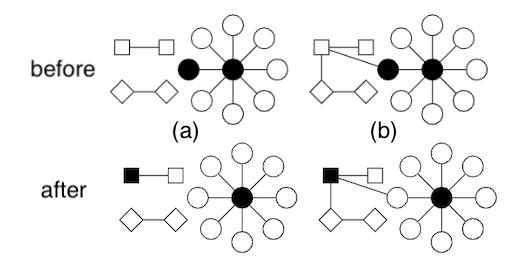}
\caption{The effect of network structure and in particular between-community edges on coupling of the utilities of communities. The figure shows two sample networks consisting of three communities, differentiated by shape: (a) is the same as (b) except that between-community edges are removed.  Black fillings show the choice of influencers. We further assume $p$ is small enough such that influence spread dissipates after one step. Transferring an influencer from circles to squares (top to bottom panel) affects the utility of diamonds in (b) but not in (a).}
\label{fig:network-structure}
\end{figure}

Moreover, while principles 1-4 can be easily extended to our influence maximization problem, this is not the case for the transfer principle. More precisely, in the influence maximization problem it might not be feasible to directly transfer utilities from one community to another without affecting the utilities of other communities. We highlight this effect with an example, see Figure~\ref{fig:network-structure}. In this figure, each 
community is represented by a distinct shape. The two networks (a) and (b) are identical except that between community edges are removed in network (a) (i.e., disconnected communities). The solid black vertices determine the choice of influencers. 
In network (b), if we transfer an influencer vertex from \emph{circles} to \emph{squares} according to Figure~\ref{fig:network-structure} (top to bottom panel), it will indirectly affect \emph{diamonds} as well. This effect is absent in network (a) as there are no between-community edges to allow the spread of influence across communities. In network (a), the transfer principle prefers the resulting utility vector after the transfer. However, this principle cannot be applied to network (b) as the utilities of more than one community is modified after the transfer. Additionally, even when direct transfer is possible, it can be the 
case that there is no symmetry in the amount of utility gained by low-utility community and the amount of utility lost by high-utility community after the transfer. To address both of these shortcomings we introduce the \emph{influence transfer principle} as a generalization of the transfer principle for influence maximization problems.

Similar to the original transfer principle, we consider solutions in which influencer vertices are transferred from one community to another community. Without loss of generality, we focus on the case where only one influencer vertex is transferred between the two communities. We refer to such solutions as neighboring solutions. Clearly, transfer of more than one influencer vertex can be seen as a sequence of transfers between neighboring solutions.

\noindent\textbf{(5) 
Influence Transfer Principle. } Let $\A$ and $\A'\in \A^{\star}$ be two neighboring solutions
with corresponding utility vectors
$\bu = \bu(\A)$ and $\bu' = \bu(\A')$. Suppose the elements of $\bu$ and $\bu'$ are sorted in 
ascending order. We also assume after the transfer, the ordering of the utilities stays the same across $\bu$ and $\bu'$.

If $\Sigma_{ \kappa \in \C : \kappa \leq \c} \n_{\kappa} (\u'_{\kappa} - \u_{\kappa}) \geq 0$ $\forall \c \in \C$  and 
$\u'_{\c} > \u_{\c}$ for some  $\c \in \C,$ then $\wel(\bu) < \wel(\bu')$.

Informally, influence transfer principle states that in a desirable transfer of utilities, the magnitude of the improvement in 
lower-utility communities should be at least as  high as the magnitude of decay in higher-utility communities while enforcing 
that at least one low-utility community receives a higher utility after the transfer. The original transfer is a special case of influence transfer principle when communities are disconnected and utilities transferred remain the same.

Next, we study whether any of the welfare functions that satisfy the first 4 principles satisfy the influence transfer
principle. In Proposition~\ref{pro:trans}, we show any additive and strictly concave function satisfies 
influence transfer principle. Since functions that satisfy the first 4 principles are strictly concave for $\alpha < 1$, the influence transfer principle is automatically satisfied in this regime. We defer all proofs to \ifarxiv Appendix~\ref{sec:omitted}. \else the full version. \fi
\begin{proposition}
\label{pro:trans}
Any strictly concave and additive function satisfies the influence transfer principle.
\end{proposition}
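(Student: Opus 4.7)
The plan is to write $\wel(\bu) = \sum_{\c \in \C} \n_\c f(\uc)$ for a strictly concave $f$ and then derive the conclusion via Abel's summation by parts. The key observation is that the hypothesis $S_\c := \sum_{\kappa \leq \c} \n_\kappa(\u'_\kappa - \u_\kappa) \geq 0$ is exactly a weighted partial-sum condition that pairs with a monotone decreasing sequence, and strict concavity of $f$ supplies such a sequence by evaluating $f'$ along ascending utilities.

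Concretely, I would introduce the convex interpolation $\bu(t) := (1-t)\bu + t\bu'$ for $t \in [0,1]$. Since $\bu$ and $\bu'$ are sorted ascending with the same ordering across communities (by hypothesis), every $\bu(t)$ is also sorted ascending, because $u_{\c}(t) = (1-t)u_{\c} + tu'_{\c} \leq (1-t)u_{\c+1} + tu'_{\c+1} = u_{\c+1}(t)$. Differentiating along $t$ yields
\begin{equation*}
\frac{d}{dt}\wel(\bu(t)) = \sum_{\c \in \C} \n_\c f'(u_\c(t))(\u'_\c - \u_\c).
\end{equation*}
With $a_\c := \n_\c(\u'_\c - \u_\c)$ and $b_\c(t) := f'(u_\c(t))$, Abel summation gives
\begin{equation*}
\sum_{\c = 1}^{\NC} a_\c b_\c(t) = \sum_{\c=1}^{\NC-1}\bigl[b_\c(t) - b_{\c+1}(t)\bigr]\, S_\c + b_{\NC}(t)\, S_{\NC}.
\end{equation*}
Strict concavity of $f$ makes $f'$ strictly decreasing, so $b_\c(t) \geq b_{\c+1}(t)$; together with $S_\c \geq 0$, every summand is nonnegative, and the derivative is $\geq 0$. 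Integrating over $[0,1]$ already yields $\wel(\bu') \geq \wel(\bu)$.

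The main obstacle is upgrading to strict inequality. For that, I would use $\u'_{\c^*} > \u_{\c^*}$ for some $\c^*$: choosing the least such index forces $S_{\c^*} > 0$, and on a $t$-interval of positive length the consecutive utilities $u_{\c^*}(t) < u_{\c^*+1}(t)$ are strictly separated (otherwise an entire block of adjacent communities would share a common utility throughout, which is incompatible with $\u'_{\c^*}$ strictly exceeding $\u_{\c^*}$ while preserving the ordering). Strict concavity then makes $b_{\c^*}(t) > b_{\c^*+1}(t)$ on that interval, so the $\c^*$-th Abel summand is strictly positive there; integration over $t$ delivers $\wel(\bu') > \wel(\bu)$. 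Equal-utility ties within a block are accommodated by merging such communities into one super-community (with combined weight $\sum \n_\c$ over the block) and re-running the Abel step on the resulting strictly ascending sequence, which leaves both the hypothesis and $\wel$ unchanged.
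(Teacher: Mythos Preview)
Your approach via interpolation and Abel summation is correct in outline and genuinely different from the paper's. The paper argues by a direct inductive sweep: it processes communities in ascending order of utility, maintains a nonnegative running ``budget'' $\Delta = \sum_{\kappa \leq c} \n_\kappa(\u_\kappa - \u'_\kappa)$, and at each step invokes the decreasing-marginal-returns inequality $f(x+\delta_x)-f(x) > f(y+\delta_y)-f(y)$ for $x<y$, $\delta_x \geq \delta_y > 0$ to keep the partial welfare difference nonnegative. Your route instead linearizes via $\bu(t)$, differentiates, and uses Abel's identity to pair the partial sums $S_c$ against the gaps $f'(u_c(t))-f'(u_{c+1}(t))$, which strict concavity makes nonnegative. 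Your argument is more transparent (it is essentially a weighted-majorization proof) but assumes $f$ is differentiable, a technicality the paper's finite-difference version sidesteps.

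Two small repairs are needed. First, both proofs require $f$ to be nondecreasing; the paper states this explicitly, and you must too, since otherwise the final Abel term $b_{\NC}(t)\,S_{\NC}$ can be negative and the claim actually fails (take $f(x)=-x^2$, $\bu=(0.3,0.5)$, $\bu'=(0.4,0.5)$). Second, your parenthetical justification that a tied block is ``incompatible with $\u'_{c^*} > \u_{c^*}$ while preserving the ordering'' is wrong: one can have $u_{c^*}=u_{c^*+1}$ and $u'_{c^*}=u'_{c^*+1}$ with both strictly increasing. Your merging device does fix this, but after merging you still need to treat the case where $c^*$ is the \emph{last} block, in which case strictness comes from the boundary term $b_{\NC}(t)\,S_{\NC}>0$ (using $f'>0$ and $S_{\NC}\geq S_{c^*}>0$) rather than from a gap $b_{c^*}-b_{c^*+1}$.
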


To measure inequality, notion of utility gap (or analogous notions such as ratio of utilities) is commonly used~\cite{FishBBFSV19,stoica2020seeding}. Utility gap measures the difference between the utilities of a pair of communities. In this work, we focus on the maximum utility gap, i.e., the gap between communities with the highest and lowest utilities (utility gap henceforth). For a utility vector $\bu$, we define \ $\Delta(\bu) = \max_{\c\in\C} \u_{\c}-\min_{\c\in\C} \u_{\c}$ to denote the utility gap. 
Fair interventions are usually motivated by the large utility gap before the intervention~\cite{gap}. 
\citet{FishBBFSV19} has shown that in social networks the utility gap can further increase 
after an algorithmic influence maximizing intervention. We extend this result to the entire class of welfare functions that we study in this work and we notice that the utility gap can increase even if we optimize for these welfare functions. This is a surprising result since, unlike the influence maximization objective, these welfare functions are designed to incorporate fairness, yet we may observe an increase in the utility gap. We now introduce another principle which aims to address this issue. Again we focus on 
neighboring solutions.

\noindent\textbf{(6) Utility Gap Reduction. }
Let $\A$ and $\A' \in \A^{\star}$ be two neighboring solutions with corresponding utility vectors 
$\bu = \bu(\A)$ and $\bu' = \bu(\A')$.
If $\Sigma_{\c\in\C} \n_\c\u_{\c} \leq \Sigma_{\c\in\C} \n_\c\u'_{\c}.$
and $\Delta(\bu) > \Delta(\bu')$ then  $\wel(\bu) < \wel(\bu')$.

The utility gap reduction simply states that the welfare function should prefer the utility vector whose total utility is at least as high as the other vector and also has smaller utility gap.
We now show that, in general, it is not possible to design a welfare function that obeys the utility gap reduction principle along with the other 
principles. 
\begin{proposition}
\label{pro:gap}
Let $\wel$ be a welfare function that obeys  principles 1-5. Then there exists an instance of influence maximization where $\wel$ does not satisfy the utility gap reduction.
\end{proposition}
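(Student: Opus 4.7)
By the characterization result quoted just before the proposition, any $\wel$ obeying principles~(1)--(4) is additive and isoelastic, and adding principle~(5) forces $\wel = \wel_\alpha$ for some $\alpha < 1$; hence $\wel(\bu) = \sum_{\c \in \C} \n_\c f(\uc)$ with $f$ strictly concave and strictly increasing. My plan is to exhibit an instance together with two neighboring solutions $\A, \A' \in \A^{\star}$ whose utility vectors $\bu, \bu'$ satisfy the hypotheses of utility gap reduction ($\sum_{\c} \n_\c \uc \le \sum_{\c} \n_\c \u'_\c$ and $\Delta(\bu) > \Delta(\bu')$) but violate its conclusion $\wel(\bu) < \wel(\bu')$.

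The key observation is that $\Delta(\bu)$ depends only on the two extreme community utilities, whereas a strictly concave, additive $\wel$ penalizes dispersion throughout the distribution, including among middle communities. I would take four equal-sized communities and choose vectors of the form $\bu = (u_1,\, m,\, m,\, u_4)$ and $\bu' = (u_1 + \eta,\, m - \delta,\, m + \delta,\, u_4 - \eta)$ with $u_1 < m < u_4$, $\delta \in (0,m)$ and $\eta > 0$ small. This preserves the total utility and strictly shrinks the gap by $2\eta$, while the welfare change decomposes as
\begin{equation*}
\bigl[f(u_1 + \eta) - f(u_1) + f(u_4 - \eta) - f(u_4)\bigr] + \bigl[f(m - \delta) + f(m + \delta) - 2f(m)\bigr].
\end{equation*}
Strict concavity of $f$ makes the middle bracket a fixed strictly negative constant (independent of $\eta$) for every fixed $\delta \in (0, m)$, while the extreme bracket is $O(\eta)$ (and, by Taylor expansion, equal to $(f'(u_1) - f'(u_4))\eta + O(\eta^2)$). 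Picking $\eta$ small enough relative to this middle loss yields $\wel(\bu') < \wel(\bu)$, giving the required violation.

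To realize $\bu, \bu'$ in an influence maximization instance I would use a simple one-shot gadget: take four disjoint communities $\V_1, \dots, \V_4$ of a common (large) size, Independent Cascade probability $\p = 1$, budget $\k = 1$, and two candidate seeds $v, v'$ with no path between them. Each of $v, v'$ is wired via one-hop directed edges to a pre-chosen subset of every $\V_\c$, with the downstream vertices given no outgoing edges so propagation terminates after one step. The subsets are chosen to realize fractions $\uc$ (when seeding $v$) and $\u'_\c$ (when seeding $v'$) exactly. Then $\A = \{v\}$ and $\A' = \{v'\}$ are neighboring budget-feasible solutions whose utility vectors coincide with the $\bu, \bu'$ built above.

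The main obstacle is keeping the welfare-decrease argument uniform across the full range $\alpha < 1$: as $\alpha \to 1^-$ the middle bracket shrinks toward zero, forcing $\eta$ to shrink correspondingly. Nonetheless, because $f(m-\delta) + f(m+\delta) - 2f(m) < 0$ strictly for every $\alpha < 1$ and every $\delta \in (0, m)$ (the strict midpoint inequality for strictly concave $f$), while the extreme bracket is only linear in $\eta$, the construction succeeds for each admissible $\wel$. The realizability step is otherwise routine: only rational utilities arise, and we may choose the common community size to be any integer needed to accommodate the chosen values of $u_1, u_4, m, \delta, \eta$.
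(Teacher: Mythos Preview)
Your argument is correct and takes a genuinely different route from the paper's. The paper builds one explicit three-community instance (with $p=1$, $K=4$, specific star gadgets) producing utility vectors $\bu=(0.3,0.7,0.8)$ and $\bu'=(0.34,0.6,0.86)$, and then verifies \emph{numerically}, by plotting $\wel_\alpha(\bu)-\wel_\alpha(\bu')$ across $\alpha<1$, that every isoelastic welfare prefers $\bu'$; since $\Delta(\bu)=0.5<0.52=\Delta(\bu')$ and the totals coincide, utility gap reduction would demand the opposite preference. Your approach instead uses four equal-size communities and the decomposition into an ``extreme'' bracket (an $O(\eta)$ Pigou--Dalton improvement) and a ``middle'' bracket (a fixed strictly negative quantity by strict concavity), then lets $\eta\to 0$. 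This is fully analytic and makes transparent \emph{why} the principle can fail: $\Delta(\cdot)$ sees only the two endpoints, while any strictly concave additive $\wel$ penalizes dispersion among interior communities too. The trade-off is that your counterexample depends on $\alpha$ (through the choice of $\eta$), whereas the paper exhibits a single instance working for all $\alpha$ simultaneously---but the proposition as stated only asks for $\forall\wel\,\exists$ instance, so your version suffices and is arguably more rigorous than a numerical sign check.

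Two small points worth tightening in your write-up. First, for $\A=\{v\}$ and $\A'=\{v'\}$ to be \emph{neighboring} in the paper's sense you must place $v$ and $v'$ in different communities; just say $v\in\V_1$, $v'\in\V_2$ and absorb the single self-activation into the target counts. Second, to guarantee $\Delta(\bu')=u_4-u_1-2\eta$ you need $u_1+\eta<m-\delta$ and $m+\delta<u_4-\eta$, so record the constraint $\delta<\min(m-u_1,\,u_4-m)$ before sending $\eta\to 0$.
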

Next, we show on a special class of networks, i.e., networks with disconnected communities, the utility gap reduction principle is satisfied in all influence maximization problems.
\begin{proposition}
\label{pro:disjoint}
Let $\wel$ be a welfare function that obeys principles 1-5. If the communities are disconnected, then $\wel$ also satisfies the utility gap reduction principle.
\end{proposition}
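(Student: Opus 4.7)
The plan is to obtain Proposition~\ref{pro:disjoint} as a specialization of the influence transfer principle (principle~5) to the disconnected setting. Because the graph has no between-community edges, the utility $\u_c$ of any community depends solely on the influencers placed within $c$. A neighboring pair $(\A,\A')$ moves a single influencer between two communities, $\c_1$ (losing it) and $\c_2$ (gaining it), so $\u'_c = \u_c$ for every $c \notin \{\c_1,\c_2\}$, while $\u'_{\c_1} = \u_{\c_1} - \delta_1$ and $\u'_{\c_2} = \u_{\c_2} + \delta_2$ for some $\delta_1,\delta_2 > 0$.

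The first substantive step is to observe that the hypothesis $\Delta(\bu) > \Delta(\bu')$ forces $\u_{\c_2} < \u_{\c_1}$: otherwise raising $\u_{\c_2}$ can only push the maximum upward and lowering $\u_{\c_1}$ can only push the minimum downward, giving $\Delta(\bu') \geq \Delta(\bu)$, a contradiction. Given this Robin-Hood direction, I sort the communities in ascending order of utility under $\bu$, so that $\c_2$ precedes $\c_1$, and track the cumulative sum $S_c := \sum_{\kappa \leq c} \n_\kappa (\u'_\kappa - \u_\kappa)$. Then $S_c = 0$ on positions preceding $\c_2$; $S_c = \n_{\c_2}\delta_2 > 0$ on positions from $\c_2$ up to (but excluding) $\c_1$; and $S_c = \n_{\c_2}\delta_2 - \n_{\c_1}\delta_1 \geq 0$ from $\c_1$ onward, where the last inequality is precisely the total-utility hypothesis $\sum_c \n_c \u_c \leq \sum_c \n_c \u'_c$. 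Hence every $S_c$ is nonnegative and the strict-gain clause $\u'_{\c_2} > \u_{\c_2}$ holds, so invoking principle~5 yields $\wel(\bu) < \wel(\bu')$.

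The main obstacle is the order-preservation clause embedded in principle~5, which demands that the ascending ordering of utilities coincide for $\bu$ and $\bu'$. When the transfer is small enough that no community is overtaken (in particular $\u'_{\c_2} \leq \u'_{\c_1}$ and no intermediate community lies between $\u'_{\c_1}$ and $\u'_{\c_2}$), the argument above completes the proof directly. In the ``crossing'' regime the plan is to decompose $\bu \to \bu'$ through an intermediate $\tilde\bu$ in which $\c_1$ and $\c_2$ are brought to a common value by a mass-preserving transfer --- handled by principle~5 in its ordering-preserved subcase, or equivalently by strict Schur-concavity of $\wel$ viewed at the individual level, which follows from the fact that principles 1--5 pin $\wel$ down to the strictly concave isoelastic family. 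One then applies monotonicity (principle~1) to absorb the remaining slack from the total-utility hypothesis and reach $\bu'$. Verifying at each sub-step that the strict gap-reduction hypothesis is compatible with the decomposition is the most delicate part of the argument.
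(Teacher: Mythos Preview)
Your core argument matches the paper's proof: in the disconnected setting only two communities' utilities change under a neighboring transfer; the gap-reduction hypothesis forces the transfer to go from the higher-utility community to the lower-utility one; and the total-utility hypothesis then makes the partial sums $\sum_{\kappa \leq c} \n_\kappa(\u'_\kappa - \u_\kappa)$ all nonnegative, so the influence transfer principle (principle~5) delivers $\wel(\bu) < \wel(\bu')$. The paper sorts by the ``better'' vector rather than by $\bu$ as you do, but this is cosmetic.

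Where you differ is in taking seriously the order-preservation clause built into principle~5. The paper's own proof simply invokes principle~5 after sorting one of the two vectors, without checking that the ascending orderings of $\bu$ and $\bu'$ coincide; your treatment is therefore more careful than the paper's on this point. Your proposed fix for the crossing regime --- pass through an equalizing intermediate point using the strict concavity (equivalently Schur-concavity at the individual level) of the isoelastic family that principles~1--4 pin down, then absorb any residual slack via monotonicity --- is a reasonable route, though as you acknowledge the bookkeeping at each sub-step still needs to be written out. In short: same approach as the paper, with an extra layer of rigor that the paper omits.
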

Propositions~\ref{pro:gap}~and~\ref{pro:disjoint} and their proofs establish new challenges in fair influence maximization. These challenges arise due to the coupling of the utilities as a result of the network structure and more precisely the between-community edges. The results in  Propositions~\ref{pro:gap}~and~\ref{pro:disjoint} leave open the following question: ``In what classes of networks, there exists a welfare function that satisfies all the 6 principles over all instances of influence maximization problems?"
As an attempt to answer this question, we empirically show that over various real and synthetic networks including stochastic block models, there exist welfare functions that obey all of our principles. We conclude this section by the following three remarks.
\begin{remark}[Application to Other Settings]
Our welfare-based framework can be \emph{theoretically} applied to different graph-based problems (e.g., facility location) but algorithmic solution is domain-dependent. The choice of influence maximization is motivated by evidence about discrimination studied in previous work~\cite{RahmattalabiVFRWYT19,stoica2020seeding}.
\end{remark}
\begin{table*}[ht!]
    \centering
    \begin{tabular}{cllllll}
    \toprule
    Notion/Principle & Monotonicity & Symmetry & Ind. of Unconcerned & Affine  &  Influence Transfer & Gap Reduction\\
    \toprule
    Exact DP & \xmark~(Prop.~\ref{pro:mono-dp-exact}) & \checkmark & \xmark~(Prop.~\ref{pro:dp-iuc}) & \checkmark  & \xmark~(Prop.~\ref{pro:dp-trans-2}) & \checkmark~(Prop.~\ref{pro:dp-ugr})\\
    Approximate DP & \xmark~(Prop.~\ref{pro:mono-dp}) & \checkmark & \xmark~(Prop.~\ref{pro:dp-iuc}) & \xmark  & \xmark~(Prop.~\ref{pro:dp-trans-2}) & \xmark~(Prop.~\ref{pro:dp-ugr})\\
    DC & \checkmark~(Cor.~\ref{cor:mono-dc-maximin}) & \xmark & \xmark~(Prop.~\ref{pro:dc-iuc}) & \xmark  & \xmark~(Prop.~\ref{pro:dc-trans}) & \xmark~(Prop.~\ref{pro:dc-ugr})\\
    MMF & \checkmark~(Cor.~\ref{cor:mono-dc-maximin})& \checkmark & \xmark~(Prop.~\ref{pro:lexi-iuc}) & \checkmark  & \xmark~(Prop.~\ref{pro:lexi-trans})  & \xmark~(Prop.~\ref{pro:lexi-ugr})\\ 
    Utilitarian  & \checkmark~(Cor.~\ref{cor:mono-dc-maximin}) & \checkmark & \checkmark & \checkmark & \xmark~(Prop.~\ref{pro:util-trans}) & \xmark~(Prop.~\ref{pro:util-ugr}) \\ 
    Welfare Theory & \checkmark  & \checkmark & \checkmark & \checkmark & \checkmark & \xmark~(Prop.~\ref{pro:gap})\\ 
    \bottomrule 

    \end{tabular}
    \caption{Summary of the properties of different fairness notions through the lens of welfare principles for influence maximization.}
    \label{tab:welfare-exisitng-notions}
\end{table*}
\begin{remark}[Relationship between Principles and Fairness]
Monotonicity ensures there is no wastage of utilities. Symmetry enforces the decision-maker to \emph{not} discriminate based on communities' names. According to the Independence of Unconcerned Individuals, between two solutions (choices of influencers) only those individuals/communities whose utilities change should impact the decision-maker's preference. Affine Invariance is a natural requirement that the preferences over different solutions should not change based on the choice of numeraire. Finally, the Transfer Principle promotes solutions that are more equitable. 
\end{remark}

\begin{remark}[Selecting the Inequality Aversion Parameter in Practice]
In our approach, $\alpha$ is a user-selected parameter that the user can vary to tune the trade-off between efficiency and fairness. Leaving the \emph{single} parameter $\alpha$ in the hands of the user is a benefit of our approach since the user can inspect the solution as $\alpha$ is varied to select their preferred solution. Since a single parameter must be tuned, this can be done without the need for a tailored algorithm. In particular, we recommend that $\alpha$ be either selected by choosing among a moderate number of values and picking the one with the most desirable behavior for the user or by using the bisection method. Typically, choosing $\alpha$ will reduce to letting the user select how much utility gap they are willing to tolerate: they will select the largest possible value of $\alpha$ for which the utility gap is acceptable. 
\end{remark}

\subsection{Group Fairness and Welfare Maximization}\label{sec:grou-fair-welfare}
The welfare principles reflect the preferences of a fair decision-maker between a pair of solutions. Thus a welfare function that satisfies all the principles would always rank the preferred (in terms of fairness and efficiency) solution higher. As a result, we can maximize the welfare function to get the most preferred solution. 

We show that the two classes of welfare functions $\wel_\alpha(\bu) = \Sigma_{i=1}^{\n} \u_{i}^\alpha/\alpha$ for $\alpha < 1, \alpha \neq 0 $ and $\wel_\alpha(\bu) = \Sigma_{i=1}^{\n} \log(\u_{i})$ 
for $\alpha = 0$ satisfy 5 of our principles.  Hence as a natural notion of fairness we can define a fair solution to be a choice of influencers with the highest welfare as defined in the following optimization problem.
\begin{equation}
\begin{array}{cl}
\displaystyle \mathop \text{maximize}_{\A\in \A^\star}  & \displaystyle \wel_{\alpha}(\bu(\A)). \\ 
\end{array}
\label{eq:IM-fair}
\end{equation}

\begin{lemma}
\label{lem:wel-sub-mon}
In the influence maximization problem, any welfare function that satisfies principles 1-5 is monotone and submodular.
\end{lemma}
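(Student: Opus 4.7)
The plan is to exploit the characterization of welfare functions satisfying principles 1--5 established earlier in this section. Any such $\wel$ is additive across communities and takes the form $\wel_\alpha(\bu(\A)) = \sum_{\c \in \C} \n_\c\, \phi_\alpha(\uc(\A))$, where $\phi_\alpha(u) = u^\alpha/\alpha$ for $\alpha < 1$, $\alpha \neq 0$, and $\phi_\alpha(u) = \log u$ for $\alpha = 0$. Since a non-negative linear combination of monotone submodular set functions is again monotone submodular, it suffices to show that $\A \mapsto \phi_\alpha(\uc(\A))$ is monotone submodular for each fixed community $\c$.

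I would then assemble two standard ingredients. First, the map $\A \mapsto \n_\c \uc(\A)$ is the expected number of vertices of community $\c$ that are activated when $\A$ is seeded under the Independent Cascade Model; by the live-edge coupling of~\citet{KempeKT03}, restricted to vertices of community $\c$, this is monotone submodular in $\A$, so $\uc$ is as well (scaling by $1/\n_\c > 0$ preserves both properties). Second, a quick derivative computation shows that $\phi_\alpha$ is non-decreasing and concave on $(0,1]$ for every $\alpha < 1$, since $\phi_\alpha'(u) = u^{\alpha-1} > 0$ and $\phi_\alpha''(u) = (\alpha - 1) u^{\alpha-2} \leq 0$, with the analogous statements for $\log u$.

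The crux of the argument is the classical composition lemma: a non-decreasing concave function composed with a monotone submodular set function is monotone submodular. I would recall the short proof: for $\A \subseteq \A'$ and $v \notin \A'$, writing $a = \uc(\A)$, $b = \uc(\A\cup\{v\})$, $a' = \uc(\A')$, $b' = \uc(\A'\cup\{v\})$, submodularity of $\uc$ gives $b - a \geq b' - a'$ while monotonicity gives $a \leq a'$. Monotonicity of $\phi_\alpha$ together with $b \geq a + (b' - a')$ yields $\phi_\alpha(b) - \phi_\alpha(a) \geq \phi_\alpha(a + (b'-a')) - \phi_\alpha(a)$, and concavity of $\phi_\alpha$ (equivalently, the non-increasing nature of $\phi_\alpha'$) together with $a \leq a'$ yields the latter quantity $\geq \phi_\alpha(a' + (b'-a')) - \phi_\alpha(a') = \phi_\alpha(b') - \phi_\alpha(a')$, so the marginal of $\phi_\alpha \circ \uc$ at $v$ is non-increasing in the base set. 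Monotonicity of $\phi_\alpha \circ \uc$ follows directly from both factors being non-decreasing.

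The only real subtlety is the boundary case $\uc(\A) = 0$, at which $\phi_\alpha$ is undefined (or $-\infty$) when $\alpha \leq 0$; this is the main obstacle to a fully clean statement. It is handled by restricting to the standard regime in which each community utility is strictly positive, or by a mild $\uc \mapsto \uc + \varepsilon$ regularization, both of which preserve monotonicity and submodularity without affecting the argument. Modulo this, the lemma follows by summing the composition claim over $\c$ with the positive weights $\n_\c$.
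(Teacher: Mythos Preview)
Your proposal is correct and follows essentially the same approach as the paper: reduce to per-community terms, note that $\uc(\A)$ is monotone submodular under the Independent Cascade Model, and use that a non-decreasing concave transformation preserves monotone submodularity, then sum. The paper simply cites a composition result for this last step, whereas you supply the short inline proof and additionally flag the $\uc(\A)=0$ boundary issue, which the paper glosses over.
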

It is well-known that to maximize any monotone submodular function, there exists a greedy algorithm 
with a $(1-1/e)$ approximation factor~\cite{NemhauserW81} which we can also use to solve the welfare maximization problem.

Each choice of the inequality aversion parameter $\alpha$ results in a different welfare function and hence a fairness notion. 
A decision-maker can directly use these welfare functions as objective of an optimization problem and study the trade-off 
between fairness and total utility by varying $\alpha$, see Section~\ref{sec:exp}.

\subsection{Connection to Existing Notions of Fairness}
\label{sec:fair-connection}
Our framework allows for a spectrum of fairness notions as a function of $\alpha$. It encompasses as a special case \emph{leximin fairness}\footnote{Leximin is subclass of MMF. According to its definition, among two utility vectors, 
leximin prefers the one where the worst utility is higher. If the worst utilities are equal, leximin repeats this process by comparing 
the second worst utilities and so on.}, a sub-class of MMF, for $\alpha\rightarrow -\infty$.
Proportional fairness~\cite{BonaldL2001-FairnessInternet}, a notion for resource allocation problems, is also closely connected to 
the welfare function  for $\alpha = 0$. See 
\ifarxiv Appendices~\ref{sec:related-work} and~\ref{sec:leximin}. \else the full version for more details. \fi

It is natural to ask which of the fairness principles are satisfied by the existing notions of fairness for influence maximization. As we discussed in Section~\ref{sec:existing-fairness}, the existing notions of fairness are imposed by adding constraints to the influence maximization problem. However, our welfare framework directly incorporates fairness into the objective. 
In order to facilitate the comparison, instead of the constrained influence maximization problems we consider an equivalent reformulation in which we bring the constraints into the objective via the characteristic function of the feasible set. We then have a single objective function which we can treat as the welfare function corresponding to the fairness constrained problem. 
More formally, given an influence maximization problem and fairness constraints written as a feasible set $\mathcal F$ 
\begin{equation*}
    \max_{\A\in\A^{\star}} \sum_{\c \in \C}\n_\c \u_\c(\A) \enspace\text{ s.t. }\enspace \bu(\A) \in \mathcal F.  
\end{equation*} 
We consider the following equivalent optimization problem
\begin{equation*}
\label{eq:wel-existing}
    \max_{\A\in\A^{\star}} \sum_{\c \in \C}\n_\c \u_\c(\A) + \mathcal I_{\mathcal F} (\bu(\A)) := \max_{\A\in\A^{\star}} \wel_{\mathcal F}(\bu(\A)),  
\end{equation*} 
in which $\mathcal I_{\mathcal F}(\bu)$ is equal to 0 if $\bu \in \mathcal F$ and $-\infty$ otherwise. Using this new formulation, we can now examine each of the existing notions of fairness though the lens of the welfare principles. Given the new interpretation, to show that a fairness notion does not satisfy a specific principle, it suffices to show there exist solutions $\A, \A'\in \A^\star$ and corresponding utility vectors $\bu = \bu(\A)$ and $\bu'=\bu(\A')$ such that the principle prefers $\bu$ over $\bu'$ but $\wel_{\mathcal F}(\bu) < \wel_{\mathcal F}(\bu')$.
The results are summarized in Table~\ref{tab:welfare-exisitng-notions} where in addition to comparing with the previous notions introduced in Section~\ref{sec:existing-fairness}, we compare with the utilitarian notion i.e., Problem~\eqref{prob:standard-influence-maximization}. We provide formal proofs for each entry of Table~\ref{tab:welfare-exisitng-notions} in  \ifarxiv Appendix~\ref{sec:table}. \else the full version.\fi

We observe that none of the previously defined notions of fairness for influence maximization satisfies all of our principles and each existing notion violates 
at least 3 out of 6 principles. 
We point out that exact DP is the only notion that satisfies the utility gap reduction. However, this comes at a cost as enforcing exact DP may result in significant reduction in total utility in the fair solution compared to the optimal unconstrained solution~\cite{corbett2017algorithmic}.
\begin{figure*}[ht!]
\centering
\begin{subfigure}
\centering
\includegraphics[width = 0.35\linewidth]{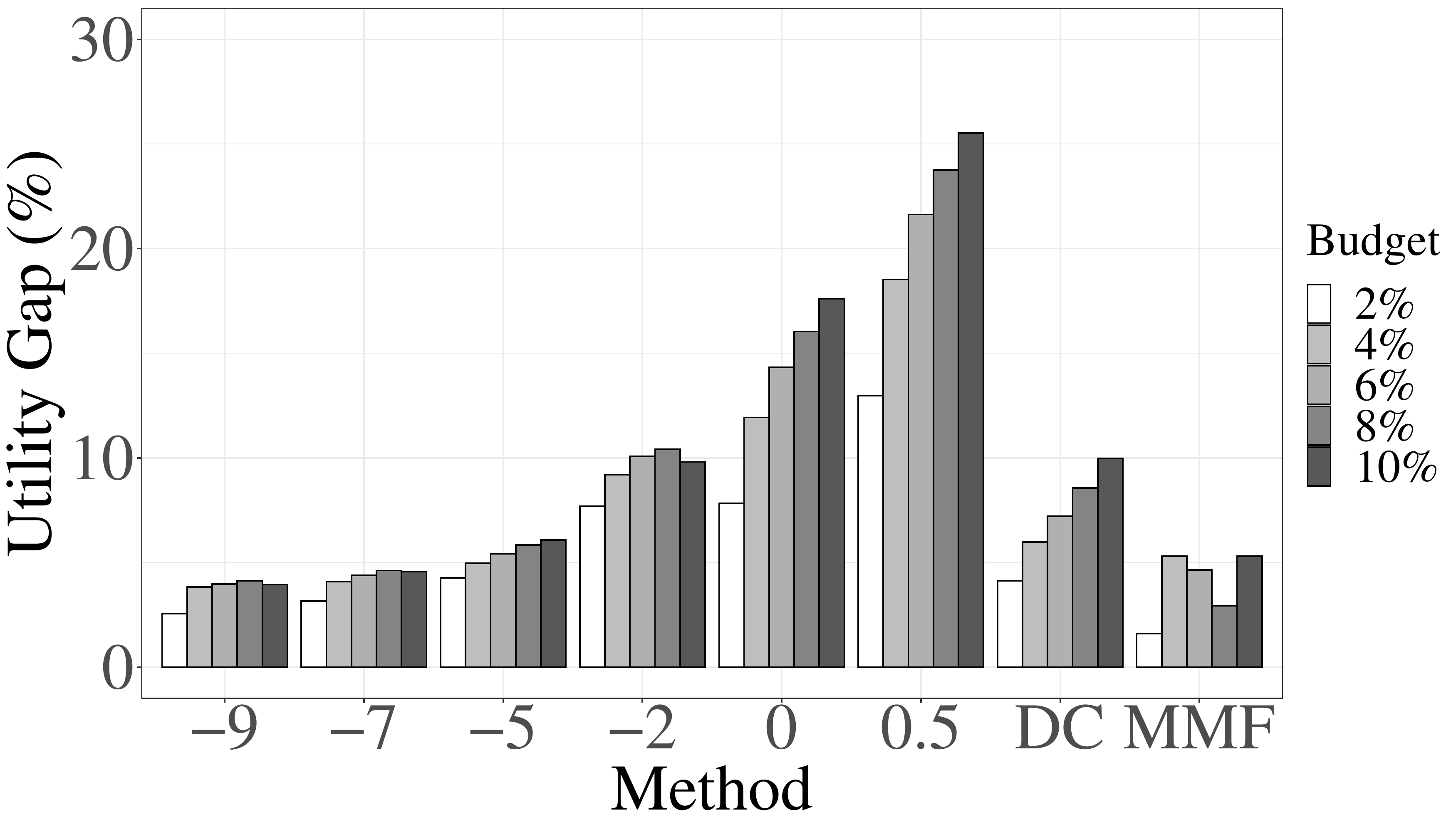}
\end{subfigure}%
\begin{subfigure}
\centering
\includegraphics[width = 0.35\textwidth]{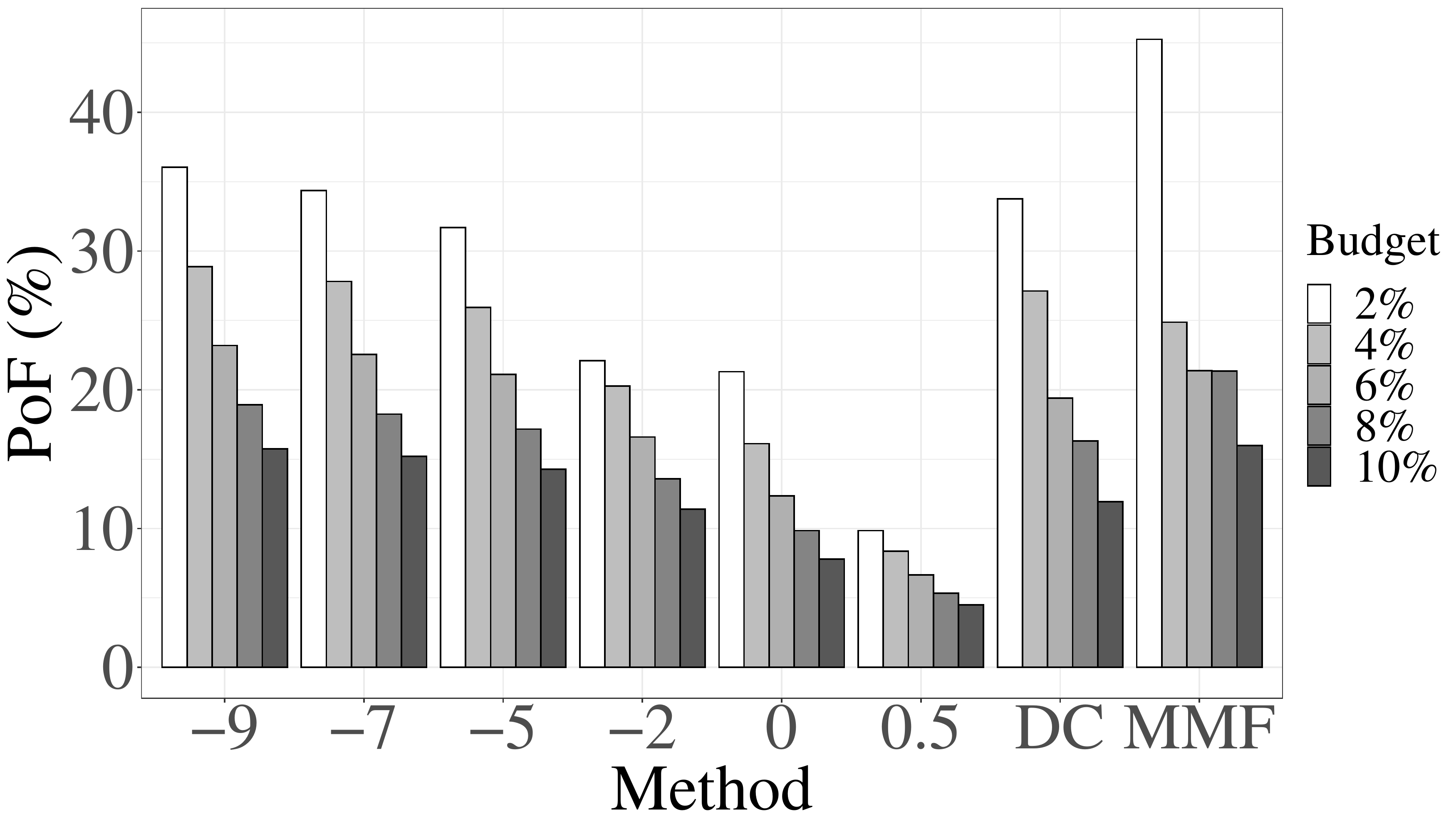}
\end{subfigure}%
\caption{Left and right panels: utility gap and PoF for different $\k$ and $\alpha$ values for our framework and baselines.}
\label{fig:Sitka-all}
\end{figure*}
\begin{figure}
\centering
\includegraphics[width = 0.34\textwidth]{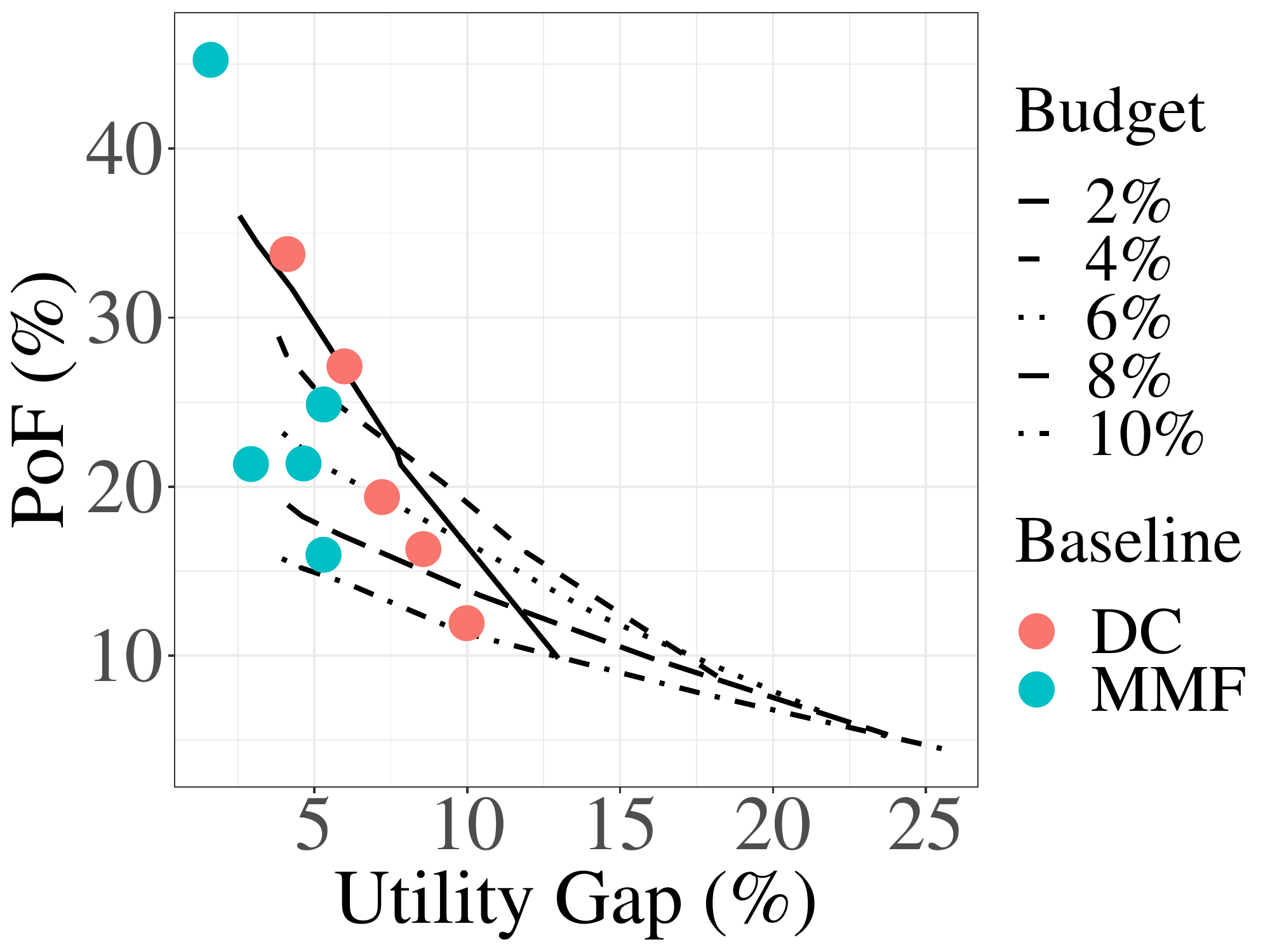}
\caption{PoF vs. utility gap trade-off curves. Each line corresponds to a different budget $\k$ across different $\alpha$ values.}
\label{fig:trade}
\end{figure}
\section{Computational Results}
\label{sec:exp}
We evaluate our approach in terms of both the total utility or spread of influence (to account for efficiency) and utility gap (to account for fairness). We show by changing the inequality aversion parameter, we can effectively trade-off efficiency with fairness.  As baselines, we compare with DC and MMF. To the best our knowledge, there is no prior work that handles DP constraints over the utilities. We follow the approach of~\citet{TsangWRTZ19-groupfairness} for both problems and view these problems as a multi-objective submodular optimization with utility of each each community being a separate objective. They propose an algorithm and implementation with asymptotic $(1-1/e)$ 
approximation guarantee which we also utilize here.  
We use \emph{Price of Fairness} (PoF), defined as the percentage loss in the total influence spread as a measure of efficiency. Precisely, $\text{\rm{PoF}} := 1 - \text{\rm{OPT}}^{\text{\rm{fair}}}/\text{\rm{OPT}}^{\text{\rm{IM}}}$ in
which $\text{\rm{OPT}}^{\text{\rm{fair}}}$ and $\text{\rm{OPT}}^{\text{\rm{IM}}}$ are the the total influence spread, with and without fairness. Hence PoF$\in[0,1]$ and smaller values are more desirable. The normalization in  PoF allows for a meaningful comparison between networks with different sizes and budgets as well as between different notions of fairness. In the PoF calculations, we utilize the generic greedy algorithm~\cite{KempeKT03} to compute $\text{\rm{OPT}}^{\text{\rm{IM}}}$.  To account for fairness, we compare the solutions in terms of the utility gap. Analogous measures are widely used in fairness literature~\cite{HardtPS16-EOPP} and more recently in graph-based problems~\cite{FishBBFSV19,stoica2020seeding}. We also note that our framework ranks solutions based on their welfare and does not directly optimize utility gap, as such our evaluation metric of fairness does not favor any particular approach.

We perform experiments on both synthetic and real networks. We study two applications: community preparedness against landslide incidents and suicide prevention among homeless youth. We discuss the latter in \ifarxiv Appendix~\ref{sec:omitted-exp}. \else the full version. \fi In the synthetic experiments, we use the \emph{stochastic block model} (SBM) networks, a widely used model for networks with community structure~\cite{FienbergS81}. In SBM networks, vertices 
are partitioned into disjoint communities. Within each community $\c$, an edge between two vertices is present independently with probability $q_\c$. Between any two vertices in communities $\c$ and  $\c'$, an edge exists independently with probability $q_{\c\c'}$ and typically 
$q_\c > q_{\c\c'}$ to capture homophily~\cite{mcpherson2001birds}. SBM captures the community structure of social networks~\cite{girvan2002community}. We report the average results over 20 random instances and set $\p=0.25$ in all experiments.

\noindent\textbf{Landslide Risk Management in Sitka, Alaska.  }
Sitka, Alaska is subject to frequent landslide incidents. In order to improve communities' preparedness, an effective approach is to instruct people on how to protect themselves before and during landslide incidents. Sitka has a population of more than 8000 and instructing everyone is not feasible. Our goal is to select a limited set of individuals as peer-leaders to spread information to the rest of the city. The Sitka population is diverse including different age groups, political views, seasonal and stable residents where each person can belong to multiple groups. These groups differ in their degree of connectedness. This makes it harder for some groups to receive the intended information and also impacts the cost of imposing fairness. We investigate these trade-offs.

Since collecting the social network data for the entire city is cumbersome, we assume a SBM network and use in-person semi-structured interview data from 2018-2020 with members of Sitka to estimate the SBM  parameters. Using the interview responses in conjunction with the voter lists, we identified 5940 individuals belonging to 16 distinct communities based on the intersection of age groups, political views, arrival time to Sitka (to distinguish between stable and transient individuals). 
The size of the communities range from 112 (stable, democrat and 65+ years of age) to 693 (republican, transient fishing community, age 30-65). See \ifarxiv Appendix~\ref{sec:omitted-exp} \else the full version \fi for details on the estimation of network parameters.

Figure~\ref{fig:Sitka-all} summarizes results across different budget values $\k$ ranging from $2\%$ to $10\%$ of the network size $\n$ for our framework (different $\alpha$ values) as well as the baselines. In the left panel, we observe that as $\alpha$ decreases, our welfare-based framework further reduces the utility gap, achieving lower gap than DC and competitive gap as MMF. As we noted in Section~\ref{sec:fair-connection}, our framework recovers leximin (which has stronger guarantees than MMF) as $\alpha \rightarrow -\infty$, though we show experimentally that this is achieved with moderate values of $\alpha$. Overall, utility gap shows an increasing trend with budget, however the sensitivity to budget decreases when more strict fairness requirements are in place, e.g. in MMF and $\alpha = -9.0$.
From the right panel, PoF varies significantly across different approaches and budget values surpassing $40\%$ for MMF. This is due to the stringent requirement of MMF to raise the utility of the worst-off as much as possible. Same holds true for lower values of $\alpha$ as they exhibit higher aversion to inequality. The results also indicate that PoF decreases as $\k$ grows which captures the intuition that fairness becomes less costly when resources are in greater supply. Resource scarcity is true in many practical applications, including the landslide risk management domain which makes it crucial for decision-makers to be able to study different fairness-efficiency trade-offs to come up with the most effective plan. Figure~\ref{fig:trade} depicts such trade-off curves where each line corresponds to a different budget value across the range of $\alpha$. Previous work only allows a decision-maker to choose among a very limited set of fairness notions regardless of the application requirements. Here, we show that our framework allows one to choose
$\alpha$ to meaningfully study the PoF-utility gap trade-offs. For example, given a fixed budget and a tolerance on utility gap, one can choose an $\alpha$ with the lowest PoF. 
We now investigate the effect of relative connectedness. We study the effect of relative community size in \ifarxiv Appendix~\ref{sec:exp-relative}. \else the full version. \fi

\begin{figure}[ht!]
\centering
\begin{subfigure}
\centering
\includegraphics[width = 0.32\textwidth]{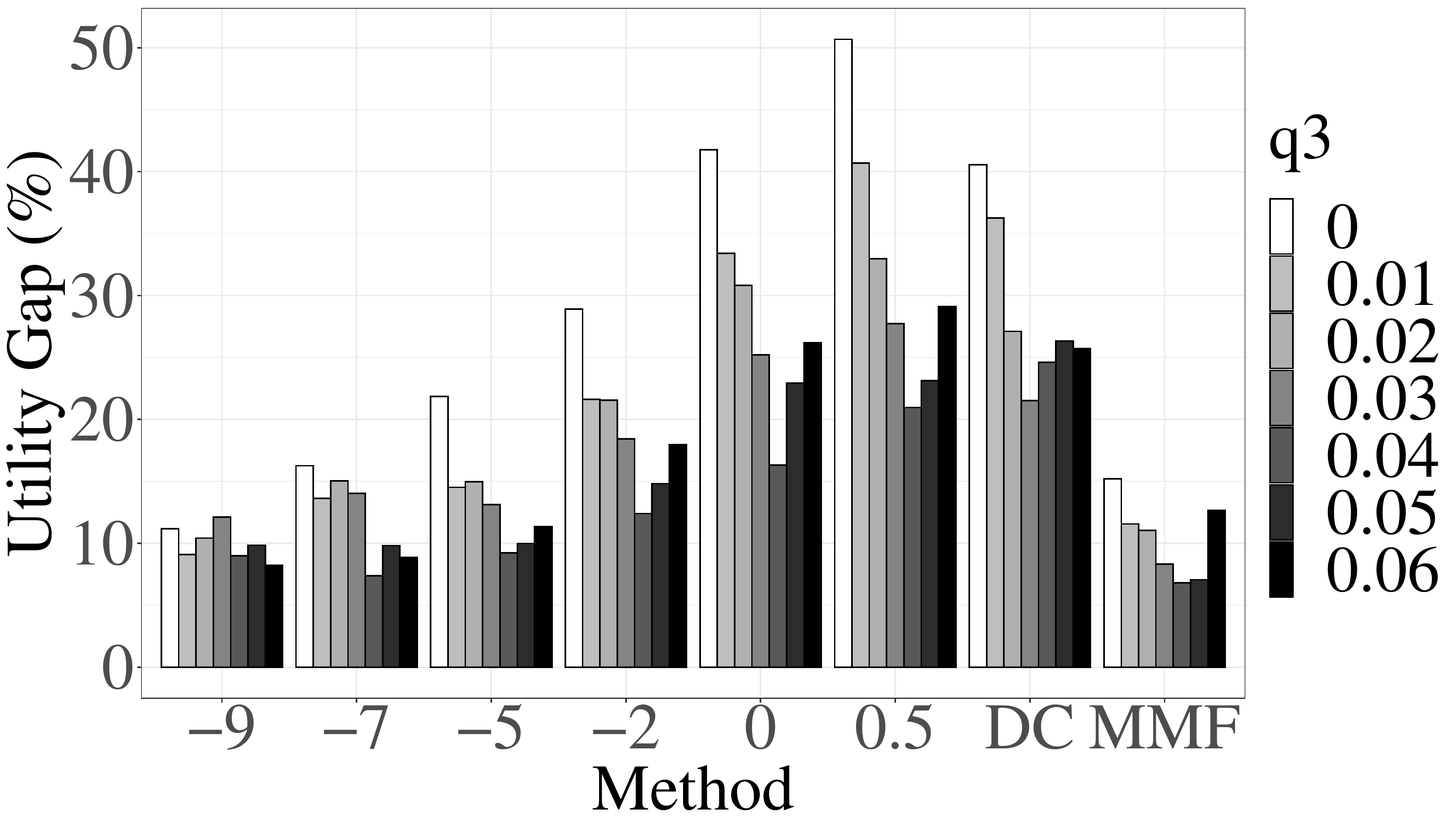}
\end{subfigure}%
\begin{subfigure}
\centering
\includegraphics[width = 0.32\textwidth]{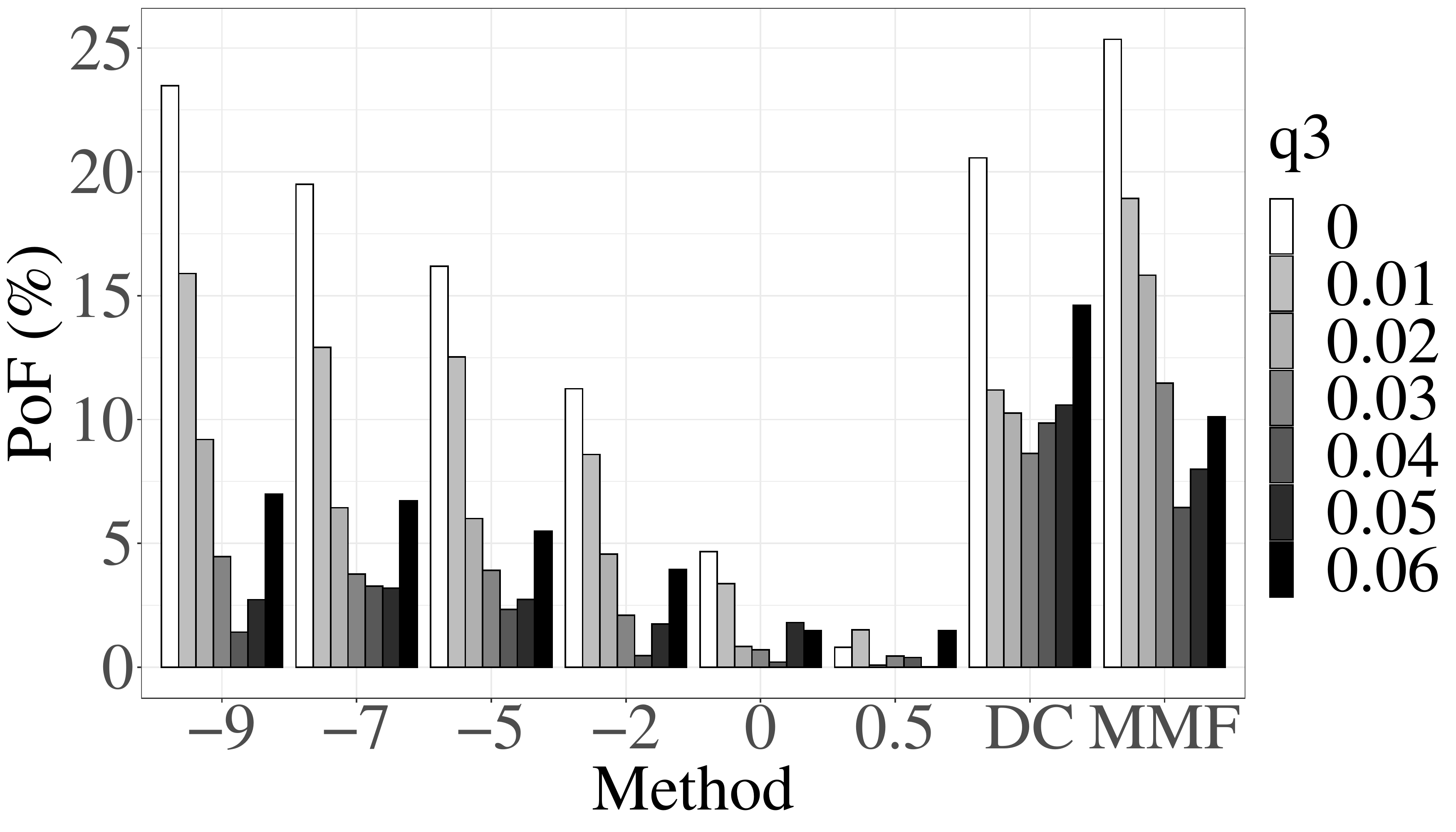}
\end{subfigure}
\caption{Utility gap and PoF for various levels $q_3$. All results are compared across different values of $\alpha$ and the baselines.}
\label{fig:synthetic-all-connect}
\end{figure}
\noindent\textbf{Relative Connectedness. } We sample SBM networks consisting of 3 communities each of size 100 where communities differ in their degree of 
connectedness. We set $q_1 = 0.06, q_2 = 0.03, q_3 = 0.0$ to obtain three communities with high, mid and low relative connectedness. We choose these values to reflect asymmetry in the structure of different communities which mirrors real world scenarios since not every community is 
equally connected. We set  between-community edge probabilities $q_{\c\c'}$ to 0.005 for all $\c$ and $\c'$ and $\k = 0.1 \n$. We gradually increase $q_3$ from 0.0 to 0.06. Results are summarized in Figure~\ref{fig:synthetic-all-connect}, where each group of bars correspond to a different approach. We observe as $q_3$ increases utility gap and PoF decrease until they reach a minimum around at around $q_3 = 0.03$. From this point, the trend reverses. This U-shaped behavior is due to structural changes in network. 
More precisely, for $q_3 < 0.03$ we are in the high-mid-low connectedness regime for the three groups, where the third community receives the minimum utility. 
As a result, as $q_3$ increases it becomes more favorable to choose more influencer vertices in this community which in turn reduces the utility gap. For $q_3 > 0.03$, the second community will be become the new worst-off community due its lowest connectedness. Hence, further increase in $q_3$ causes more separation in connectedness and we see previous behavior in reverse. 
Thus, by further increasing $q_3$, communities 1 and 3 receive more and more influencer vertices. This behavior translates to PoF as the relative connectedness of communities impacts how \emph{hard} it is to achieve a desired level of utility gap. Finally, we see that the U-shaped behavior is skewed, i.e., we observe higher gap and PoF in lower range of $q_3$ which is due to higher gap in connectedness of communities.

We can also compare the effect of relative connectedness and community size (see \ifarxiv Appendix~\ref{sec:exp-relative}. \else the full version.\fi) We observe that connectedness has a more significant impact on PoF (up to 25\%) compared to community size (less than 4\%). In other words, when communities are structurally different it is more costly to impose fairness. This is an insightful result given that in different applications we may encounter different populations with structurally different networks. Utility gap on the other hand is affected by both size and connectedness.

Finally while our theory indicates that in the network setting, no welfare function can satisfy all principles including utility gap reduction over all instances of the influence maximization, we observe that our class of welfare functions satisfies all of the desiderata on the class of networks that we empirically study. Our theoretical results showed this for a special case of networks with disconnected communities. In particular, we see higher PoF is accompanied by lower utility gap which complies with utility gap reduction principle.

\newpage\clearpage\clearpage
\section*{Ethics Statement and Broader Impact}
As the empirical evidence highlighting ethical side effects of algorithmic decision-making is growing~\cite{propublica, hiring}, the nascent field of algorithmic fairness has also witnessed a significant growth. 
It is well-established by this point that there is no universally agreed-upon notion of fairness, as fairness concerns vary from one domain to another~\cite{Narayanan18, BerkHJJKMNR17}. The need for different fairness notions can also be explained by theoretical studies that show that different fairness definitions are often in conflict with each other~\cite{KleinbergMR17, Chouldechova17, FriedlerSV16}. To this end, most of the literature on algorithmic fairness proposes different fairness notions motivated by different ethical concerns. A major drawback of this approach is the difficulty of comparing these methods against each other in a systematic manner to choose an appropriate notion for the domain of interest. Instead of following this trend, we propose a unifying framework controlled by a single parameter that can be used by a decision-maker to systematically compare different fairness measures which typically result in different (and possibly also problem-dependent) trade-offs. Our framework also accounts for the social network structure while designing fairness notions -- a consideration that is mainly overlooked in the past. Given these two contributions, it is perceivable that our approach can be used in many of the public health interventions such as suicide, HIV or Tuberculous prevention that rely on social networks. This way, the decisions-makers can compare a menu of fairness-utility trade-offs proposed by our approach and decide which one of these trade-offs are more desirable without a need to understand the underlying mathematical details that are used in deriving these trade-offs. 

There are crucial considerations when deploying our system in practice. First, cardinal welfare is one particular way of formalizing fairness considerations. This by no means implies that other approaches for fairness e.g. equality enforcing interventions should be completely ignored. Second, we have assumed that the decision-maker has the full knowledge of the network as well as possibly protected attributes of the individuals which can be used to define communities. Third, while our experimental evaluation is based on utilizing a greedy algorithm, it is conceivable that this greedy approximation can create complications by imposing undesirable biases that we have not accounted for. Intuitively (and as we have seen in our experiments) the extreme of inequality aversion ($\alpha\rightarrow -\infty$) can be used as a proxy for pure equality. However, the last two concerns require more care and we leave the study of such questions as future work.
\vspace{-5mm}
\paragraph{Acknowledgement} We would like to thank David Gray Grant, Matthew Joseph, Andrew Perrault and Bryan Wilder for helpful discussions about this work. This work is supported in part by the Smart \& Connected Communities program of the National Science Foundation under NSF award No. 1831770, the US Army Research Office under grant number W911NF1710445 and Center for Research on Computation and Society (CRCS) at the Harvard John A. Paulson School of Engineering and Applied Sciences.
\bibliography{paper}

\newpage\clearpage
\appendix
\section{Additional Related Work}
\label{sec:related-work}

Artificial Intelligence and machine learning algorithms hold great promise in addressing many pressing societal problems. These problems often pose complex ethical and fairness issues which need to be addressed before the algorithms can be deployed in the real world.
The nascent field of algorithmic fairness has emerged to address these fairness concerns. 
To this end, different notions of fairness are defined based on 
one or more \emph{sensitive attributes} such as age, race or gender. For example, in the classification and regression setting, these notions mainly aim at equalizing a statistical quantity across different 
communities or populations~\cite{HardtPS16-EOPP,zafar2017fairness}. 
While surveying the entirety of this field is out of our 
scope (see e.g.,~\cite{BerkHJKR18} for a recent survey), we point out that there is a wide range of fairness notions defined across different settings 
and it has been shown that the right notion is problem dependent~\cite{BerkHJJKMNR17,Narayanan18} and also different notions of fairness 
can be incompatible with each other~\cite{KleinbergMR17}. Thus, care must be taken when we employ these notions of fairness across different applications.

Motivated by the importance of fairness when conducting interventions in social initiatives~\cite{kube2019allocating}, fair influence maximization 
has received a lot of attention recently ~\cite{TsangWRTZ19-groupfairness,AliBCMGS19, RahmattalabiVFRWYT19, FishBBFSV19}.  These works 
have incorporated fairness directly into the influence maximization framework by (1) relying on either Rawlsian theory of 
justice~\cite{Rawls09, RahmattalabiVFRWYT19}, (2) game theoretic principles~\cite{TsangWRTZ19-groupfairness} or (3) equality 
based notions~\cite{AliBCMGS19, stoica2020seeding}. We will discuss the first two approaches in more details in 
Sections~\ref{sec:existing-fairness}~and~\ref{sec:welfare}, as well as in our experimental section. 
Equality based approaches strive for equal outcomes across different communities. In general, strict equality is hard to achieve and may lead to wastage of resources. This is amplified in influence maximization as different communities have different capacities in being influenced (e.g., marginalized communities are hard to reach). 

\citet{FishBBFSV19} investigate the notion of information access gap, where they propose maximizing the minimum probability 
that an individual is being influenced/informed to constrain this gap. As a result they study fairness at an individual level while we 
study fairness at the group level. Also, their notion of access gap is limited to the gap in a bipartition of the network which is in 
principle different from utility gap that we study which accommodates arbitrary number of protected groups. 

Similar to our work,~\citet{AliBCMGS19} also  study utility gap. They propose an optimization model that directly penalizes utility 
gap which they solve via a surrogate objective function.  Their surrogate functions are in the form of a sum of concave functions of the 
group utilities which are aggregated with arbitrary weights. Unlike their work, our approach takes an axiomatic approach with strong 
theoretical justifications and it does not allow for arbitrary concave functions and weights as they violate the welfare principles.

There has also been a long line of work considering fairness in resource allocation problems 
(see e.g.,~\cite{BertsimasFT11-pricefair,kleinbergRT1999-fairness,BonaldL2001-FairnessInternet,biswas2018fair}). More recently, group 
fairness has been studied in the context of resource allocation problems~\cite{conitzer2019group, elzayn2019fair,benabbou2018diversity} 
and specifically in graph covering problems~\cite{RahmattalabiVFRWYT19}.
In resource allocation setting, \emph{maximin fairness} and \emph{proportional fairness} are widely adopted fairness notions. Proportional 
fairness is a notion introduced for bandwidth allocation~\cite{BonaldL2001-FairnessInternet}. An allocation is proportionally fair if the sum 
of percentage-wise changes in the utilities of all groups cannot be improved with another allocation. In classical resource allocation problems, 
each individual or group has a utility function that is independent of the utilities of others individuals or groups. However, this is not the case in 
influence maximization due to the underlying social network structure i.e., the between-community edges which makes our problem distinct from the classical resource allocation problems. We note that, while in the bandwidth allocation setting there is also a network structure, the utility of each vertex is still independent of the other vertices and is only a function of the amount of resources that the 
vertex receives.

Finally, \citet{Heidari2018fairness} have recently proposed to study inequality aversion and welfare through cardinal welfare theory in the context of regression problems. Their main contribution is to use this theory to draw attention to other fairness considerations beyond equality. However, the classical social welfare theory, does not readily extend to our setting due to dependencies induced by the between-community connections. Indeed, extending those principles is a contribution of our work. 
\section{Omitted Proofs from Section~\ref{sec:group-fair-principles}}
\label{sec:omitted}
\begin{proof}[Proof of Proposition~\ref{pro:trans}]
Let $F:\reals^{N}\to\reals$ be an additive function in the form $F(\bu) = \sum^{N}_{i=1} f(\u_i)$ where $f:\reals \to \reals$ is a monotonically increasing and strictly concave function. We are focusing on group fairness where the utility of each individual is given by the average utility of their community. Hence, we can rewrite $F(\bu) = \sum_{\c \in \C} \n_\c f(\u_\c)$.
Let $\bu = \bu(\A)$ and $\bu' = \bu(\A')$ denote the utility vectors corresponding to neighboring solutions $\A$ and $\A'$, respectively. Suppose $\bu$ and $\bu'$ are sorted in ascending order and for all $\c\in\C$, index $\c$ in both vectors corresponds to the same community, i.e., after the transfer the ordering of the utilities has not changed.

Furthermore, assume 
$\Sigma_{ \kappa \in \C : \kappa \leq \c} \n_{\kappa} (\u_{\kappa} - \u'_{\kappa}) \geq 0$, $\forall \c \in \C$  and 
$\u_{\c} > \u'_{\c}$   for some  $\c \in \C$. Clearly $\bu$ and $\bu'$ satisfy the assumptions of the influence transfer principle. 
We need to show that $\Sigma_{\c\in\C} \n_\c f(\u_{\c}) > \Sigma_{\c\in\C} \n_\c f(\u'_{\c})$ or $\Sigma_{\c\in\C} \n_\c \left(f(\u_{\c})- f(\u'_{\c})\right) > 0$.

The proof is by induction. We iteratively sweep the vectors $\bu$ and $\bu'$ from the smallest index to the largest and show that for any  $\kappa\in\C$, $\Sigma_{\c\leq \kappa} \n_\c \left(f(\u_{\c})-f(\u'_{\c})\right) \geq 0$ with inequality becoming strict for at least one 
$\kappa$. To do so we repeatedly use a property of strictly concave functions known as decreasing marginal returns. According to this 
property $f(x+\delta_x)-f(x) >  f(y+\delta_y)-f(y)$ for $x < y$ and $\delta_x \geq \delta_y > 0.$

\begin{figure*}[ht!]
\centering
\begin{tikzpicture}
[scale=0.55, red node/.style={circle,fill=gray, draw=black}, blue node/.style = {rectangle, fill = gray, draw=black, scale = 1},
green node/.style = {diamond, fill = gray, draw=black, scale = 0.8}
, scale=1.0, every
edge/.style={->,-> = latex'},
]
\tikzset{scale=0.8, edge/.style = {-,> = latex'}}
\tikzset{scale=0.8, directededge/.style = {->,> = latex'}}

\node[red node] (1) at  (1.41, 0){};
\node[red node] (6) at  (1.41, 1.41){};
\node[red node] (9) at  (1.41, -1.41){};
\node[red node] (10) at  (5, 0){};
\node[red node] (11) at  (3, 0){};
\node[red node] (12) at  (3.59, 1.41){};
\node[red node] (13) at  (5, 2){};
\node[red node] (14) at  (6.41, 1.41){};
\node[red node] (15) at  (7, 0){};
\node[red node] (16) at  (6.41, -1.41){};
\node[red node] (17) at  (5,-2){};
\node[red node] (18) at  (3.59, -1.41){};

\draw[edge] (10) to (11);
\draw[edge] (10) to (12);
\draw[edge] (10) to (13);
\draw[edge] (14) to (10);
\draw[edge] (15) to (10);
\draw[edge] (16) to (10);
\draw[edge] (17) to (10);
\draw[edge] (18) to (10);

\node[blue node] (31) at  (9.41, 0){};
\node[blue node] (32) at  (9.41, 1.41){};
\node[blue node] (33) at  (9.41, -1.41){};
\node[blue node] (19) at  (11.41, 0){};
\node[blue node] (20) at  (11.41, 1.41){};
\node[blue node] (21) at  (11.41, -1.41){};
\node[blue node] (22) at  (15, 0){};
\node[blue node] (23) at  (13, 0){};
\node[blue node] (24) at  (13.59, 1.41){};
\node[blue node] (25) at  (15, 2){};
\node[blue node] (26) at  (16.41, 1.41){};
\node[blue node] (27) at  (17, 0){};
\node[blue node] (28) at  (16.41, -1.41){};
\draw[edge] (22) to (23);
\draw[edge] (22) to (24);
\draw[edge] (22) to (25);
\draw[edge] (22) to (26);
\draw[edge] (22) to (27);
\draw[edge] (22) to (28);

\node[blue node] (40) at  (21, 0){};
\node[blue node] (41) at  (19, 0){};
\node[blue node] (42) at  (19.59, 1.41){};
\node[blue node] (43) at  (21, 2){};
\node[blue node] (44) at  (22.41, 1.41){};
\draw[edge] (40) to (41);
\draw[edge] (40) to (42);
\draw[edge] (40) to (43);
\draw[edge] (40) to (44);

\node[green node] (47) at  (24.41, 0){};
\node[green node] (48) at  (24.41, 1.41){};
\node[green node] (49) at  (24.41, -1.41){};
\node[green node] (50) at  (26.41, 0){};
\node[green node] (51) at  (26.41, 1.41){};
\node[green node] (52) at  (26.41, -1.41){};
\node[green node] (53) at  (28.41, 0){};
\node[green node] (54) at  (28.41, 1.41){};
\node[green node] (55) at  (28.41, -1.41){};
\node[green node] (56) at  (32, 0){};
\node[green node] (57) at  (30, 0){};
\node[green node] (58) at  (30.59, 1.41){};
\node[green node] (59) at  (32, 2){};
\node[green node] (60) at  (33.41, 1.41){};
\node[green node] (67) at  (34, 0){};
\draw[edge] (56) to (57);
\draw[edge] (56) to (58);
\draw[edge] (56) to (59);
\draw[edge] (56) to (60);
\draw[edge] (56) to (67);

\node[red node] (61) at  (38, 0){};
\node[red node] (62) at  (36, 0){};
\node[red node] (63) at  (36.59, 1.41){};
\node[green node] (64) at  (38, 2){};
\node[red node] (65) at  (38, 0){};
\node[green node] (66) at  (39.41, 1.41){};

\draw[edge] (61) to (62);
\draw[edge] (61) to (63);
\draw[edge] (61) to (64);
\draw[edge] (61) to (65);
\draw[edge] (61) to (66);
\end{tikzpicture}
\caption{An illustration for the graph used in the proof of Proposition~\ref{pro:gap} without the correct scaling. There are 
three communities (circle, square and diamond) and they all have size 100. The circle community consists of an ``all-circle" 
star structure with 80 vertices, 14 isolated vertices and a mixed star structure (shared with the diamond community) with 6 
circle vertices. The square community consists of two ``all-square" star structures with sizes 60 and 10 plus a set of 30 isolated 
vertices. The diamond community consists of an ``all-diamond" star structure with 30 vertices, 66 isolated vertices and a mixed 
star structure (shared with the circle community) with 4 diamond vertices. }
\label{fig:pop-sep}
\end{figure*}
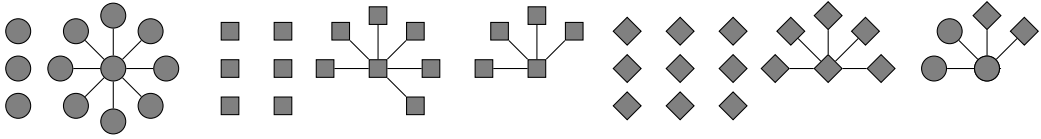

More specifically, in our inductive step, we keep track of a  ``decrement budget'' which we denote by $\Delta$. Intuitively if we can show 
that  $\Sigma_{\c\leq \kappa} \n_\c\left(f(\u_{\c})-f(\u'_{\c})\right) > 0$ with budget $\Delta$ for some $\kappa$, we can then use the 
decreasing marginal return property along with the assumption that $\bu'$ is sorted to show that as long as 
$\n_{\kappa+1}\left(\u'_{\kappa+1} - \u_{\kappa+1}\right) \leq \Delta$ it is the case that $\Sigma_{\c\leq \kappa+1} \n_\c\left( f(\u_{\c})-f(\u'_{\c})\right) > 0$. 
After each round we update the $\Delta$ and move on to the next element in the utility vectors.

Formally, let $\Delta = 0$ to start at the begining of this inductive process. After visiting the $\c$th community, we simply update 
$\Delta$ by $\Delta \leftarrow \Delta + \n_\c\left(\u_{\c}-\u'_{\c}\right)$. By the assumption of the transfer principle $\Delta$ is 
non-negative at all points of this iterative process and is strictly positive at some point during the process. Observe that 
$f(\u_1) \geq f(\u'_1)$ since $\u_1\geq \u'_1$ by the assumption of the transfer principle and monotonicity of $f$. We can use this as the base case. 
Since $\bu$ and $\bu'$ are sorted, given that $\Delta$ is non-negative,
the fact that $f$ is strictly concave (so that the decreasing marginal 
return property can be used) immediately implies that $\Sigma_{\c\leq \kappa} \n_\c\left(f(\u_{\c})-f(\u'_{\c})\right) \geq 0$ at any 
iteration $\kappa$ of the process. The inequality becomes strict for some $\kappa$ given the assumption of the transfer principle. 
This proves the claim.
\end{proof}

\begin{proof}[Proof of Proposition~\ref{pro:gap}]
Figure~\ref{fig:pop-sep} is an illustration of the graph that is used in the proof to witness the statement. We set $\p=1$ 
(deterministic spread) and number of initial seeds $\k=4$. Consider two choices of influencer vertices $\A$ and $\A'$. Let $\A$ denote the choice 
that consists of the center of all-star structures that consist of a single community. Let $\A'$ denote the solution that is identical 
to $\A$ with the sole difference that only the center of one of the all-square structures is chosen and the last seed is selected to be 
the center of the star structure that is the mix of circle and diamond communities. Clearly these two solutions are neighboring. 
The average utilities for these solutions are $(\text{diamond = } 0.3, \text{square = }0.7, \text{circle = }0.8)$ in $\bu$ and 
$ (\text{diamond = }0.34, \text{square = }0.6, \text{circle = }0.86)$ in $\bu'$, respectively. Both solutions correspond to a total 
utility of 180  but the utility gap is $\Delta(\bu)=0.5$ for $\bu$ as opposed to the utility gap of $\Delta(\bu')=0.52$ for $\bu'$. So a welfare 
function that obeys the utility gap reduction should prefer $\bu$ over $\bu'$.

\begin{figure}[ht!]
\centering
\begin{subfigure}
\centering
\includegraphics[width=2.6in]{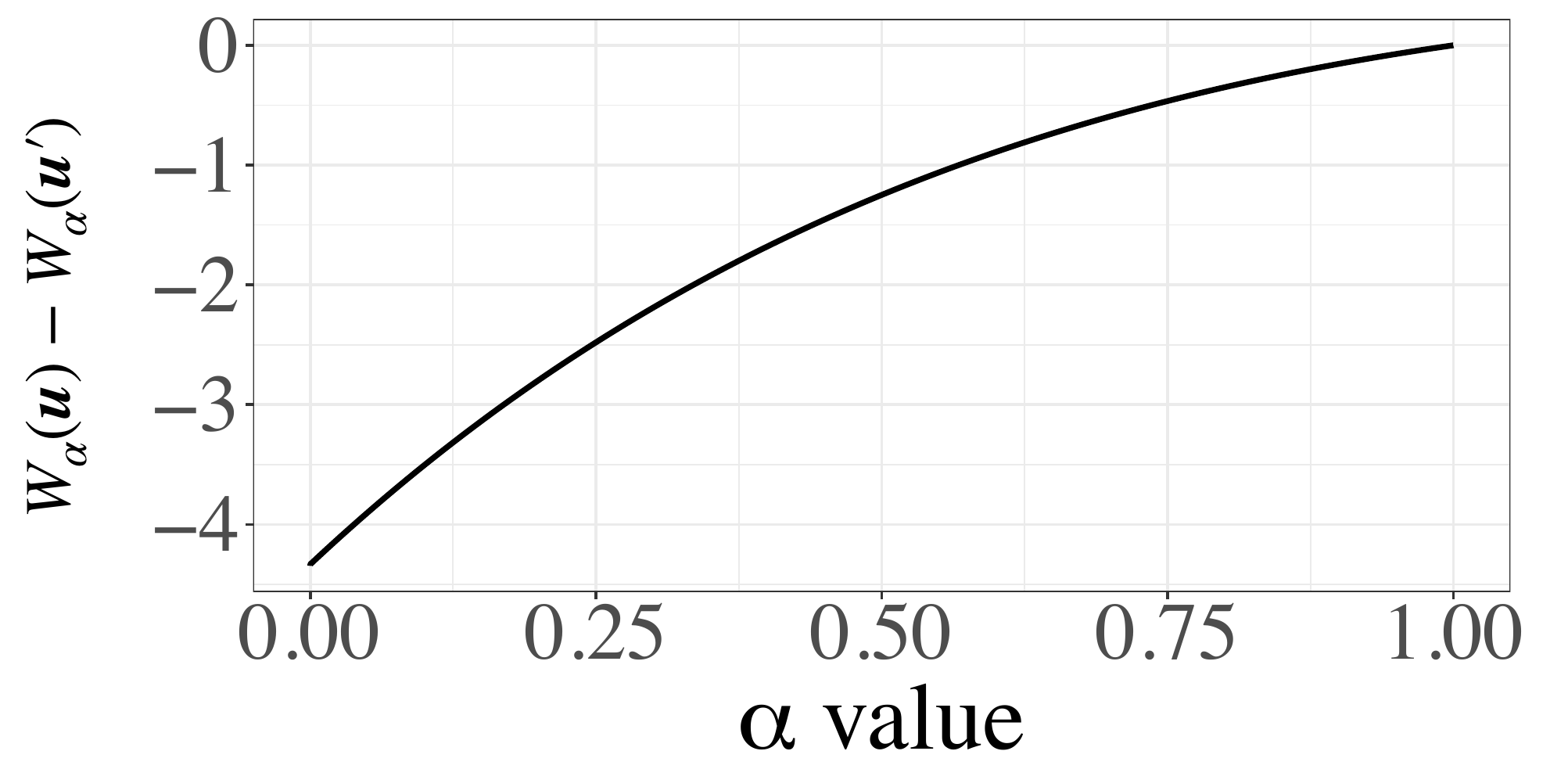}
\end{subfigure}
\quad
\begin{subfigure}
\centering
\includegraphics[width=2.6in]{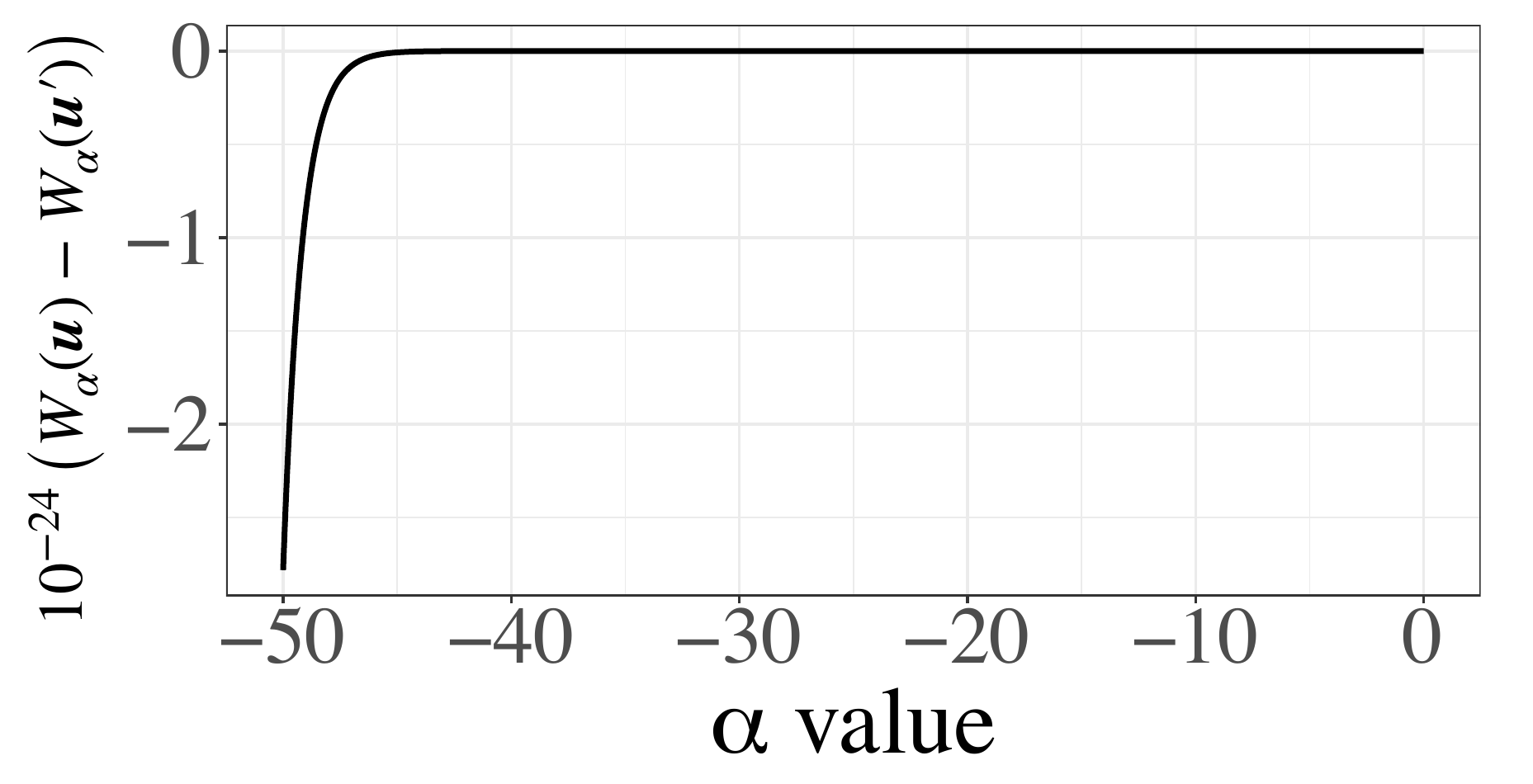}
\end{subfigure}
\caption{The difference of  $\wel_\alpha(\bu)-\wel_\alpha(\bu')$ on the vertical axis versus $\alpha$ on the horizontal axis for 
different welfare functions (this difference is scaled by a factor of $10^{-24}$ on the bottom panel). Top panel: 
$\wel_\alpha(\bu) = \Sigma_{\c\in\C} \n_\c\u_{\c}^\alpha/\alpha $ for $\alpha \in (0,1)$; bottom panel: $\wel_\alpha(\bu) = 
\Sigma_{\c\in\C} \n_\c\u_{\c}^\alpha/\alpha $ for $\alpha < 0$. \label{fig:pop-sep2}}
\end{figure}

We now show that no welfare function that satisfies the first 5 principles will prefer $\bu$ over $\bu'$. Recall that such welfare 
functions are in the form $\wel_\alpha(\bu) = \Sigma_{\c\in\C} \n_\c\u_{\c}^\alpha/\alpha$ for $\alpha < 1$ and $\alpha \neq 0$, 
$\wel_\alpha(\bu) = \Sigma_{\c\in\C} \n_\c\log(\u_{\c})$ for $\alpha=0$.
We verify this claim numerically. In particular Figure~\ref{fig:pop-sep2} plots the 
difference of $\wel_\alpha(\bu)-\wel_\alpha(\bu')$ for $\wel_\alpha(\bu) = \Sigma_{\c\in\C} \n_\c\u_{\c}^\alpha/\alpha$ when $\alpha \in (0,1)$ (top panel) and $\alpha < 0$ (bottom panel). 
This difference is always negative so $\bu'$ is preferred by these welfare functions. For $\wel_\alpha(\bu) = \Sigma_{
\c\in\C} \n_\c\log(\u_{\c}), \wel_\alpha(\bu)-\wel_\alpha(\bu') \approx -4.3$.

We point out that the instance used in the proof (graph structure, probability of spread and the number of seeds) is designed with the sole purpose of simplifying the calculations of the utilities. It is possible to modify this instance to more complicated and realistic instances. 

\end{proof}

\begin{proof}[Proof of Proposition~\ref{pro:disjoint}]
Let $\A$ and $\A'$ denote two neighboring solutions with corresponding utility vectors $\bu = \bu(\A)$ and $\bu' = \bu(\A')$. 
Let $\bu$ denote any of the two utility vectors such that $\Sigma_{\c\in\C} \n_\c\u_{\c} \geq \Sigma_{\c\in\C} \n_\c\u'_{\c}$. 
Without loss of generality, we assume $\bu'$ is sorted in ascending order of the utilities and $\bu$ is permuted so that 
index $\c \in \C$ in both $\bu$ and $\bu'$ corresponds to the same community. This is because we assume that $\wel$ satisfies the 
symmetry principle due to which by permuting a utility vector the value of the welfare function does not change.
Let $\nu$ and $\kappa \in \C$ denote the communities whose utilities are changed between $\bu$ and $\bu'$, i.e., 
we assume $\nu$ and $\kappa$ are the two communities where taking influencer vertices from $\nu$ and giving them to $\kappa$ 
will transfer $\bu'$ into $\bu$. 

To satisfy the condition of the utility gap reduction principle, it should be the case that $\u'_\nu \geq \u'_\kappa$ (i.e., we transfer 
influencer vertices from the group with higher utility to a group with lower utility), otherwise after the transfer from $\bu'$ to $\bu$ the utility 
gap could not get smaller (i.e., $\Delta(\bu) \geq \Delta(\bu')$ in which case the utility gap reduction is not applicable). 

Assuming $\u'_\nu \geq \u'_\kappa$, if $\Delta(\bu) \geq \Delta(\bu')$, again the assumption of the utility gap reduction principle is 
not satisfied, hence the principle is not applicable and there is no need to study this case. Therefore, we further assume 
$\Delta(\bu) < \Delta(\bu')$. We would like to show in this case a welfare function $\wel$ that satisfies all the 5 other principles 
witnesses $\wel(\bu) > \wel(\bu')$.

By assumption 
$\Sigma_{\c\in\C} \n_\c\u_{\c} \geq \Sigma_{\c\in\C} \n_\c\u'_{\c}$. From this, it follows that:
\begin{flalign}
    & \displaystyle \sum_{\c \in \C} \n_\c \left(\u_{\c} - \u'_{\c}\right) \geq 0
    \\
    \Leftrightarrow & \displaystyle \n_\nu \left(\u_{\nu} - 
    \u'_{\nu} \right) + \n_{\kappa} \left(\u_{\kappa} - \u'_{\kappa} \right) \geq 0 \label{eq1}
    \\
    \Leftrightarrow & \displaystyle \sum_{y \in \C : y \leq x} \n_{y} (\u_{y} - \u'_{y}) \geq 0, \; \forall x \in \C, \label{eq2}
\end{flalign}

where both inequalities~\eqref{eq1} and~\eqref{eq2} follow directly from the fact that the utilities of all the other communities 
are the same in both $\bu$ and $\bu'$. Finally, since $\u_{\kappa} > \u'_{\kappa}$ (we are transferring influencer vertices to the community 
$\kappa$), we can apply the influence transfer principle to show that $\wel(\bu) >\wel(\bu')$ as claimed.
\end{proof}

\begin{proof}[Proof of Lemma~\ref{lem:wel-sub-mon}]
As we have shown earlier welfare functions $\wel_\alpha(\bu) = \Sigma_{\c\in\C} \n_\c\log(\u_{\c})$ for $\alpha=0$ and $\wel_\alpha(\bu) = \Sigma_{
\c\in\C} \n_\c\u_{\c}^\alpha/\alpha$ for $\alpha < 1, \alpha \neq 0$ satisfy all the first 5 principles. \citet{lin2011class} show that the composition of a non-decreasing concave function (in our case 
$\log(x), \alpha = 0$ or $x^\alpha/\alpha$ for $\alpha < 1, \alpha \neq 0$) and a non-decreasing submodular function 
(in our case $\u_\c(\A)$) is submodular. Since the sum of submodular functions is submodular, our proposed class of welfare functions is submodular. Our welfare functions also satisfy monotonicity. This is because $\u_\c(\A)$ is monotonically non-decreasing so its composition with another monotonically non-decreasing function  ($\log(x)$ for $\alpha = 0$ or $x^\alpha/\alpha$ for $\alpha < 1, \alpha \neq 0$) will be monotonically non-decreasing. Since our welfare functions are the sum of monotonically non-decreasing function they are also monotone.
\end{proof}
\section{Leximin Fairness and Social Welfare}
\label{sec:leximin}
In this section, we show that leximin fairness can be captured by our welfare maximizing framework. See~\cite{Heidari2018fairness} for more details.
\begin{proposition}
Welfare optimization is equivalent to the leximin fairness, i.e., there exists a constant $\alpha_0$, such for $\alpha < \alpha_0$, an optimal solution to the welfare maximization satisfies leximin fairness and vice versa. 
\end{proposition}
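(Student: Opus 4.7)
The plan is to show that for the finite set of achievable utility vectors, the welfare ordering induced by $\wel_\alpha$ converges to the leximin ordering as $\alpha \to -\infty$, and hence agrees with it for all sufficiently negative $\alpha$. First, I would observe that $\A^\star$ is finite, so the image $\mathcal{U} := \{\bu(\A) : \A \in \A^\star\}$ is a finite subset of $(0,1]^{|\C|}$ (we may restrict to vectors with strictly positive coordinates, since zero utilities make $\wel_\alpha$ undefined or $-\infty$ for $\alpha \leq 0$, and under leximin such vectors are strictly dominated as long as any feasible solution gives everyone positive utility; the boundary case can be handled separately). Associated to each $\bu \in \mathcal{U}$ is the sorted individual-utility vector $\tilde{\bu} \in \reals^{\n}$ obtained by listing each $\u_\c$ a total of $\n_\c$ times in ascending order; the leximin order compares these $\tilde{\bu}$ lexicographically, and $\wel_\alpha(\bu) = \sum_{i=1}^{\n} \tilde{\u}_i^{\alpha}/\alpha$.

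Next, I would prove the core pairwise claim: for any $\bu, \bu' \in \mathcal{U}$ with $\bu$ strictly leximin-preferred to $\bu'$, there exists $\alpha(\bu,\bu') < 0$ such that $\wel_\alpha(\bu) > \wel_\alpha(\bu')$ whenever $\alpha < \alpha(\bu,\bu')$. Let $k$ be the first index at which $\tilde{\bu}$ and $\tilde{\bu}'$ differ; by assumption $\tilde{\u}_k > \tilde{\u}'_k$. Writing
\begin{equation*}
\alpha \bigl( \wel_\alpha(\bu) - \wel_\alpha(\bu') \bigr) = \sum_{i \geq k} \bigl( \tilde{\u}_i^{\alpha} - (\tilde{\u}'_i)^{\alpha} \bigr),
\end{equation*}
and factoring out $(\tilde{\u}'_k)^{\alpha}$, the ratio of the $i=k$ term to any other term on the right-hand side equals $(\tilde{\u}'_k / x)^{\alpha}$ for some positive $x$ bounded away from zero. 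Because $\tilde{\u}'_k$ is the strict minimum among all utilities appearing with distinct values at indices $\geq k$ in $\tilde{\bu}'$, and strictly smaller than $\tilde{\u}_k$, every such ratio tends to $+\infty$ as $\alpha \to -\infty$. Hence the term $-(\tilde{\u}'_k)^{\alpha}$ eventually dominates the signed sum, making it negative; dividing by the negative $\alpha$ flips the sign, yielding $\wel_\alpha(\bu) > \wel_\alpha(\bu')$ for all $\alpha$ below some threshold.

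The final step is to set $\alpha_0 := \min \{ \alpha(\bu,\bu') : \bu, \bu' \in \mathcal{U}, \ \bu \succ_{\mathrm{lex}} \bu' \}$, a minimum over a finite set and hence attained. For any $\alpha < \alpha_0$, the welfare ordering on $\mathcal{U}$ refines (and is consistent with) the leximin ordering on those pairs that are strictly leximin-comparable. Any leximin-optimal utility vector therefore achieves at least as high welfare as any strictly leximin-dominated vector; conversely, any welfare-maximizer cannot be strictly leximin-dominated, for otherwise the dominating vector would have strictly higher welfare, a contradiction. Thus the sets of welfare-maximizers and leximin-optimal solutions coincide for $\alpha < \alpha_0$, which is the desired equivalence.

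The main obstacle is the dominance argument in the second paragraph: one must carefully argue that the single negative term $-(\tilde{\u}'_k)^{\alpha}$ overwhelms not only its positive counterpart $\tilde{\u}_k^{\alpha}$ but all subsequent terms in the sum, uniformly as $\alpha \to -\infty$. This is handled by noting that for $i > k$ the values $\tilde{\u}_i$ and $\tilde{\u}'_i$ are bounded below by $\tilde{\u}'_k$ (by sortedness) with at most finitely many equalities, so each other term is of order $O((\tilde{\u}'_k / c)^{\alpha})$ for some constant $c \geq \tilde{\u}'_k$, which is asymptotically negligible compared to $(\tilde{\u}'_k)^\alpha$ when there is a strict gap, and exactly cancels when the values coincide at that magnitude. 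A careful bookkeeping of the ``plateaus'' of equal utility values at index $k$ completes this step. Ties among welfare- or leximin-optimal solutions (vectors in $\mathcal{U}$ with identical sorted profiles) are handled automatically because such vectors have identical welfare.
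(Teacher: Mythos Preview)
Your proposal is correct and follows essentially the same route as the paper: both arguments boil down to the observation that, for $\alpha \to -\infty$, the sum $\sum_i \tilde{u}_i^{\alpha}$ is dominated by the term with the smallest base, so the welfare comparison between two vectors is eventually decided by the first coordinate (in sorted order) where they differ. The paper executes this by dividing through by $\min(\overline{u}_1,u_1(\A))^{\alpha}$ and taking limits inside a contradiction argument, whereas you factor out $(\tilde{u}'_k)^{\alpha}$ directly and then invoke finiteness of $\A^\star$ to pass from pairwise thresholds to a single uniform $\alpha_0$; this last step is a point the paper leaves implicit, so your framing is arguably cleaner.
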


\begin{proof}
Let $\overline\bu = (\overline\u_1,\dots,\overline{\u}_\n) \succcurlyeq \bu(\A) \; \forall \A \in \A^{\star}$, where ``$\succcurlyeq$'' is the lexicographic ordering sign and it indicates that $\overline\bu$ is a leximin fair solution (w.l.o.g. and with a slight abuse of notation, we assume that both $\overline{\bu}$ and $\bu(\A)$ are sorted in increasing order). We aim to show that $\exists \alpha_0 < 0$ such that for any $\alpha 
\leq \alpha_0$,  $\Sigma^{\n}_{i=1} \overline{\u}_{i}^{\alpha}/\alpha \geq \Sigma^{\n}_{i = 1} \u_{i}^{\alpha}(\A)/\alpha, \; \forall \A \in \A^{\star}$. For simplicity we multiply both sides of the inequality by $-1/\alpha$ and since $\alpha <0$ the direction of the inequality sign does not change.

We now prove this inequality by contradiction.
Suppose $\forall \alpha_{1} < 0$,  $\exists\alpha < \alpha_1$ such that $\Sigma_{i = 1}^{\n} -\overline{\u}_{i}^{\alpha} < \Sigma^{\n}_{i=1} - \u_{i}^{\alpha}(\A),  \exists \A \in \A^{\star}$. 
Since $\overline{\bu}$ is a leximin solution then by definition $\overline{\u}_{1} \geq \u_{1}(\A)$.
We consider two cases. First suppose $\overline{\u}_{1} > \u_{1}(\A)$. 
\begin{align*}
    \sum_{i = 1}^{\n} -\overline{\u}_{i}^{\alpha} < \sum^{\n}_{i=1} - \u_{i}^{\alpha}(\A) & \Leftrightarrow  \\
    \frac{\sum_{i = 1}^{\n} -\overline{\u}_{i}^{\alpha}}{\min(\overline{\u}_{1},u_{1}(\A))^{\alpha}} < \frac{\sum^{\n}_{i=1} - \u_{i}^{\alpha}(\A)}{{\min(\overline{\u}_{1},u_{1}(\A))^{\alpha}}} & = \\
    \frac{\sum_{i = 1}^{\n} -\overline{\u}_{i}^{\alpha}}{u_{1}(\A)^{\alpha}} < \frac{\sum^{\n}_{i=1} - \u_{i}^{\alpha}(\A)}{{u_{1}(\A)^{\alpha}}} & \Rightarrow \\
    \lim_{\alpha \rightarrow -\infty} \frac{\sum_{i = 1}^{\n} -\overline{\u}_{i}^{\alpha}}{\u_{1}(\A)^{\alpha}} \leq \lim_{\alpha \rightarrow -\infty} \frac{\sum^{\n}_{i=1} - \u_{i}^{\alpha}(\A)}{{u_{1}(\A)^{\alpha}}} & \Rightarrow \\
    0 \leq -\n_1.
\end{align*}
This is a contradiction since $\n_1 > 0$. Now, suppose $\overline{\u}_{1} = \u_{1}(A)$. In this case, we can eliminate the first terms that involve $\overline{\u}^{\alpha}_{1}$ and $\u^{\alpha}_1$ from the two sides of inequality and redo the above steps iteratively starting from the second biggest element in $\overline{\u}^{\alpha}$. 

Next, we prove the other direction. Let us assume $\overline{\bu}$ is a utility vector such that $\exists \alpha_0 < 0, \forall \alpha 
\leq \alpha_0$,  $\sum^{\n}_{i=1}- \overline{\u}_{i}^{\alpha} \geq \sum^{\n}_{i = 1} -\u_{i}^{\alpha}(\A), \; \forall \A \in \A^{\star}$. W.l.o.g, we can assume that $\overline{\u}_1 \neq \u_{1}$ otherwise we can remove those terms that are equal and the proof still holds. However, we assume this for ease of exposition. It follows that
$$
\frac{\sum^{\n}_{i=1}- \overline{\u}_{i}^{\alpha}}{\min(\overline{\u}_{1},\u_{1})^{\alpha}} \geq \frac{\sum^{\n}_{i = 1} -\u_{i}^{\alpha}(\A)}{\min(\overline{\u}_{1},\u_{1})^{\alpha}}, \; \forall \A \in \A^{\star}. 
$$
If $\min(\overline{\u}_{1}, \u_{1}) = \overline{\u}_{1}$ meaning that $\u_{1} > \overline{\u}_{1}$ we have $-C - \epsilon(\alpha) \geq -\delta(\alpha, \A), \; \forall \A \in \A^{\star}$ where $C>0$ is a constant (equal to the number of entities in $\overline{\bu}$ that are equal to $\overline{\u}_{1}$) and both $\epsilon\geq 0$ and $\delta\geq 0$ are functions of $\alpha$ and can be made arbitrarily small by decreasing $\alpha$.  This is a contradiction which means that $\min(\overline{u}_{1}, \u_{1}) = \u_{1}$, i.e., $\overline{\u}_{1} \geq {\u}_{1}$. By continuing this procedure, we can establish that $\overline{\bu} \succcurlyeq \bu$. This completes the proof.
\end{proof}
\section{Omitted Proofs from Table~\ref{tab:welfare-exisitng-notions}}
\label{sec:table}
In this section we provide detailed description of the entries of Table~\ref{tab:welfare-exisitng-notions} and their derivations. 

\subsection{Monotonicity}
\begin{proposition}
\label{pro:mono-dp-exact}
Exact DP does not satisfy monotonicity.
\end{proposition}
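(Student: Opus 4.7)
The plan is to exploit the fact that, under the reformulation from Section~\ref{sec:fair-connection}, Exact DP's welfare function is
\[
\wel_{\mathcal{F}}(\bu) \;=\; \sum_{\c\in\C} \n_\c\,\u_\c \;+\; \mathcal{I}_{\mathcal{F}}(\bu),
\]
where $\mathcal{F} = \{\bu : \u_\c=\u_{\c'}\;\forall \c,\c'\}$, so that $\wel_{\mathcal{F}}(\bu)=-\infty$ as soon as any two community utilities differ. Monotonicity requires that strictly Pareto improving $\bu$ to $\bu'$ strictly increases $\wel_{\mathcal{F}}$. So it suffices to exhibit a problem instance together with two budget-feasible solutions $\A,\A'\in\A^{\star}$ such that $\bu(\A)\prec \bu(\A')$, yet $\bu(\A)\in\mathcal{F}$ and $\bu(\A')\notin\mathcal{F}$; then $\wel_{\mathcal{F}}(\bu(\A))$ is finite while $\wel_{\mathcal{F}}(\bu(\A'))=-\infty$, contradicting monotonicity.

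Concretely I would take a very small witness graph: two disconnected communities $\V_1,\V_2$, each consisting of (say) two isolated vertices with no edges, and budget $\k=1$. Let $\A=\emptyset$, which is budget-feasible. Then $\bu(\A)=(0,0)\in\mathcal{F}$, and $\wel_{\mathcal{F}}(\bu(\A))=0$. Now let $\A'=\{v\}$ for any vertex $v\in\V_1$; this is budget-feasible, and because the graph has no edges the cascade activates only $v$, giving $\bu(\A')=(1/2,0)$. Clearly $\bu(\A)\prec\bu(\A')$ (coordinatewise $\leq$, strictly greater in the first coordinate), but $\bu(\A')\notin\mathcal{F}$, so $\wel_{\mathcal{F}}(\bu(\A'))=-\infty<0=\wel_{\mathcal{F}}(\bu(\A))$. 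This violates monotonicity.

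There is essentially no obstacle beyond choosing a minimal witness; the only thing to be careful about is that the empty set is indeed in $\A^{\star}$ (it is, since $\A^{\star}=\{\A\subseteq\V:|\A|\leq\k\}$ contains $\emptyset$), and that the instance is a legitimate influence maximization instance under the Independent Cascade Model (it is, for any $\p\in[0,1]$, because with no edges the spread is trivial). If one prefers a nontrivial instance, the same argument works on any graph and budget for which some feasible $\A$ yields a DP-satisfying utility vector that is then Pareto-improved by adding or swapping a seed so as to increase one community's utility without decreasing any other's, which is easy to arrange whenever at least one community contains a vertex whose activation strictly raises its own community utility.
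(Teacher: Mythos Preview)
Your proof is correct. Both your argument and the paper's proceed by exhibiting a witness instance with two budget-feasible solutions $\A,\A'$ such that $\bu(\A)\prec\bu(\A')$ while only $\bu(\A)$ lies in the exact-DP feasible set, so $\wel_{\mathcal F}(\bu(\A))>\wel_{\mathcal F}(\bu(\A'))=-\infty$. The route differs in how the witness is built: the paper constructs a nontrivial two-community graph out of star components and uses $\k=2$, choosing two ``full'' allocations (one placing a seed on a star periphery, the other on both star centers) so that the dominated vector happens to equalize the two community utilities. Your construction is more elementary: you take an edgeless graph, set $\A=\emptyset$ (which trivially yields the equal vector $(0,0)\in\mathcal F$), and let $\A'$ add a single seed to one community. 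What your approach buys is minimality and robustness to the spread probability $\p$; what the paper's buys is a witness in which both solutions actually use seeds, which some readers might find more convincing as a ``real'' influence-maximization scenario rather than a degenerate one. Either way the logical content is the same.
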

\begin{proof}
\begin{figure}[ht!]
\centering
\begin{tikzpicture}
[scale=0.45, red node/.style={circle,fill=gray, draw=black}, blue node/.style = {rectangle, fill = gray, draw=black}
, scale=1.0, every
edge/.style={->,-> = latex'}
]
\tikzset{scale=0.9, edge/.style = {-,> = latex'}}
\tikzset{scale=0.9, directededge/.style = {->,> = latex'}}

\node[blue node] (0) at  (-5, 0){};
\node[blue node] (1) at  (-3, 0){};
\node[blue node] (2) at  (-3.59, 1.41){};
\node[blue node] (3) at  (-5, 2){};
\node[blue node] (4) at  (-6.41, 1.41){};
\node[blue node] (5) at  (-7, 0){};
\node[blue node] (6) at  (0, -1.5){};
\node[blue node] (7) at  (0,0){};
\node[blue node] (8) at  (0, 1.5){};
\draw[edge] (0) to (1);
\draw[edge] (0) to (2);
\draw[edge] (0) to (3);
\draw[edge] (0) to (4);
\draw[edge] (0) to (5);

\node[red node] (10) at  (5, 0){};
\node[red node] (11) at  (3, 0){};
\node[red node] (12) at  (3.59, 1.41){};
\node[red node] (13) at  (5, 2){};
\node[red node] (14) at  (6.41, 1.41){};
\node[red node] (15) at  (7, 0){};
\node[red node] (16) at  (6.41, -1.41){};
\node[red node] (17) at  (5,-2){};
\node[red node] (18) at  (3.59, -1.41){};
\draw[edge] (10) to (11);
\draw[edge] (10) to (12);
\draw[edge] (10) to (13);
\draw[edge] (10) to (14);
\draw[edge] (10) to (15);
\draw[edge] (10) to (16);
\draw[edge] (10) to (17);
\draw[edge] (10) to (18);
\end{tikzpicture}
\caption{Companion figure to Proposition~\ref{pro:mono-dp-exact}. The network consists of two communities circle and square, each of size $\n$. }
\label{fig:exact-dp-level-down}
\end{figure}
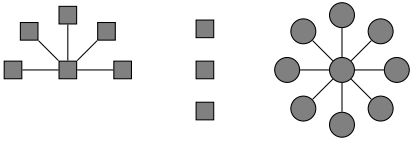
Let $\k=2$ and $\p \in (0,1)$. 
Consider a graph $\G$ as shown in Figure~\ref{fig:exact-dp-level-down} consisting of two communities, square and circle, each of size $\n$ (for large enough $\n$). 
The circle community consists of a star network of size $\n$. The square community contains a star network of size $2+\p(\n-2)$ and $(\n-2)(1-\p)$ singletons. Consider two solutions $\A$ and $\A'$. $\A$ will select a seed from the periphery of the star for the circle community and allocate the other seed to the center of the star for the square community. $\A'$ on the other hand allocates each of the seeds to the center of the stars. Let $\bu = \bu(\A)$ and $\bu'=\bu(\A')$ denote the corresponding allocations of $\A$ and $\A'$. The utility vectors for these allocations are $\bu=((1+p+p^2(\n-2))/\n,(1+p+p^2(\n-2))/\n)$ and $\bu'=((1+p(\n-1))/\n, (1+p+p^2(\n-2))/\n)$, respectively. Clearly, $\bu < \bu'$. So by monotonicity $\bu'$ is preferred to $\bu$. However, only $\bu$ satisfies the exact DP. Hence, DP does not satisfy monotonicity. 
\end{proof}

\begin{proposition}
\label{pro:mono-dp}
Approximate DP does not satisfy monotonicity. \label{lem:eqopp-levelingdown}
\end{proposition}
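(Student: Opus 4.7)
The plan is to adapt the construction used for Proposition~\ref{pro:mono-dp-exact} by enlarging the asymmetry between communities so that the Pareto-improving solution produces a utility gap that strictly exceeds the approximate DP tolerance $\delta$. Fix any $\delta \in (0,1)$ and $\p \in (\delta, 1)$. I would consider a graph with two communities $\V_1, \V_2$ of equal size $\n$, where $\V_1$ consists of a star on $s$ vertices together with $\n - s$ isolated vertices, and $\V_2$ consists of $\n$ isolated vertices. With budget $\k = 2$ (one seed per community), I define two neighboring solutions: $\A$ seeds an isolated vertex in each community, and $\A'$ differs only by swapping the isolated seed in $\V_1$ for the center of the star.

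A direct expected-influence calculation yields $\bu(\A) = (1/\n,\, 1/\n)$ and $\bu(\A') = ((1 + \p(s-1))/\n,\, 1/\n)$, so $\bu(\A) \prec \bu(\A')$; monotonicity would then demand $\wel(\bu(\A)) < \wel(\bu(\A'))$. Choosing $s$ large enough that $\p(s-1)/\n > \delta$ (feasible since $\p > \delta$ and $s$ can be taken close to $\n$), the utility gap $\max_c \u_c(\A') - \min_c \u_c(\A') = \p(s-1)/\n$ exceeds $\delta$, so $\bu(\A')$ violates the approximate DP constraint while $\bu(\A)$ has gap $0$ and lies in $\mathcal F$. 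Using the $\wel_{\mathcal F}$ reformulation introduced in Section~\ref{sec:fair-connection}, this means $\wel_{\mathcal F}(\bu(\A)) = 2/\n$ is finite whereas $\wel_{\mathcal F}(\bu(\A')) = -\infty$, so approximate DP strictly prefers the Pareto-dominated $\bu(\A)$, contradicting monotonicity.

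The main obstacle is matching parameters $(\delta, \p, s, \n)$ uniformly across the admissible range. For small $\delta$ this is effortless, but for $\delta$ close to $1$ one must push $\p$ toward $1$ and take $s/\n$ near $1$ so that the gap $\p(s-1)/\n$ actually surpasses $\delta$; otherwise the discrepancy that a single seed swap can produce is too small to exceed the tolerance. Once this calibration is handled, the argument collapses to the same Pareto-improvement-breaks-feasibility pattern as in the exact DP case, the only conceptual change being that the forbidden gap threshold is $\delta$ rather than $0$.
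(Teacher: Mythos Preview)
Your argument is correct and follows the same Pareto-improvement-breaks-feasibility template as the paper. Two cosmetic slips that do not affect the logic: (i) $\wel_{\mathcal F}(\bu(\A)) = \sum_{\c} \n_\c \u_\c = 2$, not $2/\n$, since the welfare is weighted by community size; (ii) your $\A,\A'$ are not ``neighboring'' in the paper's sense (no seed crosses communities), but monotonicity makes no reference to neighboring solutions, so this is irrelevant.

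Where you diverge from the paper: the paper constructs a more intricate instance with directed between-community edges, calibrated so that \emph{both} communities' utilities strictly drop when moving from the DP-infeasible optimum to the DP-feasible alternative. Your construction is simpler---two disconnected communities, only one coordinate of the utility vector moves---and this already suffices, since monotonicity is violated whenever a single coordinate strictly improves while the others are unchanged. The paper's extra structure buys nothing for this particular proposition; your version is the more economical witness. The only genuine care needed, which you correctly flag, is the calibration $p(s-1)/\n > \delta$ together with $s < \n$ (so an isolated vertex exists in $\V_1$); both hold simultaneously once $p > \delta$ and $\n > p/(p-\delta)$, so the construction goes through for every $\delta \in (0,1)$.
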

\begin{proof}
Consider a graph $\G$ as shown in Figure~\ref{fig:eqopp-levelingdown} consisting of two communities, square and circle, each of size $\n$. 
We choose an arbitrary $\delta \in (0,1),$ to reflect the arbitrary strictness of a decision maker. Let $\delta < \p < \sqrt\delta$, $\k=2$ and $
\n > \max\left(3\p/(\p-\delta),1/(\delta-\p^2)\right)$. The optimal solution $\A$ of the influence maximization problem 
chooses the center of the star and any disconnected square vertex. In $\A$, the utility of circle and square communities are 
$\left(1+(\n-1)\p\right)/\n$ and $(1+2\p)/\n$, respectively and the utility gap exceeds $\delta$ (so this solution does not satisfy the DP constraints).  By imposing DP, any fair solution is to choose one vertex from the periphery of the circle community and 
one from the isolated square vertices. For a fair solution $\A'$, the utilities of circle and square are $\left(1+\p+\p^2(\n-2)\right)/\n$ and $(1+2\p^2)/\n$, respectively. 
Given the range of $\n$, the utility gap is less than $\delta$ so approximate DP is satisfied. Since the utility of both communities have degraded, any monotone welfare function will prefer $\A'$ (and its corresponding utility vector) over $\A$. However, only $\A'$ is DP fair and hence it is preferred over $\A$ by DP. We point out that the graph used in the proof is directed. This is for ease of exposition. It is possible to create a more complex example with an undirected graph.
\end{proof}

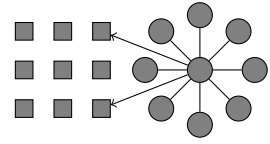
\begin{figure}[ht!]
\centering
\begin{tikzpicture}
[scale=0.45, red node/.style={circle,fill=gray, draw=black}, blue node/.style = {rectangle, fill = gray, draw=black}
, scale=1.0, every
edge/.style={->,-> = latex'}
]
\tikzset{scale=0.9, edge/.style = {-,> = latex'}}
\tikzset{scale=0.9, directededge/.style = {->,> = latex'}}

\node[blue node] (1) at  (1.41, 0){};
\node[blue node] (2) at  (-1.41, 0){};
\node[blue node] (3) at  (0, 0){};
\node[blue node] (4) at  (-1.41, 1.41){};
\node[blue node] (5) at  (0, 1.41){};
\node[blue node] (6) at  (1.41, 1.41){};
\node[blue node] (7) at  (-1.41, -1.41){};
\node[blue node] (8) at  (0,-1.41){};
\node[blue node] (9) at  (1.41, -1.41){};
\node[red node] (10) at  (5, 0){};
\node[red node] (11) at  (3, 0){};
\node[red node] (12) at  (3.59, 1.41){};
\node[red node] (13) at  (5, 2){};
\node[red node] (14) at  (6.41, 1.41){};
\node[red node] (15) at  (7, 0){};
\node[red node] (16) at  (6.41, -1.41){};
\node[red node] (17) at  (5,-2){};
\node[red node] (18) at  (3.59, -1.41){};
\draw[edge] (10) to (11);
\draw[edge] (10) to (12);
\draw[edge] (10) to (13);
\draw[edge] (10) to (14);
\draw[edge] (10) to (15);
\draw[edge] (10) to (16);
\draw[edge] (10) to (17);
\draw[edge] (10) to (18);
\draw[->] (10) to (6);
\draw[->] (10) to (9);
\end{tikzpicture}
\caption{Companion figure to Proposition~\ref{lem:eqopp-levelingdown}. The network consists of two communities circle and square, each of size $\n$. All edges except the two 
shown by arrows are undirected meaning that influence can spread both ways.}
\label{fig:eqopp-levelingdown}
\end{figure}

\begin{proposition}
\label{pro:mono-others}
Consider a general fairness notion as a set of constraints
$\mathcal F = \left\{ \bm u\in[0,1]^\NC : \uc\geq l_\c, \;\; \forall \c \in \C \right\}$ where $l_\c \; \forall \c\in\C$ are arbitrary lower-bound values. The considered fairness notion satisfies the monotonicity principle.
\label{lem:general-leveldown}
\end{proposition}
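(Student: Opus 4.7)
The plan is to prove monotonicity by a short case analysis on the feasibility of the two utility vectors, using the equivalent welfare formulation $\wel_{\mathcal F}(\bu) = \sum_{\c \in \C} \n_\c \u_\c + \mathcal I_{\mathcal F}(\bu)$ introduced in Section~\ref{sec:fair-connection}. The structural fact I will exploit is that the feasible set $\mathcal F = \{\bu \in [0,1]^\NC : \u_\c \geq l_\c,\ \forall \c \in \C\}$ is \emph{upward-closed} under coordinate-wise dominance: if $\bu \in \mathcal F$ and $\bu \preceq \bu'$, then for every $\c$ we have $\u'_\c \geq \u_\c \geq l_\c$, hence $\bu' \in \mathcal F$.

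Fix any pair with $\bu \prec \bu'$. In the first case, if $\bu \in \mathcal F$, then upward closure forces $\bu' \in \mathcal F$ as well, so both indicator terms vanish and
\[
\wel_{\mathcal F}(\bu') - \wel_{\mathcal F}(\bu) \;=\; \sum_{\c \in \C} \n_\c (\u'_\c - \u_\c) \;>\; 0,
\]
since $\n_\c > 0$, $\u'_\c \geq \u_\c$ for every $\c$, and the inequality is strict for at least one coordinate by definition of $\prec$. In the second case, $\bu \notin \mathcal F$, so $\wel_{\mathcal F}(\bu) = -\infty$; then $\wel_{\mathcal F}(\bu') \geq \wel_{\mathcal F}(\bu)$ trivially, with strict inequality whenever $\bu' \in \mathcal F$.

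The only delicate point is the sub-case where both $\bu$ and $\bu'$ are infeasible, in which $\wel_{\mathcal F}$ assigns the value $-\infty$ to each. I plan to handle this under the paper's convention, implicit in its characteristic-function reformulation, that infeasible utility vectors are never selected by the constrained optimization and are not meaningfully compared, so monotonicity is only tested on the substantive cases above. I do not expect any real obstacle here; the heart of the argument is the upward-closure observation, which is exactly what distinguishes lower-bound style fairness notions such as DC, MMF, and the one stated in the proposition from equality-based notions like exact or approximate DP. For the latter, a strict utility improvement can move a feasible vector out of $\mathcal F$, and this is precisely how Propositions~\ref{pro:mono-dp-exact} and~\ref{pro:mono-dp} construct their counterexamples.
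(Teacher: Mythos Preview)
Your argument is correct and follows the same core idea as the paper: since the objective $\sum_{\c\in\C}\n_\c\u_\c$ is strictly increasing in each coordinate with $\n_\c>0$, any Pareto improvement strictly raises welfare on feasible vectors. Your proof is in fact slightly more careful than the paper's, which simply restricts attention to pairs $\bu,\bu'\in\mathcal F$; your upward-closure observation explains why the case ``$\bu$ feasible, $\bu'$ infeasible'' cannot arise, and your handling of the both-infeasible sub-case makes explicit a convention the paper leaves implicit.
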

\begin{proof}
Let $\A$ and $\A'\in \A^\star$ denote two solutions whose corresponding utility vectors $\bu = \bu(\A)$ and $\bu'= \bu(\A)$ are feasible ($\bu,\bu'\in\mathcal F$) such that $\bu < \bu'$. 
Given the objective function of the influence maximization is equivalent to $\Sigma_{\c \in\C} \n_\c\uc(\A)$ and that all $\n_\c$ values are positive the objective values of $\bu'$ is strictly better than $\bu$. Hence, $\wel(\bu) < \wel(\bu')$ and the monotonicity is satisfied.
\end{proof}

As we have shown in Section~\ref{sec:existing-fairness}, both maximin and DC can be written as constraints that are compatible with the fairness definition in Proposition~\ref{pro:mono-others}. The utilitarian solution corresponds to setting all the lower bounds to 0. Hence all of them satisfy monotonicity.
\begin{corollary}
\label{cor:mono-dc-maximin}
DC, MMF and utilitarian satisfy monotonicity. 
\end{corollary}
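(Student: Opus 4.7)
The plan is to observe that Corollary~\ref{cor:mono-dc-maximin} is an essentially immediate consequence of Proposition~\ref{pro:mono-others}, once the three notions are cast in the lower-bound-constraint template
\[
\mathcal F = \bigl\{ \bm u \in [0,1]^\NC : \uc \geq l_\c, \ \forall \c\in\C \bigr\}
\]
for appropriate choices of the lower bounds $l_\c$. So the task reduces to exhibiting, for each of DC, MMF, and the utilitarian notion, the corresponding vector $(l_\c)_{\c\in\C}$, and then invoking Proposition~\ref{pro:mono-others} verbatim.

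First, for DC I would take $l_\c := U_\c$ for every $\c\in\C$, where $U_\c$ is the internal-allocation upper bound defined just before constraint~\eqref{eq:dc}. The feasible set of DC is then literally $\mathcal F$, so Proposition~\ref{pro:mono-others} applies. Second, for MMF the constraint $\min_{\c\in\C}\uc(\A)\geq \gamma$ is equivalent to the pointwise system $\uc(\A)\geq \gamma$ for every $\c\in\C$, so taking $l_\c := \gamma$ uniformly puts MMF into the same template. Third, the utilitarian notion has no fairness constraint, which corresponds to setting $l_\c := 0$ for all $\c$; since every utility $\uc(\A)\in[0,1]$ trivially satisfies $\uc(\A)\geq 0$, the feasible set degenerates to all of $[0,1]^\NC$ and Proposition~\ref{pro:mono-others} again applies.

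Given these three reductions, the conclusion $\wel_{\mathcal F}(\bu) < \wel_{\mathcal F}(\bu')$ whenever $\bu \prec \bu'$ and both are feasible follows directly from Proposition~\ref{pro:mono-others}, because in each case the characteristic term $\mathcal I_{\mathcal F}$ vanishes on feasible utility vectors and the residual utilitarian objective $\sum_{\c\in\C}\n_\c \uc$ is strictly increasing in each coordinate (with $\n_\c>0$). There is essentially no obstacle: the only mild point worth flagging is that MMF and DC as stated typically choose the \emph{largest} feasible $\gamma$ or $U_\c$, whereas Proposition~\ref{pro:mono-others} only requires that the bounds be fixed; since the argument is monotonicity for a fixed feasible set, the particular values of the bounds are immaterial. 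Consequently, the corollary follows in a single line per notion after the identifications above.
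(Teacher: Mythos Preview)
Your proposal is correct and follows essentially the same approach as the paper: the paper likewise observes that DC and MMF are instances of the lower-bound template of Proposition~\ref{pro:mono-others} (with $l_\c = U_\c$ and $l_\c = \gamma$ respectively), that the utilitarian notion corresponds to $l_\c = 0$, and then invokes Proposition~\ref{pro:mono-others} directly. Your write-up simply spells out these identifications more explicitly than the paper does.
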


\subsection{Symmetry}
It is straightforward to show that DP (exact and approximate), maximin and utilitarian fairness satisfy the symmetry principle. DC, however, does not satisfy the symmetry principle. Based on its definition, DC can place different lower-bounds on the utility of different communities. Hence, by permuting a utility vector we may no longer be able to satisfy the DC constraints (see Definition~\ref{eq:dc}).

\subsection{Independence of Unconcerned Individuals}
\begin{proposition}
\label{pro:dp-iuc}
Exact and approximate DP do not satisfy the independence of unconcerned individuals. 
\end{proposition}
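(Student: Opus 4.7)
The key observation is that DP, whether exact or approximate, is a \emph{global} feasibility condition on the entire utility vector: altering the utility of a single community $\c$ can push $\bu$ into or out of $\mathcal{F}$. In the welfare reformulation $\wel_{\mathcal{F}}(\bu) = \sum_{\c} \n_\c \u_\c + \mathcal{I}_{\mathcal{F}}(\bu)$, the indicator $\mathcal{I}_{\mathcal{F}}$ sends infeasible vectors to $-\infty$, so the supposedly ``unconcerned'' coordinate strongly affects whether two vectors compare as finite-to-finite or finite-to-$(-\infty)$. My plan is to exploit this by exhibiting a single tuple $(\c, b, b', \bu, \bu')$ for which the ordering of $\wel_{\mathcal{F}}(\bu|^{\c} b)$ and $\wel_{\mathcal{F}}(\bu'|^{\c} b)$ flips when $b$ is replaced by $b'$, contradicting the $\Leftrightarrow$ required by the principle.

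Concretely, I would take three disconnected communities (no between-community edges), each formed as a disjoint union of five edges, together with spread probability $\p = 1$ and budget $\k = 9$. Under Independent Cascade with $\p = 1$, seeding $k_\c$ vertices drawn from $k_\c$ distinct edges of community $\c$ deterministically activates $2k_\c$ of the $10$ vertices, yielding utility exactly $0.2 k_\c$. The four seed distributions $(2,2,2)$, $(3,3,2)$, $(2,2,3)$, $(3,3,3)$ therefore produce the utility vectors $v_1 = (0.4,0.4,0.4)$, $v_2 = (0.6,0.6,0.4)$, $v_3 = (0.4,0.4,0.6)$, and $v_4 = (0.6,0.6,0.6)$, all via valid sets $\A \in \A^{\star}$. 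Setting $\c = 3$, $\bu = (0.4,0.4,\cdot)$, $\bu' = (0.6,0.6,\cdot)$, $b = 0.4$, and $b' = 0.6$, we get $v_1 = \bu|^{3} b$, $v_2 = \bu'|^{3} b$, $v_3 = \bu|^{3} b'$ and $v_4 = \bu'|^{3} b'$. Under exact DP, $v_1$ and $v_4$ lie in $\mathcal{F}$ whereas $v_2$ and $v_3$ do not, so $\wel_{\mathcal{F}}(\bu|^{3} b) > -\infty = \wel_{\mathcal{F}}(\bu'|^{3} b)$ but $\wel_{\mathcal{F}}(\bu|^{3} b') = -\infty < \wel_{\mathcal{F}}(\bu'|^{3} b')$, breaking the $\Leftrightarrow$.

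The only nonroutine step is guaranteeing that the four target utility vectors are simultaneously realizable as outputs of valid seed sets, and the union-of-disjoint-edges construction with $\p = 1$ handles this by trivializing the utility computation and giving exact control via seed counts. The same witness extends to approximate DP whenever the tolerance satisfies $\delta < 0.2$; for larger $\delta$ the scheme carries over after rescaling the disparity between $\bu$ and $\bu'$, e.g., by increasing the number of edges per community so that finer utility increments become available and the spread between the ``$a$'' and ``$a'$'' levels can be made to exceed any fixed $\delta < 1$. Since one counterexample suffices, this establishes the proposition for both exact and approximate DP.
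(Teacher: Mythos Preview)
Your argument is correct and follows the same mechanism as the paper's proof: both exhibit concrete utility vectors for which swapping the ``unconcerned'' coordinate flips DP-feasibility, so that the indicator term $\mathcal{I}_{\mathcal{F}}$ reverses the strict ordering of $\wel_{\mathcal{F}}$. The paper works with two communities and $\delta$-parameterized values realized via star components, whereas you use three equal-size communities realized via disjoint edges with $p=1$; these are cosmetic differences, and your instance is arguably cleaner because the utilities are computed trivially. One small slip: to cover $\delta \geq 0.2$ you do not need \emph{finer} utility increments---what you need is simply a larger spread $|a'-a| > \delta$, which is already available with five edges per community (e.g., $a=0$, $a'=1$) once you enlarge the budget $K$; increasing the number of edges is unnecessary.
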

\begin{proof}
Consider two utility vectors: $\bu = ((1+3\delta)/8, (1-\delta)/8)$ and $\bu' = ((1+\delta)/4, (1-\delta)/8)$ for $\delta\in[0,1)$. Both exact and approximate DP strictly prefer $\bu$ over $\bu'$. Let us substitute the second component of both vectors by  $(1+\delta)/4$. Therefore, we obtain $\bm v = \bu|^2 (1+\delta)/4  = ((1+3\delta)/8, (1+\delta)/4)$ and $\bm v'=\bu'|^2 (1+\delta)/4 = ((1+\delta)/4, (1+\delta)/4)$. In contrast to the previous case, both approximate and exact DP prefer $\bm v'$ over $\bm v$. Note that while the construction does does not involve an instance of the influence maximization problem, it is possible to provide such an instance to witness the claim as follows. 

\begin{figure}[ht!]
\centering
\begin{tikzpicture}
[scale=0.45, red node/.style={circle,fill=gray, draw=black}, blue node/.style = {rectangle, fill = gray, draw=black}
, scale=1.0, every
edge/.style={->,-> = latex'}
]
\tikzset{scale=0.9, edge/.style = {-,> = latex'}}
\tikzset{scale=0.9, directededge/.style = {->,> = latex'}}
\node[red node] (101) at  (1.75, -1){};
\node[red node] (101) at  (1.75, 1){};
\node[blue node] (201) at  (14, 2){};
\node[blue node] (202) at  (14, 1){};
\node[blue node] (203) at  (14, 0){};
\node[blue node] (204) at  (14, -1){};
\node[blue node] (205) at  (14, -2){};
\node[red node] (1) at  (10.5, 0){};
\node[red node] (2) at  (8.5, 0){};
\node[red node] (3) at  (9.09, 1.41){};
\node[red node] (4) at  (10.5, 2){};
\node[red node] (5) at  (11.91, 1.41){};
\node[red node] (6) at  (12.5, 0){};
\node[red node] (7) at  (11.91, -1.41){};
\draw[edge] (1) to (2);
\draw[edge] (1) to (3);
\draw[edge] (1) to (4);
\draw[edge] (1) to (5);
\draw[edge] (1) to (6);
\draw[edge] (1) to (7);
\node[blue node] (41) at  (17, 0){};
\node[blue node] (42) at  (15, 0){};
\node[blue node] (43) at  (15.59, 1.41){};
\node[blue node] (44) at  (17, 2){};
\node[blue node] (45) at  (18.41, 1.41){};
\node[blue node] (46) at  (19, 0){};
\node[blue node] (47) at  (18.41, -1.41){};
\draw[edge] (41) to (42);
\draw[edge] (41) to (43);
\draw[edge] (41) to (44);
\draw[edge] (41) to (45);
\draw[edge] (41) to (46);
\draw[edge] (41) to (47);
\node[blue node] (51) at  (22.5, 0){};
\node[blue node] (52) at  (20.5, 0){};
\node[blue node] (53) at  (21.09, 1.41){};
\node[blue node] (54) at  (22.5, 2){};
\node[blue node] (55) at  (23.91, 1.41){};
\node[blue node] (56) at  (24.5, 0){};
\draw[edge] (51) to (52);
\draw[edge] (51) to (53);
\draw[edge] (51) to (54);
\draw[edge] (51) to (55);
\draw[edge] (51) to (56);
\node[red node] (10) at  (5, 0){};
\node[red node] (11) at  (3, 0){};
\node[red node] (12) at  (3.59, 1.41){};
\node[red node] (13) at  (5, 2){};
\node[red node] (14) at  (6.41, 1.41){};
\node[red node] (15) at  (7, 0){};
\node[red node] (16) at  (6.41, -1.41){};
\node[red node] (17) at  (5,-2){};
\node[red node] (18) at  (3.59, -1.41){};
\draw[edge] (10) to (11);
\draw[edge] (10) to (12);
\draw[edge] (10) to (13);
\draw[edge] (10) to (14);
\draw[edge] (10) to (15);
\draw[edge] (10) to (16);
\draw[edge] (10) to (17);
\draw[edge] (10) to (18);
\end{tikzpicture}
\caption{Companion figure to Proposition~\ref{pro:dp-iuc}. The network consists of two communities circle and square each of size $\n$.}
\label{fig:dp-iuc}
\end{figure}
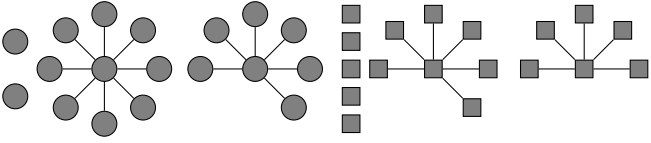

Figure~\ref{fig:dp-iuc} demonstrates the instance witnessing the claim. We consider an influence maximization problem with two communities: circle (first component of the utility vector) and square (second component of the utility vector), each of size $\n$ (for $\n$ large enough). We assume $\p=1$ and $\k=2$. The circle community consists of three components: two star components of size $\n(1+\delta)/4$ (small) and $\n(1+3\delta)/8$ (large) and $5\n(1-\delta)/8$ isolated vertices. The square community consists of three components as well: two star components of size $\n(1-\delta)/8$ (small) and $\n(1+\delta)/4$ (large) and $\n(5-\delta)/8$ isolated vertices. Solution $\bu$ ($\bu'$) corresponds to selecting a seed vertex from the large (small) star component of the circle community and a vertex from the small star component of the square community. Allocation $\bm v$ ($\bm v'$) corresponds to selecting a seed vertex from the large (small) star component of the circle community a vertex from the large star component of the square community. Note that the choice of $\p=1$ is merely for the ease of exposition and the example network can be modified to accommodate $\p < 1$.
\end{proof}

Henceforth, we only discuss utility vectors when appropriate. In all such cases, there exist instances of the influence maximization problem which witness these utility vectors. We have demonstrated one such instance in the proof of Proposition~\ref{pro:dp-iuc}, but we omit the details from the remaining proofs for simplicity. 

\begin{proposition}
\label{pro:dc-iuc}
DC does not satisfy the independence of unconcerned individuals.
\end{proposition}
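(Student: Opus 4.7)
The plan is to exhibit two utility vectors $\bu, \bu'$, a community $c$, and scalars $b, b'$ for which exactly one of the two strict inequalities in the Independence of Unconcerned Individuals principle holds when $\wel$ is replaced by the DC welfare reformulation $\wel_{\mathcal F}$. The guiding observation is that DC imposes a \emph{per-community} lower bound $U_c$, so $\wel_{\mathcal F}$ collapses to $-\infty$ the moment any single coordinate of its argument drops below its own threshold. Because replacing only coordinate $c$ by a new value leaves the feasibility status of every other coordinate untouched, we can simultaneously push both $\bu|^c b'$ and $\bu'|^c b'$ out of $\mathcal F$, collapsing their welfare values to a tie, while keeping them strictly ordered when $b$ is chosen so as to keep the perturbed vectors inside $\mathcal F$.

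Concretely, I would consider two communities of equal size with, for example, $U_1 = 0.3$ and $U_2 = 0.5$, and take $\bu = (0.4, 0.6)$ and $\bu' = (0.7, 0.5)$, both of which lie in $\mathcal F$. Setting $c = 2$ and $b = 0.6$, the perturbed vectors $\bu|^2 b = (0.4, 0.6)$ and $\bu'|^2 b = (0.7, 0.6)$ are feasible, so $\wel_{\mathcal F}$ reduces to the utilitarian sum $\sum_{c}\n_c \u_c$ and $\wel_{\mathcal F}(\bu|^2 b) < \wel_{\mathcal F}(\bu'|^2 b)$ holds strictly. Choosing $b' = 0.4 < U_2$ instead, both $\bu|^2 b'$ and $\bu'|^2 b'$ violate the second DC constraint and hence tie at $-\infty$, so the inequality $\wel_{\mathcal F}(\bu|^2 b') < \wel_{\mathcal F}(\bu'|^2 b')$ fails. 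The required biconditional is thereby broken.

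The only remaining ingredient, and the main bookkeeping task, is realizability: I need an influence maximization instance whose reachable utility vectors include the four perturbed vectors above and whose DC thresholds $U_c$ come out to the prescribed values. Following the construction style of the proof of Proposition~\ref{pro:dp-iuc}, I would build each community out of a handful of star components together with isolated vertices, tuning the component sizes, the propagation probability $\p$, and the budget $\k$ so that (i) four distinct choices of $\A \in \A^\star$ realize the four target vectors and (ii) the internal-only sub-problem for community $c$ with budget $\lfloor \k \n_c / \n \rfloor$ produces exactly the intended $U_c$. This is directly analogous to the construction already carried out in Proposition~\ref{pro:dp-iuc}, and once the instance is pinned down the welfare computation above concludes the argument.
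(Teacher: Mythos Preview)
Your proposal is correct and follows essentially the same approach as the paper: exploit the characteristic-function reformulation so that a substitution $b'$ pushing both vectors outside $\mathcal F$ forces a tie at $-\infty$, breaking the biconditional. The only cosmetic difference is that the paper obtains the strict inequality on the $b$-side by having one vector feasible and the other infeasible (rather than both feasible with different utilitarian sums), and it likewise handwaves the realizability step after establishing it once in Proposition~\ref{pro:dp-iuc}.
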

\begin{proof}
Consider an instance of the influence maximization problem with 2 communities were the lower bound set by DC for both communities is $0.4$. Also consider 2 solutions with corresponding utility vectors $\bu = (0.5, 0.5)$ and $\bu' = (0.5, 0.3)$. Therefore, only $\bu$ satisfies DC and hence DC prefers $\bu$ over $\bu'$. Let us substitute the first component with $0.35$. Therefore, we obtain $\bm v = \bu|^1 0.35 = (0.35, 0.5)$ and $\bm v'= \bu'|^1 0.35 = (0.35, 0.3)$. In contrast to the previous case, both solutions are infeasible with respect to the DC (hence -$\infty$ welfare, see Section~\ref{sec:fair-connection}). Therefore, while  $\wel(\bu) > \wel(\bu')$, it does not hold that $\wel(\bm v) > \wel(\bm v')$. 
\end{proof}

\begin{proposition}
\label{pro:lexi-iuc}
MMF does not satisfy the independence of unconcerned individuals.
\end{proposition}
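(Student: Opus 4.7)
The plan is to mirror the template of Propositions~\ref{pro:dp-iuc} and~\ref{pro:dc-iuc}, exploiting the fact that the equivalent welfare function of MMF described in Section~\ref{sec:fair-connection} assigns $-\infty$ to every utility vector whose minimum coordinate falls below the maximin threshold $\gamma$, while on feasible vectors it reduces to the total utility $\sum_{\c\in\C}\n_\c \uc$. The key observation is that a single-coordinate substitution which pushes two feasible vectors jointly across this cliff will collapse a previously strict preference into indifference between two $-\infty$ welfares, immediately breaking the iff demanded by the principle.

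Concretely, I would first fix an influence maximization instance with two equally sized communities whose maximin value is $\gamma = 0.5$ --- for example, two communities consisting entirely of isolated vertices with a budget that forces one seed per community, so the maximin-optimal allocation yields utilities $(0.5, 0.5)$. On this welfare function I would take the utility vectors $\bu = (0.5, 0.7)$ and $\bu' = (0.5, 0.8)$, which both satisfy $\min_{\c\in\C}\uc \geq \gamma$, so their welfares reduce to their total utilities and $\wel_{\mathcal F}(\bu) < \wel_{\mathcal F}(\bu')$. Setting $\c = 1$, $b = 0.5$, and $b' = 0.4$, the substituted vectors $\bu|^{1} b' = (0.4, 0.7)$ and $\bu'|^{1} b' = (0.4, 0.8)$ both violate the threshold, so $\wel_{\mathcal F}(\bu|^{1} b') = \wel_{\mathcal F}(\bu'|^{1} b') = -\infty$. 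Hence $\wel_{\mathcal F}(\bu|^{1} b) < \wel_{\mathcal F}(\bu'|^{1} b)$ holds while $\wel_{\mathcal F}(\bu|^{1} b') < \wel_{\mathcal F}(\bu'|^{1} b')$ does not, refuting the principle.

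The main (and relatively minor) obstacle is conceptual rather than technical: one must be comfortable evaluating $\wel_{\mathcal F}$ at utility vectors that need not all simultaneously arise from seed choices in the same instance. This is the same convention adopted in the earlier appendix proofs of Propositions~\ref{pro:dp-iuc} and~\ref{pro:dc-iuc}, and it is legitimate because the principle is a universal statement about the welfare function on arbitrary utility vectors; once the instance fixes $\gamma$, any four witness vectors suffice to refute the iff.
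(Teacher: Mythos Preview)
Your proposal is correct and uses essentially the same mechanism as the paper: a single-coordinate substitution that pushes both utility vectors below the maximin threshold, collapsing a strict MMF preference into indifference and thereby breaking the required biconditional. The paper's own counterexample uses three equal-size communities with $\bu=(0.3,0.6,0.4)$, $\bu'=(0.3,0.2,0.8)$ and replaces the first coordinate by $0.1$, whereas you establish the initial strict preference via a total-utility gap between two \emph{feasible} vectors rather than a feasible/infeasible pair---a minor variation that, if anything, makes the reliance on the $\wel_{\mathcal F}$ formulation from Section~\ref{sec:fair-connection} more explicit.
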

\begin{proof}
Consider an instance of the influence maximization problem with 3 communities of equal size.
Also consider 2 solutions with corresponding utility vectors $\bu = (0.3, 0.6, 0.4)$ and $\bu' = (0.3, 0.2, 0.8)$. Maximin fairness strictly prefers $\bu$ over $\bu'$. Let us substitute the first component with $0.1$. Therefore, we obtain $\bm v = \bu|^1 0.1 = (0.1, 0.6, 0.4)$ and $\bm v'= \bu'|^1 0.1 = (0.1, 0.2, 0.8)$. Maximin fairness is indifferent between $\bm v$ and $\bm v'$ (both have the same worst-case utility and total utility) which shows that maximin fairness does not satisfy the independence of unconcerned individuals.
\end{proof}

Note that the utilitarian satisfies the independence of unconcerned individuals because if $\wel(\bu) = \Sigma_{\c \in\C} \n_\c\uc > \Sigma_{\c \in\C} \n_\c\uc' = \wel(\bu')$ then $\wel(\bu|^{\c}b') < \wel(\bu'|^{\c}b')$ since $\n_\c$, $\u_c$, $\uc'$ and $b'$ are all non-negative.

\subsection{Affine Invariance}
Exact DP satisfies affine invariance principle because a linear transformation over a uniform vector will remain uniform. However, for approximate DP this is not the case. More particularly, for any utility vector $\bu$ that is $\delta$-DP for $\delta \in (0,1)$ and an affine transformation of the form $\bu' = \alpha \bu + \beta$, $\bu'$ satisfies $\alpha\delta$-DP. Therefore, for $\alpha > 1$, $\bu'$ does not satisfy $\delta$-DP. Similarly, DC does not satisfy this principle either. This is because after the transformation the constraints may not be satisfied (e.g., when $\alpha < 1/\min_{\c\in\C} U_c $). It is known that MMF satisfies this principle~\cite{BertsimasFT11-pricefair}. The same holds for the utilitarian objective because if $\Sigma_{\c \in \C}\n_\c \uc > \Sigma_{\c \in \C}\n_
\c \u'_\c$, then $\Sigma_{\c \in \C}\n_\c \alpha\uc+\beta > \Sigma_{\c \in \C}\n_
\c \alpha\u'_\c+\beta$ since $\alpha > 0$.

\subsection{Influence Transfer Principle. }

\begin{proposition}
\label{pro:dp-trans-2}
Exact and approximate DP do not satisfy the influence transfer principle.
\end{proposition}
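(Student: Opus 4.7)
The plan is to exhibit, for each variant of DP, a pair of neighboring solutions $\A$ and $\A'$ whose utility vectors $\bu$ and $\bu'$ satisfy the hypotheses of the influence transfer principle (so the principle demands $\wel(\bu) < \wel(\bu')$) yet $\wel_{\mathcal F}(\bu) > \wel_{\mathcal F}(\bu')$ under the DP reformulation. Because the encoding via $\mathcal I_{\mathcal F}$ assigns $-\infty$ welfare to infeasible vectors, a particularly clean route is to arrange for $\bu$ to be perfectly equal, hence DP-feasible, while $\bu'$ has a utility gap large enough to violate the DP constraint.

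To construct such an instance, I would take two communities of equal size with deterministic spread ($\p = 1$) and budget $\k = 2$. Each community hosts an in-community star covering half its vertices, with centers $v_1 \in \V_1$ and $v_2 \in \V_2$; additionally, $\V_2$ contains a ``bridge'' vertex $w$ whose between-community edges reach the half of $\V_1$ not already covered by the star centered at $v_1$, together with a small in-community attachment inside $\V_2$. Choosing $\A = \{v_1, v_2\}$ yields the uniform vector $\bu = (0.5, 0.5)$, while the neighboring solution $\A' = \{v_2, w\}$ obtained by transferring one influencer from $\V_1$ to $\V_2$ preserves $\u_1 = 0.5$ thanks to the bridge, yet raises $\u_2$ strictly above $0.5$.

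A routine check then confirms that both vectors are sorted ascending in a common community ordering, that the prefix sums $\sum_{\kappa \in \C : \kappa \leq \c} \n_\kappa (\u'_\kappa - \u_\kappa)$ are non-negative for every $\c$, and that $\u'_2 > \u_2$, so the influence transfer principle demands $\wel(\bu) < \wel(\bu')$. But $\bu$ has gap $0$ (feasible under exact DP) while $\bu'$ has strictly positive gap (infeasible), so $\wel_{\mathcal F}(\bu)$ is finite whereas $\wel_{\mathcal F}(\bu') = -\infty$, a direct contradiction. For approximate DP at any tolerance $\delta \in [0,1)$, I would enlarge the bridge vertex's in-community attachment inside $\V_2$, rescaling $\V_2$ if needed, so that the gap in $\bu'$ exceeds $\delta$ while $\bu$ stays uniform; the rest of the argument carries over unchanged.

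The subtle part is not the DP violation, which is easy to arrange with any asymmetric $\bu'$, but the prefix-sum condition of the principle: because the transfer removes the original seed from $\V_1$ entirely, without a mechanism for routing influence back into $\V_1$ from $\A'$ the utility $\u_1$ would collapse, making the principle simply inapplicable. The between-community edges incident to $w$ are precisely what allow the neighboring solution to remain admissible under the principle while still violating DP, and calibrating this bridge so the ordering of communities is preserved is where I expect the bulk of the care in the proof to lie.
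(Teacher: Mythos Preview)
Your construction is correct and would establish the claim, but it takes a more elaborate route than the paper. The paper simply exhibits the pair of utility vectors $\bu = \bigl((1+\delta)/2,\,0\bigr)$ and $\bu' = (\delta,\,0)$ for two equal-sized communities, with $\delta \in [0,1)$ the DP tolerance. After sorting, the prefix-sum condition of the influence transfer principle gives $\wel(\bu') < \wel(\bu)$, yet $\bu'$ has gap exactly $\delta$ (DP-feasible) while $\bu$ has gap $(1+\delta)/2 > \delta$ (DP-infeasible), so $\wel_{\mathcal F}(\bu) = -\infty < \wel_{\mathcal F}(\bu')$. No explicit graph, bridge mechanism, or separate treatment of the two DP variants is needed.

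Two remarks on your version. First, because your $\bu'$ Pareto-dominates $\bu$, the counterexample you build is really a monotonicity violation repackaged; it works formally, since Pareto dominance is a degenerate instance of the influence-transfer hypothesis, but it does not exercise the principle in a regime where utilities genuinely trade off, which the paper's example does (total utility strictly drops from $\bu$ to $\bu'$ there). Second, with $\bu$ pinned at the uniform level $c = 0.5$, the gap in $\bu'$ is structurally bounded by $1 - c = 0.5$, so merely enlarging $w$'s attachment or rescaling $\V_2$ cannot push the gap past $\delta$ once $\delta \geq 1/2$; you would also need to lower the common level $c$. This is an easy repair, but not quite what you wrote.
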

\begin{proof}
Let $\delta\in[0,1)$ denote the parameter of DP.
Consider utility vectors $\bu = ((1+\delta)/2, 0)$ and  $\bu' = (\delta, 0)$. The sizes of the all the communities are the same. Based on the influence transfer principle, $\wel(\bu) > \wel(\bu')$, however, DP strictly prefers $\bu'$ over $\bu$.  
\end{proof}

\begin{proposition}
\label{pro:dc-trans}
DC does not satisfy the influence transfer principle.
\end{proposition}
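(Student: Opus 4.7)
The plan is to exhibit an instance of influence maximization together with two neighboring solutions $\A, \A' \in \A^\star$ such that the influence transfer principle strictly prefers $\bu' = \bu(\A')$ over $\bu = \bu(\A)$, yet DC strictly prefers $\bu$ over $\bu'$ under the reformulation $\wel_{\mathcal F} = \sum_c n_c u_c + \mathcal I_{\mathcal F}$. Since the DC feasible set is $\mathcal F = \{\bu : u_c \geq U_c, \; \forall c \in \C\}$, it suffices to arrange things so that $\bu \in \mathcal F$ but $\bu' \notin \mathcal F$, because then $\wel_{\mathcal F}(\bu') = -\infty < \wel_{\mathcal F}(\bu)$ regardless of the other terms.

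The construction I would use has two communities of equal size, say squares and circles, with the following asymmetric structure: the square community has tight internal connectivity (e.g., a large star), making $U_{\mathrm{sq}}$ relatively high, while the circle community is sparsely connected internally, making $U_{\mathrm{ci}}$ relatively low. Building on the style of graph used in Proposition~\ref{pro:dp-iuc}, one can tune component sizes so that with budget $\k = 2$, the solution $\A$ places one influencer at the center of the dense square component (comfortably clearing $U_{\mathrm{sq}}$) and one at a small circle component (clearing $U_{\mathrm{ci}}$ but keeping circles below squares in utility). The neighboring solution $\A'$ moves that square influencer to a better circle component: circles gain substantially, squares drop just enough to fall below $U_{\mathrm{sq}}$, while the sorted order of utilities is preserved.

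With utilities chosen so that $n_{\mathrm{ci}}(u'_{\mathrm{ci}} - u_{\mathrm{ci}}) \geq n_{\mathrm{sq}}(u_{\mathrm{sq}} - u'_{\mathrm{sq}})$, the partial-sum condition of the influence transfer principle is satisfied, and since $u'_{\mathrm{ci}} > u_{\mathrm{ci}}$, the principle demands $\wel(\bu) < \wel(\bu')$. On the other hand, $\bu'$ violates the square community's DC lower bound $U_{\mathrm{sq}}$, so $\wel_{\mathcal F}(\bu') = -\infty$ while $\wel_{\mathcal F}(\bu)$ is finite, contradicting the principle.

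The main obstacle will be engineering the graph so that all four constraints hold simultaneously with an honest choice of $p$ and $\k$: (i) $\A, \A'$ are budget-feasible and differ in exactly one influencer; (ii) $\A$ meets both DC bounds while $\A'$ violates only the square bound; (iii) the sorted ordering of communities is preserved from $\bu$ to $\bu'$; and (iv) the magnitudes satisfy the size-weighted partial-sum inequality of the transfer principle. Following the precedent in the proofs above, I would take $p = 1$ for deterministic spread to make utilities easy to compute, use a handful of disjoint star components of carefully chosen sizes inside each community, and verify (i)--(iv) by direct arithmetic. A brief remark can then note that replacing $p = 1$ with $p \in (0,1)$ and densifying the components preserves the argument.
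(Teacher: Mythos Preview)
Your approach is correct and rests on the same core idea as the paper: exploit asymmetric DC lower bounds so that the allocation preferred by the influence transfer principle is DC-infeasible (your version simply swaps the labels of $\bu$ and $\bu'$ relative to the paper's). The paper's proof is considerably shorter---it just posits utility vectors $\bu=(0.5,0.5)$, $\bu'=(0.3,0.6)$ with DC lower bounds $(0.25,0.55)$, verifies that the transfer principle prefers $\bu$ while only $\bu'$ is DC-feasible, and relies on the blanket remark following Proposition~\ref{pro:dp-iuc} that such utility vectors can be realized by some influence maximization instance---so your explicit graph construction, while not wrong, is more effort than the paper deems necessary.
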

\begin{proof}
Let $\bu = (0.5, 0.5)$ and $\bu' = (0.3, 0.6)$ denote the utility vectors of two allocations where sizes of the all the communities are the same. Suppose the lower bounds set by DC are $0.25$ and  $0.55$, respectively. This means that only $\bu'$ satisfies DC.  Based on the transfer principle, $\wel(\bu) > \wel(\bu')$, however, $\bu$ does not satisfy DC and DC strictly prefers $\bu'$ over $\bu$. 
\end{proof}
\begin{proposition}
\label{pro:lexi-trans}
MMF does not satisfy the influence transfer principle. 
\end{proposition}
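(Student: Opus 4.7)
The plan is to mimic the structure of Proposition~\ref{pro:dp-iuc} by exhibiting two utility vectors that correspond to neighboring solutions of an influence maximization instance for which the influence transfer principle demands strict preference of one over the other, while the MMF welfare reformulation treats them identically. Under the reformulation of Section~\ref{sec:fair-connection}, MMF induces $\wel_{\mathcal F}(\bu) = \sum_{\c \in \C} \n_\c \uc$ whenever $\min_{\c \in \C} \uc \geq \gamma$ and $-\infty$ otherwise, so any two feasible vectors with the same minimum and the same total weighted utility receive identical welfare.

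First, I would take three equal-sized communities (so that the weighted partial-sum condition $\sum_{\kappa \leq \c} \n_\kappa (\u'_\kappa - \u_\kappa) \geq 0$ reduces to an unweighted cumulative sum) and consider the sorted vectors $\bu = (0.2, 0.5, 0.7)$ and $\bu' = (0.2, 0.6, 0.6)$. I would verify the premises of the influence transfer principle: the cumulative differences are $0, 0.1, 0$, all non-negative, with strict positivity at the second index; the ascending order of utilities is preserved across the two vectors; and the coordinates can be matched so the same community occupies each position. Hence the principle forces $\wel(\bu) < \wel(\bu')$.

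Second, I would evaluate the MMF welfare on this pair. Both vectors share the same minimum $0.2$, so taking $\gamma = 0.2$ both vectors are feasible, and their weighted totals agree: $\n(0.2 + 0.5 + 0.7) = \n(0.2 + 0.6 + 0.6) = 1.4\n$. Consequently $\wel_{\mathcal F}(\bu) = \wel_{\mathcal F}(\bu')$, which contradicts the strict inequality required by the principle. As noted after Proposition~\ref{pro:dp-iuc}, it suffices to exhibit the witnessing utility vectors; a concrete influence maximization instance with three equal-sized communities built from star components of appropriate sizes and a single-seed transfer between two of them realizes $\bu$ and $\bu'$ as neighboring solutions.

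The main subtlety is ensuring that at least three communities are used. With only two communities the transfer principle already forces $\u'_1 \geq \u_1$ and $\u'_1 + \u'_2 \geq \u_1 + \u_2$, so either the minimum or the total strictly increases and MMF itself strictly prefers $\bu'$; no violation is possible. It is precisely the presence of a middle community that lets one shift utility between the middle and top groups without disturbing the minimum or the weighted total, which is what decouples the transfer principle's verdict from MMF's.
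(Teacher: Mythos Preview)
Your proof is correct and follows essentially the same approach as the paper: exhibit three equal-sized communities and two neighboring utility vectors with identical minimum and identical total, so that the influence transfer principle strictly prefers one while MMF is indifferent. The paper uses $\bu=(0.2,0.4,0.6)$ and $\bu'=(0.2,0.2,0.8)$ rather than your $(0.2,0.5,0.7)$ and $(0.2,0.6,0.6)$, and omits your (correct, though inessential) remark on why two communities do not suffice.
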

\begin{proof}
Consider two utility vectors $\bu = (0.2, 0.4, 0.6)$ and $\bu'=(0.2, 0.2, 0.8)$. The sizes of the all the communities are the same. A fairness notion satisfying the influence transfer principle strictly prefers $\bu$ over $\bu'$. However, Maximin is indifferent between $\bu$ and $\bu'$ as they both obtain the same worst-case utility. 
\end{proof}

\begin{proposition}
\label{pro:util-trans}
Utilitarian does not satisfy the influence transfer principle. 
\end{proposition}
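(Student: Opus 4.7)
The plan is to exhibit a small concrete instance of influence maximization together with two neighboring solutions $\A, \A' \in \A^\star$ whose utility vectors $\bu(\A), \bu(\A')$ satisfy every hypothesis of the influence transfer principle yet carry exactly the same weighted total utility. Since the utilitarian welfare $W(\bu) = \sum_{c \in \C} n_c u_c$ depends only on this total, it will be indifferent between $\bu$ and $\bu'$, whereas the principle demands the strict preference $W(\bu) < W(\bu')$. This indifference directly contradicts what the principle requires.

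The conceptual point driving the construction is that the partial-sum hypothesis $\sum_{\kappa \le c} n_\kappa (u'_\kappa - u_\kappa) \ge 0$ for all $c$ only forces the \emph{final} partial sum to be nonnegative, i.e., $\sum_c n_c u'_c \ge \sum_c n_c u_c$. Nothing prevents this final inequality from holding with equality while an earlier partial sum is strictly positive: this is exactly a redistribution from a higher-ranked community to a lower-ranked one that preserves the overall weighted sum. The utilitarian cannot see any such redistribution, so it must fail the principle.

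To instantiate this, I would take two equally sized communities $\V_1, \V_2$ with $n_1 = n_2 = n$, set $\p = 1$ for simplicity, and let $\k = 1$. I would build two disjoint star components: one centered at a vertex $s \in \V_1$ with few $\V_1$-leaves and many $\V_2$-leaves, and another centered at a vertex $s' \in \V_1$ with more $\V_1$-leaves and fewer $\V_2$-leaves, padding each community with enough isolated vertices so that $\A=\{s\}$ yields $\bu = (0.3, 0.5)$ and $\A' = \{s'\}$ yields $\bu' = (0.4, 0.4)$. These solutions are neighboring; both vectors are sorted in ascending order with community $1$ first in both, so the ordering is preserved; $u'_1 > u_1$; the first partial sum equals $n \cdot 0.1 > 0$; and the full sum equals $0$. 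All hypotheses of the influence transfer principle are met, so it demands $W(\bu) < W(\bu')$, yet $W(\bu) = n(0.3+0.5) = n(0.4+0.4) = W(\bu')$ under the utilitarian, giving the contradiction.

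The main obstacle is purely mechanical: choosing the numbers of leaves and isolated vertices so that the averages land on values that simultaneously make the last partial sum exactly zero and an earlier one strictly positive, while respecting the neighboring-solutions restriction and the preservation of the sorted order. There is no conceptual difficulty beyond a careful accounting of sizes, and any construction yielding utility pairs of the form $\bigl((t-\varepsilon)/n, (t+\varepsilon)/n\bigr)$ versus $(t/n, t/n)$ with $\varepsilon > 0$ suffices.
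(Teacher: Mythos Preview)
Your approach is essentially the paper's: exhibit two equal-size communities and a pair of utility vectors with the same weighted total but related by a transfer satisfying the principle's partial-sum hypotheses, so that the utilitarian is indifferent while the principle demands strict preference. The paper uses $\bu=(0.5,0.5)$ versus $\bu'=(0.3,0.7)$ and does not bother to build an explicit influence-maximization instance (it earlier announces that such instances always exist); your $(0.3,0.5)$ versus $(0.4,0.4)$ works by the same token.

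One small slip in your explicit construction: the paper defines \emph{neighboring} solutions as differing by one influencer transferred \emph{between communities}, but you place both star centers $s,s'$ in $\V_1$, so $\{s\}$ and $\{s'\}$ are not neighboring in that sense. This is trivially repaired by putting $s'\in\V_2$ and rebalancing the leaf counts, and since the paper's own proof skips instance construction altogether, it does not affect the validity of your argument.
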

\begin{proof}
Consider two utility vectors $\bu = (0.5, 0.5)$ and $\bu' = (0.3, 0.7)$. The sizes of the all the communities are the same. Based on the transfer principle, $\wel(\bu) > \wel(\bu')$, however, the utilitarian approach is indifferent between $\bu$ and $\bu'$ as both solutions lead to the same total utility.
\end{proof}

\subsection{Utility Gap Reduction. }
\begin{proposition}
\label{pro:dp-ugr}
\emph{DP} satisfies the utility gap reduction if and only if $\delta = 0$.
\label{lem:eqopp-richgetricher}
\end{proposition}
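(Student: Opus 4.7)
The plan is to prove the two directions separately, handling $\delta = 0$ (exact DP) as the ``if'' direction and $\delta > 0$ (approximate DP) via a small counterexample for the ``only if'' direction, following the indicator-function reformulation $W_{\mathcal F}(\bu) = \sum_c n_c u_c + \mathcal I_{\mathcal F}(\bu)$ from Section~\ref{sec:fair-connection}.

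For the \textbf{``if'' direction}, let $\delta = 0$, so the feasible set is $\mathcal F = \{\bu : u_c = u_{c'} \; \forall c,c' \in \C\}$, i.e., any feasible vector has $\Delta(\bu) = 0$. Take neighboring solutions $\A, \A'$ with utility vectors $\bu, \bu'$ that satisfy the premises of utility gap reduction: $\sum_c n_c u_c \leq \sum_c n_c u'_c$ and $\Delta(\bu) > \Delta(\bu')$. I would split on whether $\bu' \in \mathcal F$. If $\bu' \in \mathcal F$, then $\Delta(\bu') = 0$, so $\Delta(\bu) > 0$ forces $\bu \notin \mathcal F$, giving $W_{\mathcal F}(\bu) = -\infty < \sum_c n_c u'_c = W_{\mathcal F}(\bu')$, as required. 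If $\bu' \notin \mathcal F$, then $\Delta(\bu') > 0$ and therefore $\Delta(\bu) > 0$ as well, so both $\bu$ and $\bu'$ are infeasible and receive welfare $-\infty$; following the convention used elsewhere in the paper (a violation requires the welfare function to strictly prefer the ``wrong'' vector), this case produces no violation of the principle.

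For the \textbf{``only if'' direction}, I would exhibit a counterexample showing that any $\delta > 0$ breaks the principle. Consider two communities of equal size, $n_1 = n_2$, and the utility vectors
\[
\bu = (\delta,\, 0), \qquad \bu' = (\delta/2,\, \delta/2).
\]
Both satisfy $|u_c - u_{c'}| \leq \delta$, so $\bu, \bu' \in \mathcal F$. The total weighted utilities coincide: $\sum_c n_c u_c = n_1 \delta = n_1 (\delta/2) + n_2 (\delta/2) = \sum_c n_c u'_c$, so the first premise of utility gap reduction holds (with equality). Moreover, $\Delta(\bu) = \delta > 0 = \Delta(\bu')$, so the second premise holds strictly. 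The principle then demands $W_{\mathcal F}(\bu) < W_{\mathcal F}(\bu')$, but in fact $W_{\mathcal F}(\bu) = n_1 \delta = W_{\mathcal F}(\bu')$, violating the strict inequality. As in the prior propositions of this appendix, I would invoke the standing convention (made explicit after the proof of Proposition~\ref{pro:dp-iuc}) that such utility vectors can be realized as the outputs of neighboring solutions in a suitably constructed instance of influence maximization.

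The main obstacle is conceptual rather than computational: handling the infeasible regime of $\mathcal F$ under exact DP where both welfare values collapse to $-\infty$. One must interpret ``$W(\bu) < W(\bu')$'' consistently with how the paper frames violations (i.e., a counterexample needs $W(\bu) \geq W(\bu')$ with the ``wrong'' vector being strictly preferred, not merely tied at $-\infty$). The second delicate point is choosing the counterexample for the ``only if'' direction so that both vectors land inside $\mathcal F$ with equal weighted sum but distinct gaps; this is what forces $W_{\mathcal F}(\bu) = W_{\mathcal F}(\bu')$ and rules out the strict inequality demanded by the principle.
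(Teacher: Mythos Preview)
Your argument is correct. The ``if'' direction matches the paper's: both reduce to the observation that feasibility under exact DP forces $\Delta(\bu')=0$, so the premise $\Delta(\bu)>\Delta(\bu')$ makes $\bu$ infeasible and hence $W_{\mathcal F}(\bu)=-\infty$; the paper phrases this as a contradiction while you do a case split, but the content is identical. For the ``only if'' direction the paper takes a different route: it reuses the three-community instance from the proof of Proposition~\ref{pro:gap} (utility vectors $(0.3,0.7,0.8)$ and $(0.34,0.6,0.86)$ with gaps $0.5$ and $0.52$ and equal total utility), observes that for sufficiently large $\delta$ both are DP-feasible with equal welfare, and then pads each community with isolated vertices to shrink the gaps and handle small $\delta$. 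Your two-community pair $(\delta,0)$ versus $(\delta/2,\delta/2)$ is more elementary, works uniformly for every $\delta>0$ without the padding maneuver, and makes realizability as neighboring solutions immediate (two disjoint stars of size $n\delta/2$ in the first community and one in the second, $p=1$, $K=2$). The paper's approach buys reuse of an already-constructed instance; yours buys a cleaner, self-contained witness. Finally, the interpretive wrinkle you flag about ties at $-\infty$ is handled identically, and just as loosely, in the paper's own proof, which also only rules out strict reversal rather than verifying the literal strict inequality.
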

\begin{proof}
It is easy to show that if $\delta = 0$, DP satisfies the utility gap reduction principle. We can prove this by contradiction. Suppose that $\delta = 0$ and DP does not satisfy the utility gap reduction principle. From this, it follows that given two utility vectors $\bu, \bu'$ such that $\sum_{\c \in \C} \n_c \u_\c \geq \sum_{\c \in \C} \n_c \u'_\c$ if $\Delta \bu < \Delta \bu'$, DP can strictly prefer $\bu'$. If $\bu'$ is preferred, then $\bu'$ is feasible and it must be that $\Delta \bu' = 0$ and $\Delta \bu < 0$ which is not possible. Next, we show that if $\delta \neq 0$, DP does not satisfy this principle over all instances of the influence maximization problem. 

For the proof, we use the example used in the proof of Proposition~\ref{pro:gap}. In that setting there are two solutions with utility gap $0.52$ and $0.5$ with the same total utility. First, let us assume $\delta > 0.02$. In this case, since both solutions satisfy DP constraints, they are both feasible and the total utility of both solutions is equal (= 180), however, DP does not \emph{strictly} prefer the solution with smaller gap. In fact, both solutions are feasible with the same objective value and DP does not favor one solution to the other. For $\delta \leq 0.02$, we can use the same example graph (see Figure~\ref{fig:pop-sep}) and add enough isolated vertices to each community until the gap between the solutions becomes small enough to pass the $\delta$ threshold. 
\end{proof}

\begin{figure}[ht!]
\centering
\begin{tikzpicture}
[scale=0.55, red node/.style={circle,fill=gray, draw=black}, blue node/.style = {rectangle, fill = gray, draw=black}
, scale=1.0, every
edge/.style={->,-> = latex'}
]
\tikzset{scale=0.8, edge/.style = {-,> = latex'}}
\tikzset{scale=0.8, directededge/.style = {->,> = latex'}}

\node[blue node] (1) at  (1.41, 0){};
\node[blue node] (6) at  (1.41, 1.41){};
\node[blue node] (9) at  (1.41, -1.41){};
\node[red node] (10) at  (5, 0){};
\node[red node] (11) at  (3, 0){};
\node[red node] (12) at  (3.59, 1.41){};
\node[red node] (13) at  (5, 2){};
\node[red node] (14) at  (6.41, 1.41){};
\node[red node] (15) at  (7, 0){};
\node[red node] (16) at  (6.41, -1.41){};
\node[red node] (17) at  (5,-2){};
\node[red node] (18) at  (3.59, -1.41){};
\draw[edge] (10) to (11);
\draw[edge] (10) to (12);
\draw[edge] (10) to (13);
\draw[edge] (14) to (10);
\draw[edge] (15) to (10);
\draw[edge] (16) to (10);
\draw[edge] (17) to (10);
\draw[edge] (18) to (10);
\end{tikzpicture}
\caption{Companion figure to Proposition~\ref{prop:dc-population-separation}
of a graph with two communities: $\n$ black vertices and $\n/3$ white vertices
for $\n=9$. We choose $\k=4$ and arbitrary $\p < 1$. All edges are undirected, meaning that influence can spread both ways.}
\label{fig:dc-levelingdown}
\end{figure}
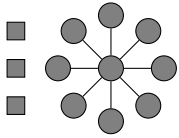

\begin{proposition}
\label{pro:dc-ugr}
\emph{DC} does not satisfy utility gap reduction principle.
\label{prop:dc-population-separation}
\end{proposition}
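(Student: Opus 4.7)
The plan is to exhibit a concrete influence maximization instance together with two neighboring solutions $\A, \A' \in \A^\star$ whose utility vectors $\bu = \bu(\A)$ and $\bu' = \bu(\A')$ satisfy three properties simultaneously: (i) $\sum_{\c\in\C} \n_\c \u_\c \leq \sum_{\c\in\C} \n_\c \u'_\c$; (ii) $\Delta(\bu) > \Delta(\bu')$, so that the utility gap reduction principle demands $\wel(\bu) < \wel(\bu')$; yet (iii) $\bu \in \mathcal F$ and $\bu' \notin \mathcal F$, where $\mathcal F$ is the DC-feasible set. Under the reformulation $\wel_{\mathcal F}(\bu) = \sum_\c \n_\c \u_\c + \mathcal I_{\mathcal F}(\bu)$ from Section~\ref{sec:fair-connection}, this gives $\wel_{\mathcal F}(\bu) > \wel_{\mathcal F}(\bu') = -\infty$, which is precisely the desired violation.

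I would use the graph in Figure~\ref{fig:dc-levelingdown}: a circle community of size $9$ consisting of a star (one centre and eight periphery) and a square community of $3$ isolated vertices, with $\k=4$. The DC per-community budgets are $\lfloor 4\cdot 9/12\rfloor = 3$ for circles and $\lfloor 4\cdot 3/12\rfloor = 1$ for squares. Assuming $\p$ is bounded away from $1$, the optimal within-circle three-seed allocation is the centre plus two periphery, giving expected influence $3+6\p$ and hence $U_{\text{circle}} = (3+6\p)/9$; trivially $U_{\text{square}} = 1/3$.

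Next I would take $\A$ to consist of the centre, two peripheral circles, and one square, so that $\A$ meets both DC lower bounds exactly and is feasible. Its neighbor $\A'$ is obtained from $\A$ by replacing one peripheral circle with a second square vertex. A short computation gives $\u_{\text{circle}}(\A') = (2+7\p)/9 < (3+6\p)/9 = U_{\text{circle}}$ for every $\p < 1$, so $\A'$ violates DC. The totals $\sum_{\c} \n_\c \u_\c$ work out to $4+6\p$ and $4+7\p$, so $\A'$ strictly dominates in total utility; the gaps are $\Delta(\bu) = 2\p/3$ and $\Delta(\bu') = (4-7\p)/9$ in the range where squares are the higher community in $\bu'$, and then $\Delta(\bu') < \Delta(\bu)$ reduces to $\p > 4/13$.

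The main obstacle is to verify that $(3+6\p)/9$ really is the maximum over all within-circle three-seed allocations: an alternative strategy (three peripheral seeds, activating the centre only indirectly) yields expected influence $3 + (1-(1-\p)^3)(1+5\p)$, which can beat $3+6\p$ once $\p$ is close enough to $1$. A one-variable comparison pins the crossover at $\p = (9-\sqrt{21})/10 \approx 0.442$, so any choice $\p \in \left(\tfrac{4}{13},\; \tfrac{9-\sqrt{21}}{10}\right)$---for instance $\p = 2/5$---simultaneously ensures that $U_{\text{circle}}$ is attained by the centre-plus-two-periphery allocation, that squares remain the higher community in $\bu'$, and that $\Delta(\bu') < \Delta(\bu)$, completing the counterexample.
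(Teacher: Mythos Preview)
Your argument is correct and shares the paper's construction: the same two-community star-plus-isolates graph with $\k=4$, and the same strategy of pitting a DC-feasible allocation against a DC-infeasible one that enjoys both higher total utility and smaller utility gap. Your DC-feasible set $\A$ (centre, two periphery, one square) is exactly the paper's solution $\bm w$.

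Where you deviate is in the DC-infeasible comparator. The paper takes $\bm u$ to be centre-plus-three-squares, which differs from $\bm w$ by \emph{two} seeds and is therefore not a neighboring solution in the sense the utility gap reduction principle is stated for. Your $\A'$ (centre, one periphery, two squares) differs from $\A$ by a single seed swap, so the neighboring hypothesis is genuinely met. You also pin down the admissible window for $\p$ precisely---including the crossover at $(9-\sqrt{21})/10$ between the centre-plus-two-periphery and three-periphery allocations that governs $U_{\text{circle}}$---whereas the paper simply asserts ``$\p<1$'' and the centre-plus-two form of the DC lower bound without checking optimality. In short, your proof is a careful tightening of the paper's argument rather than a different route.
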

\begin{proof}
Consider the network $\mathcal G$ as in Figure~\ref{fig:dc-levelingdown} consisting of two communities white and black with size $\n/3$ and $\n$, respectively. Suppose $\k = 4$ and $\p < 1$. Without DC, an optimal solution places one seed vertex at the center of the black group and allocates the remaining 3 vertices to the white group. We let $\bm u$ denote this solution. Thus, the utility of the black and white groups will be equal to $\left(1+(\n-1)\p\right)/\n$ and $9/\n$. Since there is no edge between the black and white communities, DC (See Definition~\ref{eq:dc}) reduces to how to optimally choose one seed vertex from white and the remaining 3 from the black group. After imposing DC, the utility of the black and white groups will be equal to $(3+(\n-3)\p)/\n$ and $3/\n$. We let $\bm w$ denote one such solution that satisfies DC. While $\bm u$ has a higher total utility and a smaller utility gap, DC strictly prefers $\bm w$ with higher utility gap and lower total utility. Therefore, DC does not satisfy utility gap reduction principle.
\end{proof}

\begin{proposition}
\label{pro:lexi-ugr}
MMF does not satisfy the utility gap reduction 
\label{lem:maximin-rich}
\end{proposition}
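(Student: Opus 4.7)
The plan is to witness the violation of the utility gap reduction principle by producing a pair of utility vectors $\bu$ and $\bu'$ that satisfy the hypothesis of the principle (strictly smaller gap for $\bu'$ and total utility no smaller than that of $\bu$) while MMF strictly prefers $\bu$ over $\bu'$. Concretely, I would work with three equal-sized communities and try vectors such as $\bu = (0.5,\,0.5,\,1.0)$ and $\bu' = (0.4,\,0.8,\,0.8)$. Both have identical total utility, yet $\Delta(\bu) = 0.5 > 0.4 = \Delta(\bu')$, so the utility gap reduction principle requires $\wel(\bu) < \wel(\bu')$. On the other hand, the minimum entry of $\bu$ is $0.5$ while that of $\bu'$ is $0.4$, so MMF strictly prefers $\bu$, i.e.\ $\wel_{\mathcal F}(\bu) > \wel_{\mathcal F}(\bu')$ under the characteristic-function reformulation introduced in Section~\ref{sec:fair-connection}. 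This directly contradicts the principle.

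A quick sanity check shows that two communities cannot support such a counterexample: if $\u_1 \le \u_2$ and $\u'_1 \le \u'_2$, then requiring $\u_1 > \u'_1$ together with $\u_1 + \u_2 \le \u'_1 + \u'_2$ forces $\u_2 - \u'_2 \le \u'_1 - \u_1 < 0$, which is incompatible with $\u_2 - \u_1 > \u'_2 - \u'_1$. Thus three communities are the minimal witness, and the chosen vectors above are about as simple as possible.

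The one subtlety is realizability: we must argue that $\bu$ and $\bu'$ arise from a pair of neighboring budget-feasible solutions of some influence maximization instance, so that the principle is genuinely applicable. Following the pattern used in Proposition~\ref{pro:dp-iuc}, the approach is to build each community out of a few disjoint star components together with isolated vertices, and to choose the component sizes so that shifting a single influencer from one star to another produces precisely the targeted change in average utilities. Since earlier MMF-style arguments in the appendix (Propositions~\ref{pro:lexi-iuc} and~\ref{pro:lexi-trans}) handle realizability implicitly by just exhibiting utility vectors, the cleanest writeup is to do the same here and point to the template from Proposition~\ref{pro:dp-iuc}. The main obstacle, such as it is, is just this bookkeeping — the algebra of the counterexample itself is immediate from the three arithmetic checks above.
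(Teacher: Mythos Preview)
Your proposal is correct and uses the same underlying idea as the paper—a three-community counterexample where MMF prefers the vector with larger gap—but the presentations differ. The paper builds an explicit influence-maximization instance (two star components, one mixed, plus isolated vertices) with $\k=1$ and $\p>3/4$, computes the utilities of the two candidate seeds directly, and verifies that the leximin-preferred seed has \emph{strictly lower} total utility and \emph{strictly larger} gap than the alternative. Your argument instead works at the level of utility vectors (equal total utility, gaps $0.5$ vs.\ $0.4$) and defers realizability to the star-plus-singletons template, which is exactly the style the paper itself uses for Propositions~\ref{pro:lexi-iuc} and~\ref{pro:lexi-trans}. The explicit construction in the paper buys an unambiguous value of the maximin threshold $\gamma$, so the strict preference $\wel_{\mathcal F}(\bu)>\wel_{\mathcal F}(\bu')$ is immediate; in your approach you should make sure, when you realize the instance, that $\bu=(0.5,0.5,1.0)$ actually achieves the maximin (i.e., no feasible allocation has minimum utility above $0.5$), since otherwise both vectors are MMF-infeasible and the comparison is vacuous—though even then the principle's \emph{strict} conclusion $\wel(\bu)<\wel(\bu')$ still fails. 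Your two-community impossibility check is a nice addition not present in the paper.
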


\begin{proof}
We prove the statement via the example in Figure~\ref{fig:maximin-richgetricher} which depicts a network with three groups: blue, black and white. We fix $\k = 1$ and $\p > 3/4$. 
The graph corresponds to the case where $\p=1$ but the example will hold for arbitrary $\p$ by setting the number of isolated green vertices to be
$\lceil 21/\p\rceil$. 
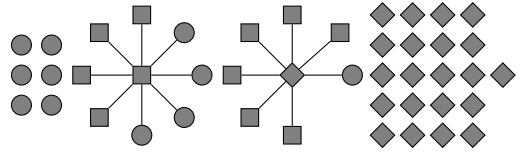
\begin{figure}[ht!]
\centering
\begin{tikzpicture}
[scale=0.5, red node/.style={circle,fill=gray, draw=black, scale = 0.8}, blue node/.style = {rectangle, fill = gray, draw=black}
, scale=1.0, every
edge/.style={->,> = latex'}
]
\tikzset{scale=0.8, edge/.style = {-,> = latex'}}
\tikzset{grey node/.style = {diamond, fill = gray, draw=black, scale=0.7}}

\node[blue node] (1) at  (0, 0){};
\node[red node] (6) at  (2, 0){};
\node[blue node] (2) at  (-2, 0){};
\node[blue node] (4) at  (0, 2){};
\node[red node] (8) at  (0,-2){};
\node[blue node] (3) at  (-1.41, 1.41){};
\node[red node] (5) at  (1.41, 1.41){};
\node[red node] (7) at  (1.41, -1.41){};
\node[blue node] (9) at  (-1.41, -1.41){};

\node[grey node] (11) at  (5, 0){};
\node[red node] (16) at  (7, 0){};
\node[blue node] (12) at  (3, 0){};
\node[blue node] (14) at  (5, 2){};
\node[blue node] (18) at  (5,-2){};
\node[blue node] (13) at  (3.59, 1.41){};
\node[blue node] (15) at  (6.59, 1.41){};
\node[blue node] (19) at  (3.59, -1.41){};

\node[grey node] (20) at (8,0){};
\node[grey node] (21) at (8,1){};
\node[grey node] (22) at (8,2){};
\node[grey node] (23) at (8,-1){};
\node[grey node] (24) at (8,-2){};
\node[grey node] (25) at (9,0){};
\node[grey node] (26) at (9,1){};
\node[grey node] (27) at (9,2){};
\node[grey node] (28) at (9,-1){};
\node[grey node] (29) at (9,-2){};
\node[grey node] (40) at (10,0){};
\node[grey node] (41) at (10,1){};
\node[grey node] (42) at (10,2){};
\node[grey node] (43) at (10,-1){};
\node[grey node] (44) at (10,-2){};
\node[grey node] (45) at (11,0){};
\node[grey node] (46) at (11,1){};
\node[grey node] (47) at (11,2){};
\node[grey node] (48) at (11,-1){};
\node[grey node] (49) at (11,-2){};
\node[grey node] (50) at (12,0){};

\node[red node] (30) at (-3,0){};
\node[red node] (31) at (-3,1){};
\node[red node] (33) at (-3,-1){};
\node[red node] (34) at (-4,-0){};
\node[red node] (35) at (-4,1){};
\node[red node] (36) at (-4,-1){};

\draw[edge] (2) to (1);
\draw[edge] (1) to (3);
\draw[edge] (1) to (4);
\draw[edge] (1) to (5);
\draw[edge] (1) to (6);
\draw[edge] (1) to (7);
\draw[edge] (1) to (8);
\draw[edge] (1) to (9);

\draw[edge] (12) to (11);
\draw[edge] (11) to (13);
\draw[edge] (11) to (14);
\draw[edge] (11) to (15);
\draw[edge] (11) to (16);
\draw[edge] (11) to (18);
\draw[edge] (11) to (19);
\end{tikzpicture}
\caption{Companion figure to Proposition~\ref{lem:maximin-rich} for the case of $p=1$.
The network consists of three groups: white, blue and black. The edges are undirected so the influence can spread both ways. For arbitrary $p$, the number of isolated black vertices should scale to $\lceil 21/\p\rceil$.}
\label{fig:maximin-richgetricher}
\end{figure}

Consider one solution that targets the center of the bigger star component. Thus, the utilities of blue, black and white will be $(1+4\p)/11$, $4\p/11$ and $0$, respectively. This results in utility gap equal to $(1+4\p)/11$. By imposing leximin, the optimal fair solution selects the center of the smaller star component and the optimal fair utilities of blue, black and white will be $6\p/11$, $\p/11$ and $1/(1+\lceil 21/\p\rceil)$, respectively and we observe a utility gap~$
{6\p}/{11}-{1}/({1+\lceil {21}/{\p}\rceil}) \geq {6\p}/{11} - {1}/{22} > ({1+4\p})/{11},
$ where we used $\p>3/4$ in the last inequality. In conclusion,  while the first solution has a higher total utility (= 9) and lower utility gap compared to the second solution (total utility = 8), leximin still strictly prefers the second solution. This concludes the proof.
\end{proof}

\begin{proposition}
\label{pro:util-ugr}
Utilitarian does not satisfy the utility gap reduction 
\end{proposition}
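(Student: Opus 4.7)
The plan is to show that the utilitarian welfare function $W(\bu) = \sum_{c \in C} n_c u_c$ fails the utility gap reduction principle by exhibiting two neighboring solutions $\A, \A' \in \A^{\star}$ whose corresponding utility vectors $\bu, \bu'$ satisfy the hypothesis of the principle (equal or greater total utility for $\bu'$ and a strictly smaller utility gap) yet receive the same utilitarian welfare. Since the principle demands $W(\bu) < W(\bu')$, indifference under the utilitarian rule is already a violation.

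First, I would reduce the task to a utility-vector construction: pick two vectors $\bu, \bu'$ with identical total $\sum_c n_c u_c$ but with $\Delta(\bu) > \Delta(\bu')$. The simplest choice is two communities of equal size with $\bu = (0.3, 0.7)$ and $\bu' = (0.5, 0.5)$, for which the total utility is the same and $\Delta(\bu) = 0.4 > 0 = \Delta(\bu')$. By definition of the utilitarian welfare we get $W(\bu) = W(\bu')$, so $W(\bu) < W(\bu')$ fails while the hypothesis of utility gap reduction holds.

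Next, I would realize these utility vectors as neighboring influence-maximization solutions on a concrete instance, mirroring the style of Propositions~\ref{pro:gap} and~\ref{prop:dc-population-separation}. A clean construction uses two disconnected communities of equal size $\n$, each consisting of two star components (of carefully chosen sizes) plus isolated vertices, with $\k=2$ and deterministic spread $\p = 1$; transferring a single seed between the two star-centers within one community while fixing the seed choice in the other community yields a neighboring pair with the required utility vectors. Because the communities are disconnected, transferring a seed inside one community leaves the other community's utility unchanged, so the numbers can be engineered exactly.

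The only mild obstacle is picking sizes of the star components so that the two solutions have identical total utility but different gaps; this is just arithmetic. I would then conclude: the two solutions are neighboring, the hypothesis of utility gap reduction is satisfied, but $W(\bu) = W(\bu')$ under the utilitarian rule, so strict preference for $\bu'$ fails. Hence the utilitarian objective does not satisfy the utility gap reduction principle.
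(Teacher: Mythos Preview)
Your approach is essentially the paper's: exhibit two utility vectors with equal total utility but different gaps, so the utilitarian welfare is indifferent while utility gap reduction demands a strict preference. The paper uses three equal-sized communities with $\bu=(0.5,0.5,0.5)$ and $\bu'=(0.2,0.5,0.8)$; your two-community version with $(0.3,0.7)$ and $(0.5,0.5)$ is equally valid at the level of utility vectors.

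One small slip in your instance sketch. With $\k=2$ and disconnected communities, a \emph{neighboring} transfer by definition moves one seed \emph{between} communities; the community that loses its only seed then drops to utility $0$, so $(0.3,0.7)$ and $(0.5,0.5)$ cannot both arise this way. Your phrase ``transferring a single seed between the two star-centers within one community while fixing the seed choice in the other'' is also inconsistent with the target, since fixing the seed in community~2 leaves $u_2$ unchanged and you need $0.7\to 0.5$. The fix is immediate: take $\k=3$, give community~1 stars of sizes $0.3n$ and $0.2n$ and community~2 stars of sizes $0.5n$ and $0.2n$ (plus isolated vertices); moving one seed from the $0.2n$-star in community~2 to the $0.2n$-star in community~1 turns $(0.3,0.7)$ into $(0.5,0.5)$ and is a bona fide neighboring pair. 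The paper itself does not spell out a witnessing instance, relying instead on its earlier blanket remark that such instances always exist.
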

\begin{proof}
Consider an instance of the influence maximization problem where all the communities are of size 10. Let $\bu = (0.5, 0.5, 0.5)$ and $\bu' = (0.2, 0.5, 0.8)$. Both $\bu $ and $\bu'$ achieve the same total utility (=15). Thus, the utilitarian approach is indifferent between $\bu$ and $\bu'$. According to the utility gap reduction $\bu$ is strictly preferred. We note that in this special instance, both influence transfer principle and utility gap reduction apply and according to both principles $\bu$ is strictly preferred to $\bu'.$
\end{proof}

\section{Omitted Details from Section~\ref{sec:exp}}
\label{sec:omitted-exp}

\subsection{Estimating the SBM Parameters for Landslide Risk Management Application}
\label{sec:exp-sbm-estimate}
In order to qualitatively and formatively describe the network structure, the research team conducted several in-person semi-structured interviews in Sitka, Alaska from 2018-2020. These interviews were conducted with individuals who were identified as “community leaders” or ``popular community members'' through word-of-mouth, and then subsequently through respondent-driven sampling, a broader range of community members were interviewed (n=14). In these semi-structured interviews, respondents were asked to 1) sort and describe community groups and 2) identify “cliques” and ``isolates'' as they relate to an early landslide warning system. The former resulted in developing, to the extent possible, discrete a priori community groups. The latter helped to inform the relationships between and within these groups. The interviewer took notes which listed the responses and through a tallying and pile sorting exercise, attempted to seek consensus in definitions of community groups.
The formative research resulted in cliques based on occupation, political affiliation, age, and local recreational activities. Many cliques were overlapping with shared attributes (e.g. people from two different occupations share a political affiliation and frequent the same local pub), however for the purposes of this formative exercise, these community groups were qualitatively coerced into discrete classifications.  These resulted in 16 community groups that include political affiliation, time spent in Sitka (e.g. new arrival or tourist vs. long-term resident), occupation, and whether or not a parent of a child in the public-school system. The community size estimates were developed based on a 2018 Sitka Economic Development Survey, particularly for the occupation-based community groups, as well as publicly available voter records for political affiliations. Several attributes, namely age, specific occupation, time spent in Sitka, and parental status were unavailable in existing datasets, and therefore required the use of proxies and assumptions for estimating community group sizes.
Once the community group sizes were estimated, based on the formative research notes on social cohesion, cliques, and isolates, we further developed assumptions on within and between-community connectedness. For example, if a respondent suggested that there may be very close relationships between two cliques, we assumed a higher relative p(b) than between two cliques which had less similar attributes. For simplicity, we limited the absolute probabilities for withing-community and between-community probabilities between 0.00 and 0.10. We then sense-checked these absolute probabilities with several of the initial formative research respondents. These absolute probabilities were then organized into a $16\times 16$ adjacency matrix to facilitate simulations for influence maximization.

\subsection{Relative Community Sizes}
\label{sec:exp-relative}
\begin{figure}[ht!]
\centering
\begin{subfigure}
\centering
\includegraphics[width = 0.33\textwidth]{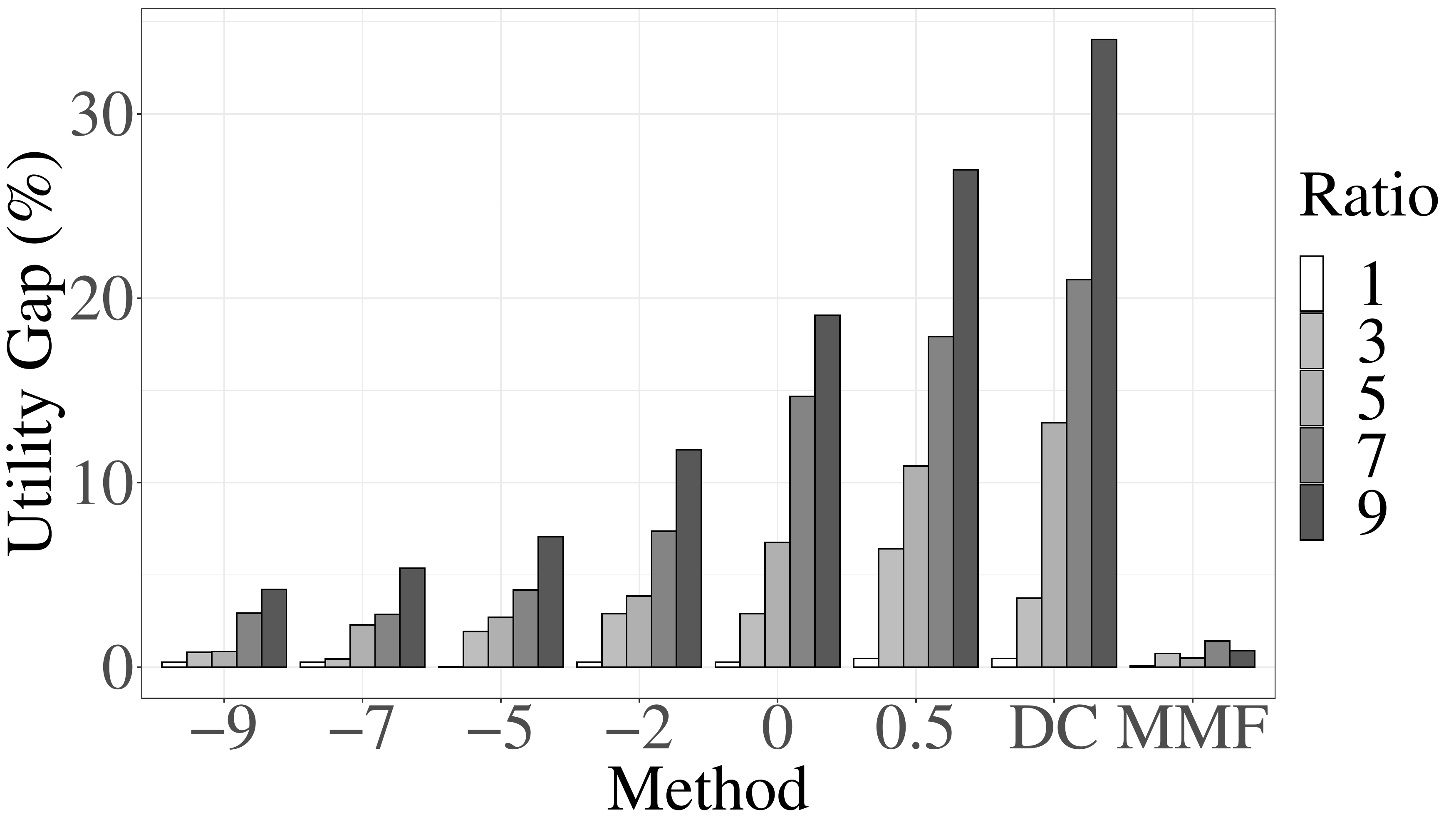}
\end{subfigure}%
\begin{subfigure}
\centering
\includegraphics[width = 0.32\textwidth]{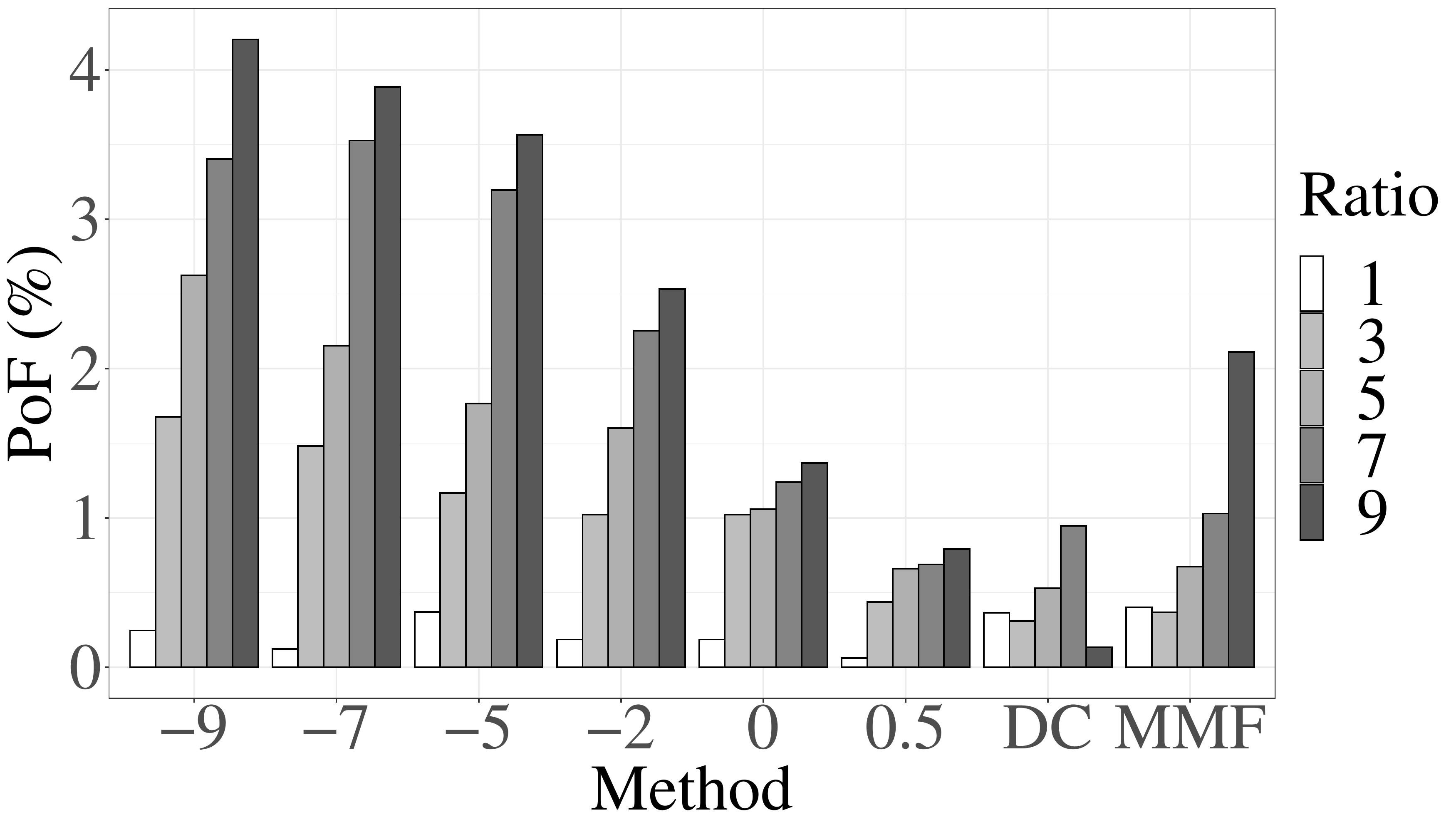}
\end{subfigure}
\caption{Utility gap and PoF for various relative community sizes where the ratio changes from 1 to 9. }
\label{fig:synthetic-all-size}
\end{figure}

In this section we study the effect of relative community size on both utility gap and efficiency.
We consider synthetic samples of SBM network consisting of two communities each of size 100 and we gradually increase the size of one community from 100 to 900 in order to study its effect on the utility values of each community. We set $q_\c = 0.005$ and $q_{\c\c'} = 0.001$. Results are summarized in Figure~\ref{fig:synthetic-all-size}. 
This result indicates that the utility gap increases with the relative community size, suggesting that minorities can be adversely affected without appropriate fairness measures in place. We also note that the strength of our approach is in its flexibility to trade-off fairness with efficiency. We may encounter scenarios where the fairness-efficiency trade-offs are mild (as in the particular setting of Figure~\ref{fig:synthetic-all-size}), but this does not undermine our approach as there are many practical situations (as discussed in real-world applications in the paper) where clearly this is not the case and our approach can handle all those cases effectively. DC exhibits relatively high utility gap. This is because by definition DC allocates more resources to communities that ``will do better with the resources'' and it does not always show an aversion to inequality, a result which we show theoretically in Section~\ref{sec:fair-connection}.

\subsection{Suicide Prevention Application}
\label{sec:exp-suicide}
\begin{table*}[ht!] 
  \centering
  \small
  \begin{tabular}{cccccccc}
    \toprule
    Network Name & $\#$ of Vertices & $\#$ of Edges& White & Black & Hispanic & Mixed Race & Other 
    \\
    \hline
    W1MFP & 219 & 217 & 16.4 & 41.5 & 20.5 & 16.4 & 5.0 
    \\
    W2MFP & 243 & 214 & 16.8 & 36.6  & 21.8 & 22.2 & 2.4  
    \\
    W3MFP & 296 & 326 & 22.6 & 34.4 & 15.2 & 22.9 & 4.7
    \\    
    W2SPY & 133 & 225 & 55.6 & 10.5  & -- & 22.5 & 11.3
    \\
    W3SPY & 144 & 227 & 63.0 & -- & -- & 16.0  &  20.0 
    \\
    W4SPY & 124 & 111 & 54.0 & 16.1 & -- & 14.5 & 15.3 
    \\
    \toprule
    \end{tabular}
  \caption{Racial composition (\%) after pre-processing as well as the number of vertices and edges of the social networks~\cite{barman2016sociometric}.}
  \label{table:racial_composition}
\end{table*}

\begin{table*}[ht!] 
  \centering
  \small
  \begin{tabular}{cccccccccc}
    \toprule
    \multirow{3}{*}{Measure (\%)} & \multicolumn{9}{c}{Fairness Approach}
    \\
    \cmidrule(lr){3-9}\morecmidrules\cmidrule(lr){3-9}
    & $\k$ & $\alpha = -5$ & $\alpha = -2$ & $\alpha = 0$ & $\alpha = 0.5$ & $\alpha = 0.9$ & DC & Maximin & IM
    \\
    \hline 
    \hline
    \multirow{6}{*}{utility gap} & 5 & 4.8 & 7.5 & 8.5 & 9.7 & 11.4 & 8.2 & \textbf{3.5} & 12.5
    \\
    & 10 & 4.6 & 6.6 & 7.3 & 9.5 & 12.9 & 6.9 & \textbf{2.0} & 11.7
    \\
    & 15 & 3.6 & 5.2  & 5.9 & 8.9 & 13.5 & 7.6 & \textbf{2.4} & 15.3
    \\
    & 20 & 3.6 & 4.4 & 5.9 & 7.3 & 14.0 & 5.8 & \textbf{2.3} & 17.2
    \\
    & 25 & 2.6 & 3.5  & 4.6 & 5.8 & 13.2 & 7.0 & \textbf{2.0} & 16.6
    \\
    & 30 & 2.4 & 3.2 & 4.3 & 6.4 & 8.6  & 8.2 & \textbf{2.0} & 15.7
    \\
    \hline
    \hline 
    \multirow{5}{*}{PoF} & 5 & 6.9 & 5.1 & 3.4 & 1.5 & \textbf{0.7} & 14.4 & 16.6 & 0.0
    \\
    & 10 & 6.6 & 3.8 & 2.8 & 0.7 & \textbf{0.1} & 14.3 & 13.1 & 0.0
    \\
    & 15 & 3.8 & 2.5 & 1.6 & 1.1 & \textbf{0.1} & 14.6 & 10.5 & 0.0
    \\
    & 20 & 4.6 & 3.8 & 2.9  & 2.0 & \textbf{1.0} & 13.9 & 10.5 & 0.0
    \\
    & 25 & 4.0 & 3.2 & 2.5 & 1.9 & \textbf{1.0} & 13.4 & 9.9 & 0.0
    \\
    & 30 & 3.9 & 3.4 & 2.9 & 2.3 & \textbf{1.8} & 12.7 & 10.8 & 0.0
    \\
    \bottomrule
    \end{tabular}
  \caption{Summary of the utility gap and PoF results averaged over 6 different real world social networks for various budget, fairness approaches and baselines. Numbers in bold highlight the best values in each setting (row) across different approaches.}
  \label{table:comparison_main}
  \vspace*{-\baselineskip} 
\end{table*}

Influence maximization has been previously implemented for health promoting interventions among homeless youth~\cite{YadavCXXRT16,wilder2020clinical}. In this section, we consider the problem of training a set of individuals who can share information to help prevent suicide (e.g., how to identify warning signs of suicide among others). We present simulation results over six different social networks of homeless youth from a major city in US as described in~\citet{barman2016sociometric}. We provide aggregate summaries of these networks (e.g., size, edge density and community statistics) in Table~\ref{table:racial_composition}. The data set consists of six different social networks of homeless youth from a major city in US as described in detail in~\citet{barman2016sociometric}. 
Each social network consists of 7 racial groups, namely, \emph{Black or African American, Latino or Hispanic, White,  American Indian or Alaska Native, Asian, Native Hawaiian or Other Pacific Islander and Mixed race}.  Each individual belongs to a single racial group. We use these partitioning by race to define our communities. However, to avoid misinterpretation of the results, we combine racial groups with a population $<10\%$ of the network size~$\n$ under the ``Other'' category. After this pre-processing step, each dataset will contain 3 to 5 communities. Results are summarized in Table~\ref{table:racial_composition}. We remark that the absent of a racial category in a given network is due to their small sizes and hence being merged into the ``Other" category after pre-processing (e.g., Hispanic in network W2SPY.)

We compare our welfare-based framework for different values of $\alpha$ against DC, MMF and influence maximization without fairness considerations (IM).
Table~\ref{table:comparison_main} provides a summary of the results averaged over all network instances where the numbers in bold highlight the best values (minimum utility gap and PoF) for each budget and across different fairness approaches. As seen, IM typically has a large utility gap (up to 17.2\% for $\k = 20$ which is significant because the total influence is only 28.40\%). By imposing fairness we can reduce this gap. In fact, we observe that across different values of $\alpha$ ranging from -5 to 0.5, there is a decreasing trend in utility gap, where
for $\k = 20$ and with $\alpha = -5$, we are able to decrease the utility gap by $3.6\%$.  Consistent with previous results on SBM networks, both MMF and $\alpha = -5$ exhibit very low utility gaps, however, MMF results in higher PoF. Furthermore, across the range of $\alpha$ we observe a mild trade-off between fairness and utility. This shows that in these networks enforcing fairness comes at a low cost, though as we see in the landslide setting, this is not always the case. 

\begin{figure}[ht!]
\centering
\begin{subfigure}
\centering
\includegraphics[width = 0.49\textwidth]{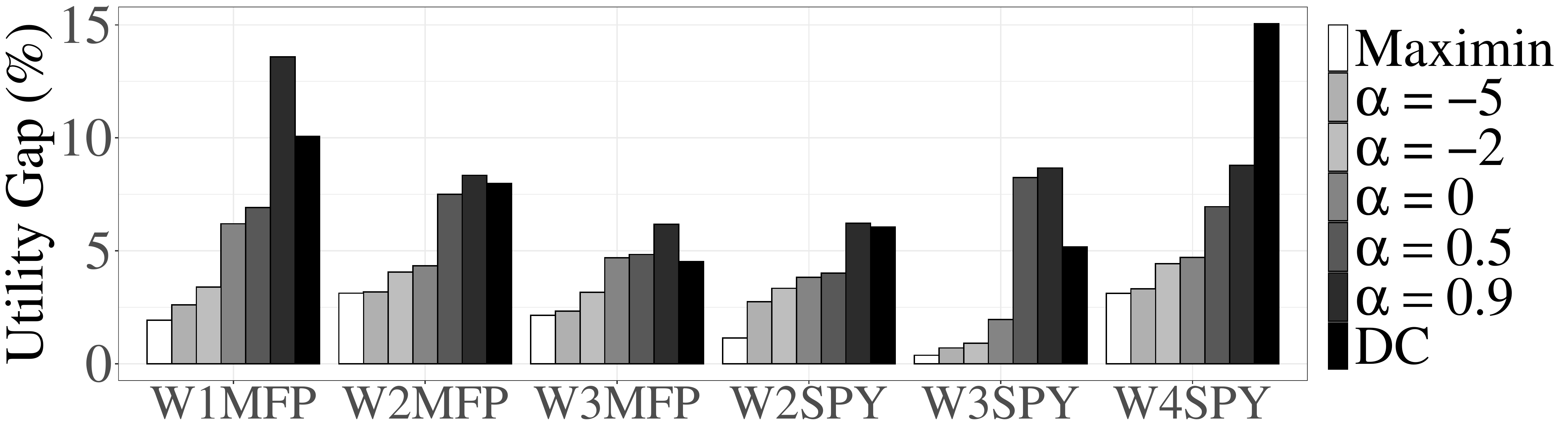}
\end{subfigure}
\begin{subfigure}
\centering
\includegraphics[width = 0.49\textwidth]{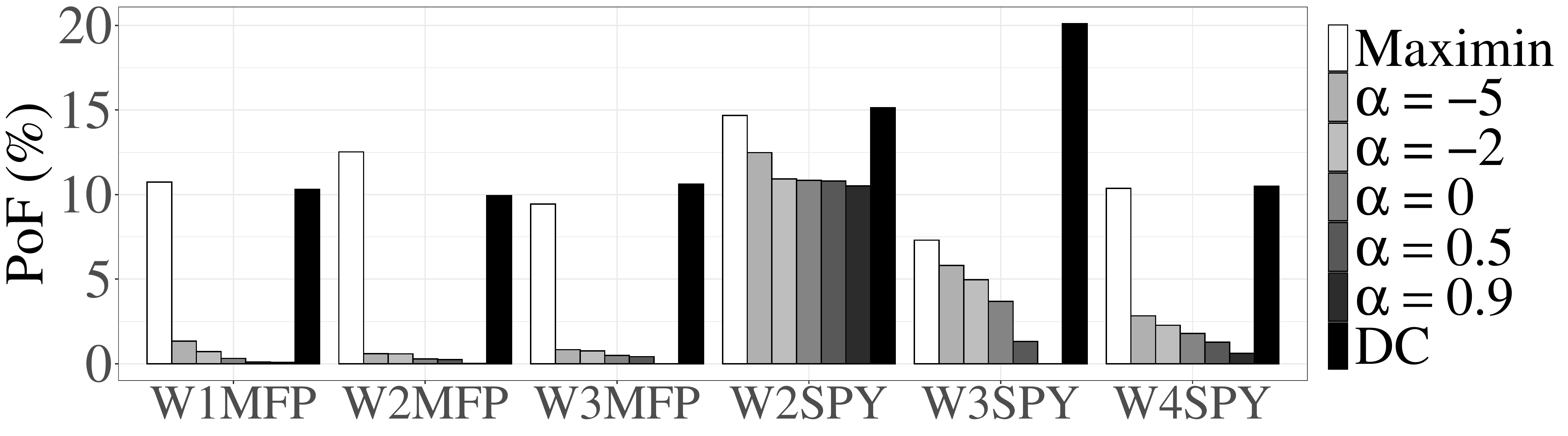}
\end{subfigure}
\caption{Top and bottom panels: utility gap and PoF for each real world network instances ($\k =30$).}
\label{fig:real-all}
\end{figure}

Figure~\ref{fig:real-all} shows the results for each network separately (X axis) for a fixed budget $\k=30$. Figure~\ref{fig:real-all} shows that the trade-offs can also be very network-dependent (compare e.g. W2SPY and W3MFP). This highlights the crucial need for a flexible framework that can be easily adjusted to meaningfully compare these trade-offs. 

\end{document}